\DeclarePairedDelimiter\floor{\lfloor}{\rfloor}
\newtheorem{theorem}{Theorem}
\newtheorem{corollary}{Corollary}[theorem]
\newtheorem{lemma}{Lemma}
\begin{document}

\title{Lightweight Collaborative Anomaly Detection \\for the IoT using Blockchain}

\author{Yisroel Mirsky$^{1,2}$, Tomer Golomb$^{2}$ and Yuval Elovici$^{2}$\\
$^{1}$Georgia Institute of Technology, Atlanta, USA\\
$^{2}$Ben-Gurion University, Beer Sheva, Israel\\
\vspace{.5em}
yisroel@gatech.edu, golombt@post.bgu.ac.il, elovici@bgu.ac.il\\
\vspace{1em}
\small
Preprint of accepted publication, June 2020: \\Journal of Parallel and Distributed Computing, Elsevier, ISSN: 0743-7315  
}

\date{}
\maketitle

\begin{abstract}

Due to their rapid growth and deployment, the Internet of things (IoT) have become a central aspect of our daily lives. 
Unfortunately, IoT devices tend to have many vulnerabilities which can be exploited by an attacker. Unsupervised techniques, such as anomaly detection, can be used to secure these devices in a plug-and-protect manner. 
However, anomaly detection models must be trained for a long time in order to capture all benign behaviors. Furthermore, the anomaly detection model is vulnerable to adversarial attacks since, during the training phase, all observations are assumed to be benign.
In this paper, we propose (1) a novel approach for anomaly detection and (2) a lightweight framework that utilizes the blockchain to ensemble an anomaly detection model in a distributed environment.
Blockchain framework incrementally updates a trusted anomaly detection model via self-attestation and consensus among the IoT devices. 
We evaluate our method on a distributed IoT simulation platform, which consists of 48 Raspberry Pis. The simulation demonstrates how the approach can enhance the security of each device and the security of the network as a whole.
	
\end{abstract}

\begin{keywords}
 ~IoT Security, Markov-Chain, Anomaly Detection, Distributed Systems, Blockchain, Collaborative Security.
\end{keywords}

\section{Introduction}
The Internet of Things (IoT) is the next evolution of the Internet \cite{khan2012future} where devices, of any kind and size, will exchange and share data autonomously among themselves. 
By exchanging data, each device can improve their decision-making processes. IoT devices are ubiquitous in our daily lives and critical infrastructure. For example, air conditioners, irrigation systems, refrigerators, and railway sensors \cite{rail} have been connected to the Internet in order to provide services and share information with the relevant controllers. Due to the benefits of connecting devices to the Internet, massive quantities of IoT devices have been developed and deployed. This has led leading experts  to believe that by 2020 there will be more than 20 billion devices connected to the Internet \cite{eddy2015gartner}.

While the potential for IoT devices is vast, their success depends on how well we can secure these devices. 
However, IoTs are diverse and have limited resources. Therefore, securing them is a difficult challenge which has taken a central stage in both industry and academia.
One significant security concern with IoTs is that many manufactures do not invest in the security of these devices during their development. Furthermore, discovered vulnerabilities are seldom patched by the manufacture \cite{notPatchingIOTs}. These vulnerabilities enable attackers to exploit the IoT devices for nefarious purposes \cite{schneier2014internet} which endanger the users' security and privacy.

There are various security tools for detecting attacks on embedded devices. One such tool is an intrusion detection system (IDS). An anomaly-based IDSs learn the normal behavior of a network or host, and detect when the behavior deviates from the norm. In this way, these systems have the potential to detect new threats without being explicit programmed to do so (e.g., via remote updates). Aside from being able to detect novel `zero-day' attacks, this approach is desirable because there is vertically no maintenance required.

In order to prepare an anomaly-based IDS (or any anomaly detection model), the system must collect and learn from \textit{normal} observations acquired during a time-limited ``training phase''. A fundamental assumption is that the observations obtained during the training phase are both benign and capture all of the device's possible behaviors.
This assumption might hold true in some systems. However, when considering the IoT environment, this assumption is challenging for the following reasons:

\begin{enumerate}

	\item  \textbf{Model Generality} It is possible to train the anomaly detection model safely in a lab environment. However, it is difficult to simulate all the possible deployments and interactions with the device. This is because some logic may be dependent on one or more environmental sensors, human interaction, and event based triggers. This approach is also costly and required additional resources. Alternatively, the model can be trained on-site during the deployment itself. However, the model will not be available for execution (detection of threats) until the training phase is complete. Furthermore, it is questionable whether the trained model will capture benign yet rare behaviors. For example, the behavior of the motion detection logic of a smart camera or the response generated by a smoke detector while sensing a fire. These rare but legitimate behaviors will generate false alarms during regular execution.
	
	\item  \textbf{Adversarial Attacks} 
	Although training on-site is a more natural approach to learning the normal behavior of an IoT device, the model must assume that all observation during the training-phase are benign. This approach exposes the model to malicious observations, thus enabling an attacker to exploit the device to evade detection or cause some other adverse effect.
\end{enumerate}

To overcome these challenges, the IoT devices can collaborate and train an anomaly detection model together. Consider the following scenario:

Assume that all IoT devices of the same type simultaneously begin training an anomaly detection model, based on their own locally observed behaviors. 
The devices then share their models with other devices of the same type. Finally, each device merges the received models, into a single model by filtering out potentially malicious behaviors. Finally, each device uses the combined model as it's own local anomaly detection model. As a result, the devices (1) collectively obtain an anomaly detection model which captures a much wider scope of all possible benign behaviors, and (2) are able to significantly limit adversarial attacks during the training phase. The latter point is because the \textit{initial} training phase is much shorter (scaled according to the number of devices), and rare behaviors unseen by the majority are filtered out.

Using concept, we present a lightweight, scalable framework which utilizes the blockchain concept to perform distributed and collaborative anomaly detection on devices with limited resources.

A blockchain is an innovative protocol for a distributed database, which is implemented as a chain of blocks and managed by the majority of participants in the network \cite{swan2015blockchain}.
Each block contains a list of records and a hash value of the previous block and is accepted into the chain if it satisfies a specific criteria (e.g., bitcoin's proof-of-work criterion \cite{nakamoto2008bitcoin}).
The framework uses the blockchain's concept to define a collaboration protocol which enables devices to autonomously train a trusted anomaly detection model incrementally. The protocol uses self-attestation and consensus among the IoT devices to protect the integrity of the trained model. In our blockchain, a record in a block is a model trained on a specific device, and a block in the chain represents a potential anomaly detection model which has been verified by a significant mass/majority of devices in the system. By using the blockchain as a secured distributed ledger, we ensure that the devices (1) are using the latest validated anomaly detection model, and (2) can continuously contribute to each other's model with newly observed benign behaviors. 

Furthermore, in this paper we also propose a novel approach for performing anomaly detection on a local device using an Extensible Markov Model (EMM) \cite{bhat2008extended}. The EMM tracks a program's jump sequences between regions on the application's memory space. The EMM can be incrementally updated and merged with other models, and therefore can be trained with real-world observations across multiple devices in parallel. Although there are many other methods for modeling sequences, we chose the EMM model because:
\begin{enumerate}
	\item The update and prediction procedures have a complexity of $O(1)$. This is critical considering that many IoT devices have weak processors. 
	\item Our collaborative framework requires a model which can be merged with other models efficiently. Moreover, to filter out malicious transitions during the combine step, we needed an efficient and clear algorithm for comparing learned behaviors between different models. The process of comparing and combining other discreet transitional anomaly detection models can be complex or simply has not been defined. 
	\item In our evaluations, we found that the EMM performs better than other algorithms in our anomaly detection task.
\end{enumerate}

We evaluate both the framework and the anomaly detection model on our own IoT emulation platform, involving 48 Raspberry Pis.
We simulate several different IoT devices to assert that the evaluation results do not depend on the IoT device's functionality. 
Moreover, we exploit real vulnerabilities in order to evaluate our method's capability in detecting actual attacks.
From our evaluations, we found that our method is capable in creating strong anomaly detection models in a short period of time, which are resistant to adversarial attacks.
To encourage further research and development, the reader may download our data sets and source code from GitHub.\footnote{\texttt{https://git.io/vAIvd}.} We have also \href{https://drive.google.com/drive/folders/15gLytEJyQyYCmhB-EZSkES77KsuCW0hw?usp=sharing}{published a blockchain simulator for our protocol} to help the reader understand and implement the work in this paper.\footnote{\texttt{https://github.com/ymirsky/CIoTA-Sim}} 

In summary, this paper's contributions are:
\begin{itemize}
	\item \textbf{A method for detecting code execution attacks by modeling memory jumps sequences} - We define and evaluate a novel approach to efficiently detect abnormal control-flows at a set granularity. The approach is efficient in because we track the program counter's flow between regions of memory, and not actual memory addresses or system calls. As a result, the model is compact (has relatively few states) and is suitable for devices with limited resources (IoT devices).
	
	\item \textbf{A method for enabling safe distributed and collaborative model training on IoTs} - We outline a novel framework and protocol which uses the concept of blockchain to collaboratively train an anomaly detection model. The method is decentralized, reduces train time, false positives, and is robust against potential adversarial attacks during the initial training phase.
\end{itemize}

The rest of the paper is organized as follows. 
In Section \ref{sec:relworks}, we review related work, and discuss how the proposed method overcomes their limitations.
In Sections \ref{sec:anom} and \ref{sec:ciota}, we present introduce our novel host-based anomaly detection algorithm and the framework for applying the algorithm in the collaborative distributed setting using the blockchain.
In Section \ref{sec:eval}, we evaluate the proposed method on several different applications and use-cases, and discuss our insights. In section \ref{sec:security} we analyze the framework's security.
In Section \ref{sec:discussion}, we provide a discussion on the security and challenges of implementing the proposed framework. Finally, in Section \ref{sec:conclusion} we present a summary and conclusion.

\section{Related Works}\label{sec:relworks}
The primary aspects of this work relate to both Intrusion Detection and IoT Security. Therefore, in this section we will discuss recent works from both fields, and the limitations of these approaches.

\subsection{Discreet Sequence Anomaly Detection for Intrusion Detection}
Software inevitable contains flaws which pose security vulnerabilities if exploited by an attacker. Many of these vulnerabilities remain unknown until they are discovered and exploited in the wild (referred to as zero days). An effective way to detect these exploits is to analyze a program's behavior in real-time. 

A program's behavior can be observed during runtime by monitoring its system calls, or by tracking the program in the memory \cite{maske2016advanced,yoon2017learning,kim2016lstm,khreich2017anomaly}. In both cases, the behavior is observed as an ordered sequence of events on which anomaly detection can be performed \cite{ahmed2016survey,chandola2010anomaly}. To detect attacks in these sequences, many works utilize discreet sequence anomaly detection algorithms. We will now summarize these works in chronological order.

In \cite{forrest1996sense} the authors create a database of normal sequences by windowing over system calls, and flag sequences as anomalous sequences if they do not appear in the database. In \cite{kosoresow1997intrusion} the authors extended the windowing approach to longer sequences via partitioning. In \cite{lee1997learning} the authors use RIPPER to extract concise rule sets from systems calls to classify malicious sequences. The authors then expanded their work in \cite{lee1998data} by using the frequent episodes algorithm, computing inter/intra-audit record patterns, and by proposing a general agent architecture. In \cite{hofmeyr1998intrusion} the authors proposed a system based on the defenses of natural immune systems. First a database of short normal sequences is created. Then new sequences are scored according to the number of matches (substrings) the sequence has in common. 

In \cite{warrender1999detecting} the authors performed a comparative evaluation involving Hidden Markov Models (HMM), RIPPER, and threshold-based sequence time delay embedding (t-STIDE). An HMM is similar to a MC except that it model transmissions based on output symbols at each state. We did not use an HMM since the framework needs a light weight model that can be trained efficiently and can be merged with other models. t-STIDE works by looking up the frequency of new sequences (window) in normal dictionary (hash table). Infrequent sequences below a given threshold are considered anomalous. The authors found that the HMM provided the best performance, but t-STIDE had similar performance and was significantly faster to train.

In \cite{michael2000two} the authors propose modeling a finite-state machine over system calls such that novel sequences are labeled anomalous. In \cite{gao2002hmms} the authors apply an HMM over a sliding window of system calls. In \cite{tandon2003learning} the authors revisit the use of association rule mining by considering the system call's arguments. Based on a mining algorithm called LERAD, the authors propose three variants which out performed t-STIDE for certain attacks. In \cite{hoang2003multi} the authors propose a multi-layer approach which first uses a normal database and then passes suspicious sequences to a HMM for further analysis. In \cite{yeung2003host} the authors use a Markov Chain to model normal shell-command sequences, and then detect abnormal (malicious) sequences as an indication of misuse in the system.

In \cite{eskin2001modeling,mazeroff2003probabilistic,mazeroff2008probabilistic} the authors use probability suffix trees (PST) to model normal system call sequences. A PST is a variable length Markovian model which forms a tree-liek data structure.

In \cite{hu2009simple} the authors propose a method for speeding up HMM training on system calls by 50\%. They accomplish this by prepossessing the training sequences and by performing incremental training. In \cite{xie2013evaluating} the authors prose a kernel trick to transfer sequences to Euclidean space in order to perform kNN lookups. In \cite{kim2016lstm} the authors propose the use of a long-term short-term (LSTM) neural networks to detect abnormal system call sequences. In \cite{chawla2018host} the authors extent the work of \cite{chawla2018host} by stacking convolutional networks (CNN) followed by a recurrent neural network (RNN) with Gated Recurrent Units (GRU). Although the use of GRU reduced the training time, the authors needed to use powerful GPUs to train their network. 

Our proposed framework uses an EMM, a type of Markov Chain, as the anomaly detection model for detecting abnormal control flows in the memory of applications. In contrast to the above works, the limitation to these approaches are:

\begin{description}
	\item[Attack Vector Coverage] Many exploits do not use the shell or require the evocation of abnormal system-calls, so the Markov model would not observe any malicious activity during their exploitation processes. For example, exploitation of a buffer overflow vulnerability can be accomplished without making explicit calls. 
	\item[Modeling the True Behavior] An application's system and shell calls only capture an application's high-level behavior. As a result, some exploits can be designed so that the executed code will generate seemingly benign sequences (obfuscation) and evade detection. Furthermore, some malware may only require to make benign call sequences to accomplish its objective. For example, a randsomware will read and write files via system-calls (benign) but encrypt the files internally (malicious).
	\item[System Overhead] In order to intercept the system calls of a specific application, one must intercept the system-calls of all applications. As a result, these approaches are suitable for devices with strong computational power such as personal computers, but not IoTs. Moreover, models such as HMMs and neural networks cannot be trained (and sometimes not even executed) on IoTs.
\end{description}

By modeling a Markov Chain on a target application's general jumps through its memory space, our approach is not restricted by the above limitations. Namely, our approach can (1) capture the internal behavior of the application, regardless of the system-calls or shell-code, (2) detect exploitation of vulnerabilities occurring within an application's memory space, and (3) be applied to specific applications, as opposed to the whole system, thus minimizing overhead --making our approach appropriate for IoTs. 

Similar to our approach, in \cite{7167219} the authors detect anomalous activities by maintaining a heatmap of the \textit{kernel's} memory space. An anomaly is detected when the probability of a region of the kernel's memory being accessed is below a threshold. By doing so, the authors were able o detect abnormal application activities reflected by interactions with the kernel. This work differs from ours in the following ways:
\begin{enumerate}
	\item The kernel-heatmap method cannot detect all of those which our method can. For example, code reuse attacks are ignored because the kernel interactions seem normal. Moreover, abnormal interactions with the kernel can be considered benign because \textit{other} applications may be performing similar interactions. For example, when privilege escalation is obtained and abused, restricted system calls will not seem abnormal because the context of the requesting app is not considered.
	\item When an anomaly is detected, there is indication of which application has been compromised. This makes it harder to mitigate the threat.
	\item The method in \cite{7167219} suffers from significantly higher false alarm rates than an EMM. This is because the probability of accessing a memory region is normalized over all accesses. Therefore, rare benign memory interactions are considered anomalous. In contrast, by using an EMM over memory regions, we consider the transition across the memory space which provides an implicit context for each interaction. Later in section \ref{sec:eval} we provide a comparative evaluation.
	\item Like all other anomaly detection algorithms (including EMMs), the method in \cite{7167219} is subject to adversarial attacks (poisoning) during training and false positives due to rare benign behaviors (due to human interactions and other stimuli). In our paper we propose a framework which provides accelerated on-site model training in a hostile environment via collaboration, filtration, and self-attestation. 
\end{enumerate}

Another approach to deploying an IDS is to distribute the detection across multiple devices \cite{abraham2007d, zhang2011distributed, snapp1991dids}. In these approaches, the devices share information with one another regarding malicious traffic and the network's state. Similar to our method, a distributed IDS utilizes on collaboration between devices. However, the proposed methods are limited to analyzing network traffic. In many cases, network traffic from a device cannot indicate the exploitation of an application running on the device (e.g., encrypted payloads). When considering the IoT topology and the vision of allowing them to autonomously exchange data, a network based IDS might be problematic, since network traffic near each IoT device may differ significantly. In contrast, our anomaly detection approach on an application's the memory jumps is not affect by the diversity of network traffic near each device. Furthermore, distributed IDS solutions are designed to work collectively as a single intrusion detection system. However, should one node be compromised by an attacker, the security of the entire system may fail. In contrast, our method allows for safe collaboration via self attestation and model anomaly filtration --which makes compromising the whole system much more difficult.

\subsection{IoT Specific Solutions}
The IoT device security solutions have been researched extensively over the last few years.
However, the proposed solutions typically do not address all of an IoT's characteristics: their (1) mass quantity, (2) limited resources, (3) global deployment, (4) dependence on external sensors/triggers.

In \cite{huuck2015iot,oh2014malicious} the authors propose deploying static analysis tools on the IoT devices. However, these approaches require that (1) the device maintain a database of virus signatures and (2) that experts continuously update this database. Furthermore, these approaches are not sufficient when facing viruses which can only be detected during runtime (e.g., execution of a malicious encrypted payload). Our method is anomaly-based and therefore can detect threats automatically without human intervention, and performs continuous dynamic analysis of an application's behavior.

Several studies try to secure IoT devices by deploying an anomaly detection model on the device itself \cite{raza2013svelte,arrington2016behavioral,o2014anomaly, taneja2013analytics}. Some of them suggest to simply apply traditional solutions (meant for stronger devices), while others suggest a novel approaches which are more light-weight. A common denominator for all these approaches is: they neglect of the fact that (1) an anomaly detection model is sensitive to adversarial attacks during the training phase, and (2) rare benign activities (which did not appear in the initial training data) can generate false positives (e.g., an IoT smoke detector being triggered).
Our method, on the other hand, has a very short initial training-phase and learns from the experiences (events) of millions of IoTs.

Other studies propose that a centralized server should be deployed \cite{abera2016c,jager2017rolling,ott2015trust}. However, the centralized approach does not scale well with the number of IoT devices. Our method is distributed and autonomous.

Another direction in the literature is to deploy a network-based IDS at the gateway of IoT distributions \cite{taneja2013analytics,chen2011novel}. Although this is a suitable solution for smart homes and offices, it does not scale to industrial deployments (e.g., smart railways), or where the IoT devices are connected directly to the Internet (e.g., some survallaince cameras). Our method does not depend on the IoT devices' deployment or topology.

Other studies have tried to avoid the issue of training altogether, by using a trust anchor, such as an IoT device's functional relationship to detect anomalies. In \cite{moon2015functional}, the authors propose executing every distributed computation twice across different IoT devices and then compare the results to detect deviations (infected devices). However, this method was only designed to protect specific types of IoT devices, from specific types of attacks. Our method is generic to the type of device, and the type of attack.

Other trust anchors solutions include the Trusted Platform Module (TPM) \cite{morris2011trusted} and Trusted Execution Environment (TEE) \cite{yiu2015armv8}. ARM's TrustZone \cite{su2011multi} is a TEE implemented in the hardware, providing a one-way separation between two worlds: ``unsecured'' and ``secured''. In \cite{abera2016c} the authors proposed C-FLAT which utilizes the Trust Zone for attesting the IoT device's control-flow behavior against a simulation run in parallel on a central server. Although an application's control-flow can be used to detect a vast range of code execution attacks, C-FLAT is limited to specific IoT devices which (1) do not execute code continuously or (2) devices whose behavior is not affected by external sensory events (e.g., smart cameras). Our method analyzes control-flow behavior to detect abnormalities dynamically on-site, and therefore does not have these limitations.

\section{The Anomaly Detection Model}\label{sec:anom}
In this section, we present a novel method for efficiently modeling an application's control-flow, and then detecting abnormal patterns with the trained model. The method is applied locally and continuously on a single IoT device. Later, in Section \ref{sec:ciota}, we will present the proposed framework for enabling the decentralized collaborative training of the anomaly detection model.

\subsection{Motivation}
When an application is executed, the kernel designates a region of memory for the program to operate in. The region contains the program's code (machine instructions) and room for data (e.g., variables) to be manipulated by the program. As a program runs, a program counter (PC) tracks the current location (in memory) of the current instruction being executed. The PC will jump to different locations when functions, if statements, and loops are performed. By following the location of the PC over the application's region in memory, a pattern emerges. This pattern captures the behavior (control-flow) of the application. The objective is to model the normal behavior of an application's control flow, and then later detect when the behavior changes. 

When an attacker does not have the victim's credentials, the attacker may attempt to exploit a software vulnerability in order to obtain access to restricted assets, or to perform some other undesirable task (e.g., install a bitcoin mining bot). When an exploit is executed on an application, the control-flow of the app will deviate from the behavior intended by the app's developers. By detecting this abnormality, we can identify the threat and then take the proper steps to alter the user and mitigate it. 

Buffer overflow and code-reuse are examples of attacks which abnormally affect the PC's location in memory. Another example is the ``Zimperlich'' \cite{Zimperlich} vulnerability in Android which gives the attacker privileged escalation. When exploited, the ``Zimperlich'' causes the \textit{setuid} operation to fail. 
However, by monitoring the control-flow of the application, we can detect that the app was attempting access to the region of memory where \textit{setuid} is located, at an unusual time. As a result, we can raise an alert which will reveal the attack to the user.

With this approach, it is challenging for the attacker to evade detection. This is because most systems cannot change the code loaded into memory. Therefore, in order for the attacker to execute code which will hide the malicious activities, the attacker must either (1) add code of his own (which will make the PC jump), or (2) override existing code with his own (which will change the behavioral flow of the PC). This places the attacker in a \textit{catch-22}, where his exploit will ultimately detected as an anomaly (Fig. \ref{ExecutionFlow}).

\begin{figure}[!t]
	\centering
	\includegraphics[width=.7\columnwidth]{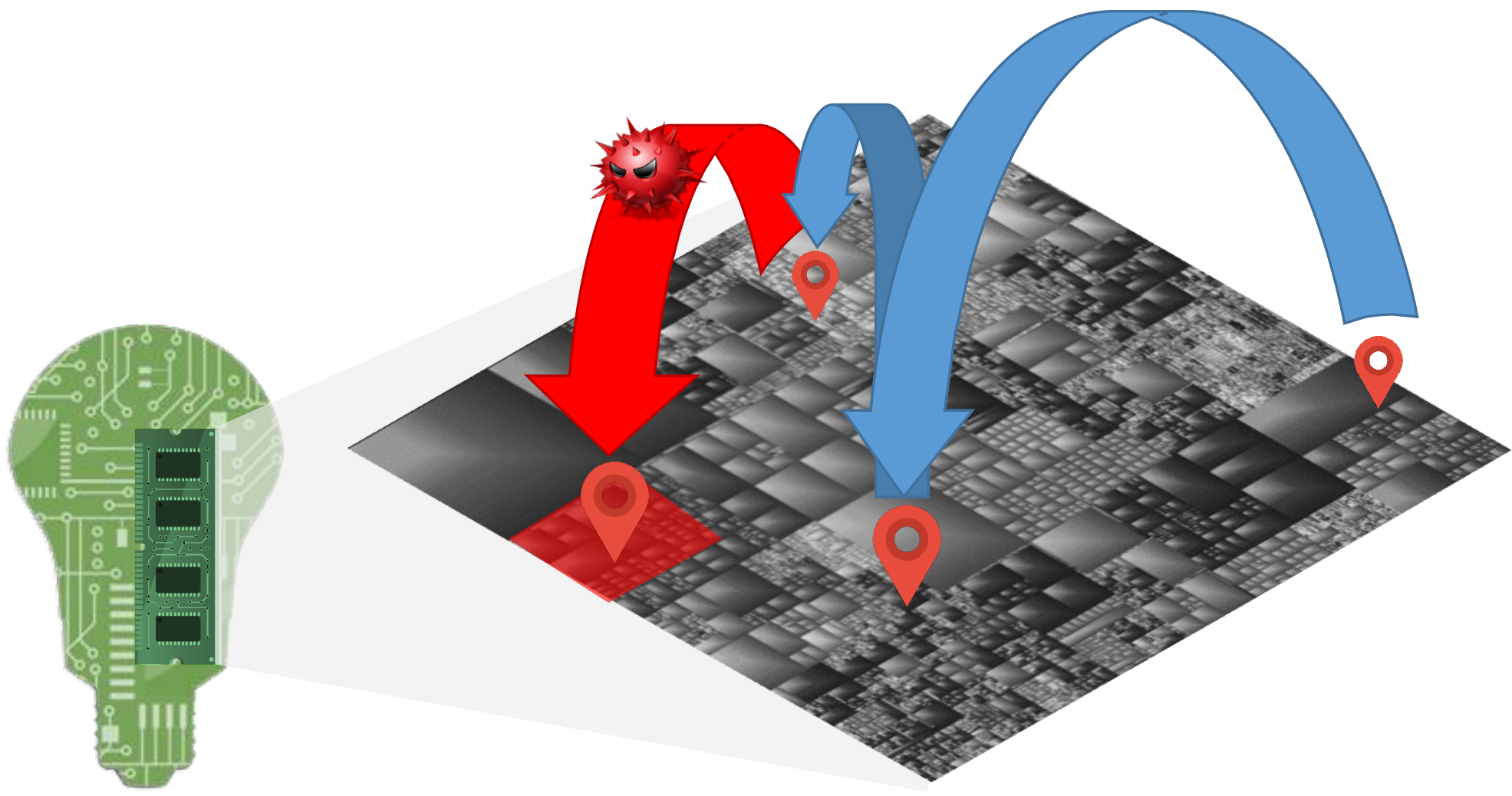}
	\caption{A visualization of a smart-light's program control-flow over the memory space, and the affect caused when a vulnerability is exploited to run malicious code.}
	\vspace{-0.3cm}
	\label{ExecutionFlow}
\end{figure}

\subsection{Markov Chains}
In order to efficiently model sequences, we use a probabilistic model called a Markov chain (MC). An MC is a {\em memory-less process}, i.e., a process where the probability of transition at time $t$ only depends on the state at time $t$ and not on any of the states leading up that state.
Typically, an MC is represented as an adjacent matrix $M$, such that $M_{ij}$ stores the probability of transitioning from state $i$ to state $j$ at any given time $t$. Formally, if $X_t$ is the random variable representing the state at time $t$, then 
\begin{equation} \label{eq:3}
M_{ij}=Pr(X_{t+1}=j|X_{t}=i)
\end{equation}
An EMM \cite{bhat2008extended} is the incremental version of the MC. Let $N=[n_{ij}]$ be the frequency matrix, such that $n_{ij}$ is the number of transitions which have occurred from state $i$ to state $j$. From here, the MC can be obtained by
\begin{equation}\label{eq:emm}
M=[M_{ij}]=\left[\frac{n_{ij}}{n_i}\right]
\end{equation}
where $n_i=\sum_j n_{i,j}$ is the total number of outgoing transitions observed by state $i$. By maintaining $N$, we can update $M$ incrementally by simply adding the value '1' to $N_{ij}$ whenever a transition from $i$ to $j$ is observed. In most cases, $N$ is a sparse matrix (most of the entries are zero). When implementing an EMM model, one can use efficient data structures (e.g., compressed row storage or hash maps) to track large numbers of states with a minimal amount of storage space.

If $N$ was generated using normal data only, then the resulting MC can be used for the purpose of anomaly detection \cite{Patcha20073448}. Let $Q_{k}$ be the last $k$ observed states in the MC, where $k\in\{1,2,3,\ldots\}$ is a user defined parameter. 
The simplest anomaly score metric is the probability of the observed trajectory $Q_{k}=(s_0,\ldots,s_t)$ w.r.t the given MC. This is given by
\begin{equation}
Pr(Q_{k})=Pr(\bigwedge_{i=0}^k (X_i=s_i))=\prod_{i=0}^{k-1} M_{s_i,s_{i+1}}
\label{eq:trajectory-prob}
\end{equation}
When a new transition from state $i$ to state $j$ is observed, we assert that the transition was anomalous if $Pr(Q_k)<p_{thr}$, where $p_{thr}$ is a user defined cut-off probability. However, for large $k$, or in the case of noisy data, $Pr(Q_k)$ can generate many false positives. In this case, the average probability of the sequence can be used 
\begin{equation}
\overline{Pr}(Q_{k})=\frac{1}{t}\sum_{i=0}^{k-1} M_{s_i,s_{i+1}}
\label{eq:trajectory-avprob}
\end{equation}
Lastly, to avoid corrupting the model, one should not update $N_{ij}$ with a transition deemed an abnormal (part of an attack).

\subsection{Detecting Abnormal Control-Flows in Memory}
To track the logical address of the PC in real-time, we can use a kernel debugger such as the Linux performance monitoring API (\texttt{linux/perf\_event.h})\footnote{The API can be found here: \texttt{http://man7.org/linux/man-pages/\\man2/perf\_event\_open.2.html}, and sample code can be found here: \texttt{https://git.io/vAIvd}}. The debugger runs in parallel to the target application and tracks addresses of the PC. The debugger periodically reports the addresses observed since the last report. The sequence of observed addresses can then be modeled in the MC.

Modeling a memory space as states in an MC is a challenging task. Due to memory limitations, it is not practical to store every address in memory as a state. Doing so would also require us to track the location of the PC after every operation. This would incur a significant overhead. Therefore, we propose that a state in the MC should be region of memory (Fig. \ref{fig:memflow}). We also configure the debugger to report right after \texttt{branch} and \texttt{jump} instructions. To accomplish this, we used a kernel feature via the debugger.\footnote{The feature is called \textit{coresight} in ARM CPUs, and \textit{lastjump} in Intel CPUs.}

Let $PC_{addr}$ be the current logical address of the application's program counter, where \texttt{0x0} is the start of the app's logical memory-space. Let $B$ be the partition size in Bytes, defined by the user. The MC state $i$, in which the program is currently located, is obtained by
\begin{equation}
i = \floor*{\frac{PC_{addr}}{B}}
\end{equation}
The partition size of a memory space is a user defined parameter. When selecting the partition size, there is a trade off between the true positive rate and memory requirements. For an Apache web-server, we found that a partition size of 256 Bytes was is enough to detect our evaluated attacks, where the memory consumption of the model $N$ was only 20KB of memory.

In Algorithm \ref{alg:anomDetect}, we present the complete process for modeling and monitoring an application's control-flow in memory. There, $T_{grace}$ is the initial learning time given to the MC, before we start searching for anomalies. 

\begin{figure}[!t]
	\centering
	\includegraphics[width=.7\columnwidth]{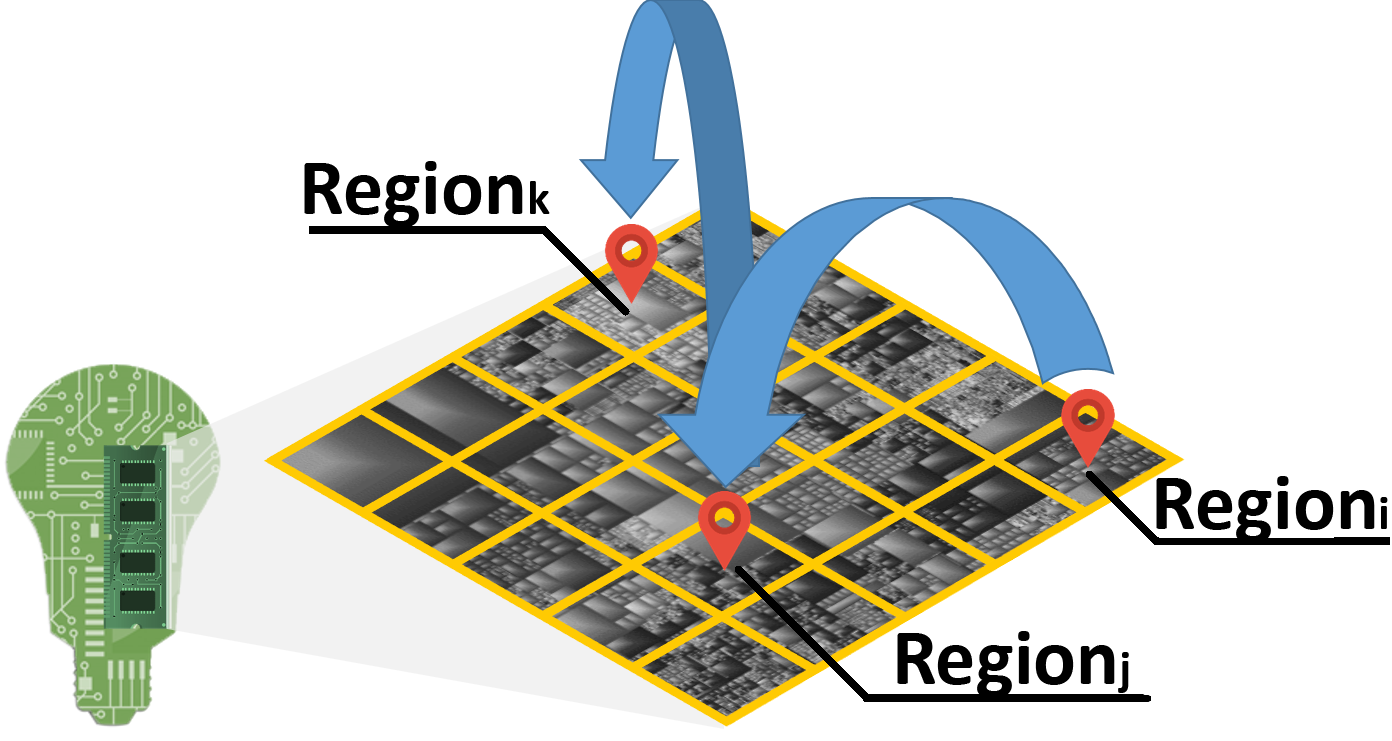}
	\caption{A visualization of partitioning a smart-light's memory-space into states in a Markov Chain.}
	\label{fig:memflow}
\end{figure}

\begin{algorithm}\label{alg:monitor}
	\caption{The algorithm for training an MC, and detecting anomalies in an application's control flow via the memory.}
	\label{alg:anomDetect}
	\begin{algorithmic}[1]
		\Function{Monitor}{app\_name, $k$, $B$, $p_{thr}$, $T_{grace}$}
		\State $N \leftarrow DynamicSparseMatrix()$ \Comment{init MC}
		\State $Q_k \leftarrow FIFO(k)$ \Comment{init state trace ring-buffer}
		\State $i \leftarrow 0$ \Comment{initial state}
		\State fd $\leftarrow$ register(app\_name) \Comment{track app with debugger}
		\While{buffer $\leftarrow$ read(fd)}
		\For{$addr \in$ buffer}
		\State $j = \floor*{\frac{PC_{addr}}{B}}$ \Comment{determine current state}
		\If{notInGrace($T_{grace}$) \textbf{and} $\overline{Pr}(Q_{k})<p_{thr}$}
		\State raise alert
		\Else
		\State $N_{ij}++$\Comment{update MC}
		\EndIf	
		\State $i = j$	
		\EndFor
		\EndWhile
		\EndFunction
	\end{algorithmic}
	
\end{algorithm}

\subsection{Collaborative Training of EMMs}
Multiple devices can collaborate in parallel to train $N$. The benefit of parallel collaboration is (1) we arrive at a converged/stable model much faster, and at a rate which scales with the number of collaborators, and (2) we increase the likelihood of observing rare yet benign occurrences, which reduces our false positive rate. To collaborate across multiple devices, we assume that the devices' hardware, kernel, and target application (being monitored) are of the same version. For example, all of the Samsung smart-fridges of the same model.  

Let $\mathbf{N}$ be a set of EMM models, where $\mathbf{N}^{(k)}_{ij}$ is the element $N_{ij}$ in the $k$-th model of $\mathbf{N}$. 
Since we assume that multiple devices statistically observe the same state sequences, then EMMs ($\mathbf{N}$) trained at separate locations can be merged into a single EMM. Since an EMM is a frequency matrix, we can combine the models by simply adding the matrices together by 
\begin{equation}\label{eq:simplecombine}
N^*=[N_{ij}^*]=[\sum_k \mathbf{N}^{(k)}_{ij}]
\end{equation}

It is critical that the combined model $N^*$ be trained on normal behaviors only. However, we cannot assume that all models have not been negatively affected. We propose two security mechanisms to protect $N^*$: model-attestation and abnormality-filtration.

\begin{description}
	
	\item[Abnormality-filtration] is used to combine a set of models $\mathbf{N}$ into a single model $N^*$, in manner which is more robust to noise and adversarial attacks than (\ref{eq:simplecombine}). The approach is to filter out transitions found in $N^*$ if the majority of models in $\mathbf{N}$ have not observed the same transition. To produce $N^*$ from $\mathbf{N}$ in our framework, Algorithm \ref{alg:combine} is performed, where $p_a$ the minimum percent of devices which must observe a transition in order for it to be included into the combined model. After forming the MC model $M^*$ from $N^*$ using (\ref{eq:emm}), an agent can attest that $M^*$ is a verified model via model-attestation.
	
	\item[Model-attestation] is used to determine whether a trusted model $N^{(i)}$ is similar to a given model $N^{(j)}$. If $N^{(j)}$ is similar, than it is considered to be a verified model with respect to $N^{(i)}$. To determine the similarity, we measure the linear distance between the EMMs, defined as
	\begin{equation}\label{eq:attest}
	d(N^{(i)},N^{(j)}) = \frac{\sum_{k=1}^{\text{dim}(M)}\sum_{l=1}^{\text{dim}(M)} |M^{(i)}_{kl} - M^{(j)}_{kl}| }{\text{dim}(M)^2}
	\end{equation}
	where dim$(M)$ is the length of $M$'s dimensions, and $M$ is the Markov chain obtained from $N$. A local device $i$ can attest that model $N^{(j)}$ is a self-similar model, if $d(N^{(i)},N^{(j)})<\alpha$, where $\alpha$ is a parameter of our framework (see Section \ref{sec:ciota}).
	
\end{description}

\begin{algorithm}
	\caption{The algorithm for combining a set of EMMs.}
	\label{alg:combine}
	\begin{algorithmic}[1]
		\Function{Combine}{$\mathbf{N}$, $p_a$}
		\State $N \leftarrow$ empty\_EMM$()$ \Comment{initialize empty freq. matrix}
		\For{$n_{ij} \in N$}
		\State $C \gets 0$ \Comment{init the counter}
		\For{$k \in 1:|\mathbf{N}|$}
		\State $n_{ij} \gets n_{ij} + \mathbf{N}^{(k)}_{ij}$
		\If{$\mathbf{N}^{(k)}_{ij} > 0$}
		\State $C++$
		\EndIf
		\EndFor
		\If{$\frac{C}{|\mathbf{N}|} \leq p_a$} \Comment{not enough devices have observed $ij$}
		\State $n_{ij} \gets 0$
		\EndIf
		\EndFor
		\State{\Return $N$}
		\EndFunction
	\end{algorithmic}	
\end{algorithm}

\section{The Framework}\label{sec:ciota}
In this section, we present the proposed framework and protocol. The framework enables distributed devices to safely and autonomously train anomaly detection models (Section \ref{sec:anom}), by utilizing concepts from the block chain protocol. 

First we will provide an overview and intuition of the framework (\ref{subsec:overview}). Then we will present the terminology which we use to describe the blockchain protocol (\ref{subsec:terms}). Finally, we will present the protocol and discuss its operation (\ref{subsec:protocol}). Later in Section \ref{sec:discussion}, we will discuss the various challenges and design considerations.

\subsection{Overview}\label{subsec:overview}
The purpose of the framework is to provide a means for IoTs to perform anomaly detection on themselves, and to autonomously collaborate to find the anomaly detection model. 
For example, a company may want gradually deploy thousands or millions of IoT devices. Each of the devices have an application, such as a web server (so that the user can interface and configure the device). The application may have un/known vulnerabilities which can be exploited by an attacker to accomplish some nefarious task. To detect threats affecting the devices, the company installs an agent on each device, and has the agent monitor the application.\footnote{An agent can cover multiple applications on a single device by maintaining separate models and blockchains. For simplicity, we will focus on protecting a single application.} 

The job of an agent is to (1) learn the normal behavior of the application, and (2) report abnormal activity in the application (Algorithm \ref{alg:anomDetect} on the local model $M^{(\ell)}$), and (3) report abnormal agents compromised or infected with malware. 

Each agent then collaborates with the other agents by trying to figure out how to safely combine everybody's local models into a single global model $M^{(g_1)}$. Once the agents agree upon a global model, each device will replace their $M^{(\ell)}$ with $M^{(\mathit{g}_1)}$. The agents continue to update their $M^{(\ell)}$ and collaborate on creating $M^{(\mathit{g}_2)}$. This collaboration cycle repeats indefinitely.

The benefit of collaboration is:
\begin{enumerate}
	\item An agent who has accidentally trained his $M^{(\ell)}$ on malicious behaviors will now detect them as malicious.
	\item The agents will benefit from the vast experience of all the devices together, and accurately classify rare benign events.
	\item The agents will be able to identify rouge agents by detecting corrupt \textit{partial-blocks} which fail model-attestation.  
\end{enumerate}

A critical part of the collaboration process is filtering out rare benign behaviors from possible malicious behaviors. 
The difference between the two is that we expect to see rare benign behaviors among more devices than malicious behaviors, especially at the outset of an attack (e.g., the propagation of a worm). This is relative to the parameter $p_a$: we expect at least $p_a\%$ of the agents to experience the rare-benign events, and less than $p_a\%$ to be infected. Note that after $M^{(g)}$ converges the malicious behaviors are detected, and $M^{(\ell)}$ is not updated with detected malicious behaviors (detailed in the protocol later on).

Since agents do not update their $M^{(\ell)}$ when an anomaly is detected, we expect each of the local models to remain pure.
However, there are cases where an $M^{(\ell)}$ can be corrupted. For example, when an agent launches after a malicious behavior begins, but before $M^{(\mathit{g}_1)}$ has been created.  
To protect the integrity of the next global model, when an agent which receives a set of local models (under collaboration to become the next global model), an agent will$\ldots$
\begin{enumerate}
	\item \textbf{[\textit{trust}]} $\ldots$consider only sets which contain authenticated models from different agents.
	\item \textbf{[\textit{filter}]} $\ldots$combine the set into a potential $M^{(\mathit{g})}$, and remove behaviors from $M^{(\mathit{g})}$ which have not been reported by the majority of agents (\textit{abnormality-filtration}).
	\item \textbf{[\textit{attest}]} $\ldots$accept the set of models as a potential $M^{(\mathit{g})}$, if it does not conflict with the agent's current local model (\textit{model-attestation}).
	\item \textbf{[\textit{inform}]} $\ldots$share the accepted set of models (including his own) with other agents, while reporting abnormal application behaviors and problematic agents (rejected \textit{partial-blocks}).
\end{enumerate}

The following analogy provides some intuition for how the agents create $M^{(\mathit{g})}$: 
\begin{tcolorbox}[breakable,title=\textit{Analogy}]
	A group of painters (agents) are looking at the same colored object (target application), and they are working together to select a single colored paint ($M^{(\mathit{g})}$) to describe it. Each painter produces a bucket of paint ($M^{(\ell)}$) based on their perception of the object's color. The painters then share their paint buckets with their neighbors, who mix the received paints together, while filtering out imperfections (\textit{abnormality-filtration}), but only if they feel the resulting color will still resemble the colored object (\textit{model-attestation}). The painters continue to adjust the paints, and after a set number of iterations of sharing, each painter pours some of the paint onto his/her pallate, and uses it to paint (perform anomaly detection). Then, the cycle repeats as the painters continue to adjust, filter, and share the paints in hopes of perfecting the color.
\end{tcolorbox}

To enable this autonomous trusted distributed collaboration, we use a \textit{blockchain}. In the following sections, we will detail how blockchain is used for this purpose.

\subsection{Terminology \& Notation}\label{subsec:terms}
In the framework, a blockchain is a linked list of sequential blocks, where each block contains a set of records (EMM models) acquired from different IoT devices of the same type (see Fig. \ref{fig:block}). Each device maintains a copy of the latest chain, and collaborates on the next block.

We will now list the terminology and notations necessary to explain the framework in detail:

\begin{description}
	\item[Model] A Markov chain anomaly detection model denoted $M$, where $N$ denotes the model in its EMM frequency matrix form. The model supports (1) the calculation of a distance between models, and (2) combining (merging) several models of the same type together. In this version, we use an EMM. We denote a model which is currently deployed on the local device as $N^{(\ell)}$.
	
	\item[Combined Model] A model created by merging a set of models together. The combined model only contains elements (transitions) which are present in at least $p_a$ percent of the models (see \textit{abnormality-filtration} in Algorithm \ref{alg:combine}). 
	
	\item[Verified Model] Let $d(M^{(i)},M^{(j)})$ be the distance between models $M^{(i)}$ and $M^{(j)}$. A model $M^{(i)}$ is said to be verified by a device if $d(M^{(i)},M^{(\ell)}) < \alpha$, where $\alpha$ is a parameter given by the user (see \textit{model-attestation} in (\ref{eq:attest})).
	
	\item[Record] A record is an entry in a block. A record consists of the model $N^{(i)}$ from device $i$, and a digital signature $S_{k_i}(\texttt{m}, n, \textbf{N})$, where $k_i$ is device $i$'s private key, \texttt{m} is the blockchain's meta-data (hash of previous block, target application, version$\ldots$), $\textbf{N}$ is the set of models from the start of the current block up to and including $N^{(i)}$, and $n$ is a counter which is incremented with each new block. The purpose of $n$ is to track the length of the chain and to prevent replay attacks. A record is \textit{valid} if the format is correct and the signature can be verified using device the corresponding device's public key. 
	
	\item[Block] A list of exactly $L$ records from different devices and some metadata. Each record is verified by the agent's digital signatures, where each agent's signature covers its model, all preceding models, the current block number ($n$), and its metadata (e.g., the agents' IP addresses).  We denote the $i$-th record in a block as $r_i$. The models in a block, when combined, represent a collaborative model $M^{(\mathit{g})}$ which can be used to replace a local model $M^{(\ell)}$. A block is \textit{valid} if the format is correct and contains valid records.
	
	\item[Partial Block] The same as a block, but it is less than $L$ entries long. The combined models in a \textit{partial-block} represent a proposed collaborative model $M^{(g)}$ in progress. An agent contributes (add its own model) to a \textit{partial-block} only if (1) the \textit{partial-block} is valid, (2) the agent does not already have a record there, and (3) the combined model, using the enclosed models, form a verified model (with respect to the agent's local model $N^{(\ell)}$). The length of a \textit{partial-block}, in the perspective of agent $i$, is the number of records in the \textit{partial-block} minus $i$'s if it exists.
	
	\item[Chain] A series of blocks (blockchain), where each block contains a hash of the previous block in \texttt{m}, and where the counter $n$ is the index of the block in the chain ($n=1$ for the first block, etc.) A chain contains the current collaboration for the next $M^{(\mathit{g})}$ (\textit{partial-block}), the current model with consensus (the last block in the chain), and an optional history of collaboration used for analytical purposes (all other blocks). The length of a chain is defined as the total number of full blocks in that chain. Finally, a chain may have at most one \textit{partial block} appended to the end of the chain. We denote the $i$-th block in a chain as $B_i$.
	
	\item[Agent] A program that runs on an IoT device which is responsible for (1) training and executing the local model $N^{(\ell)}$, (2) downloading more advanced broadcasted chains to replace $N^{(\ell)}$ and the locally stored chain, (3) periodically broadcasting the locally stored chain, with the agent's latest $N^{(\ell)}$ as a record in the \textit{partial block}, and (4) reporting any anomalous behaviors/blocks.
	
\end{description}

\begin{figure}[!t]
	\centering
	\includegraphics[width=.84\columnwidth]{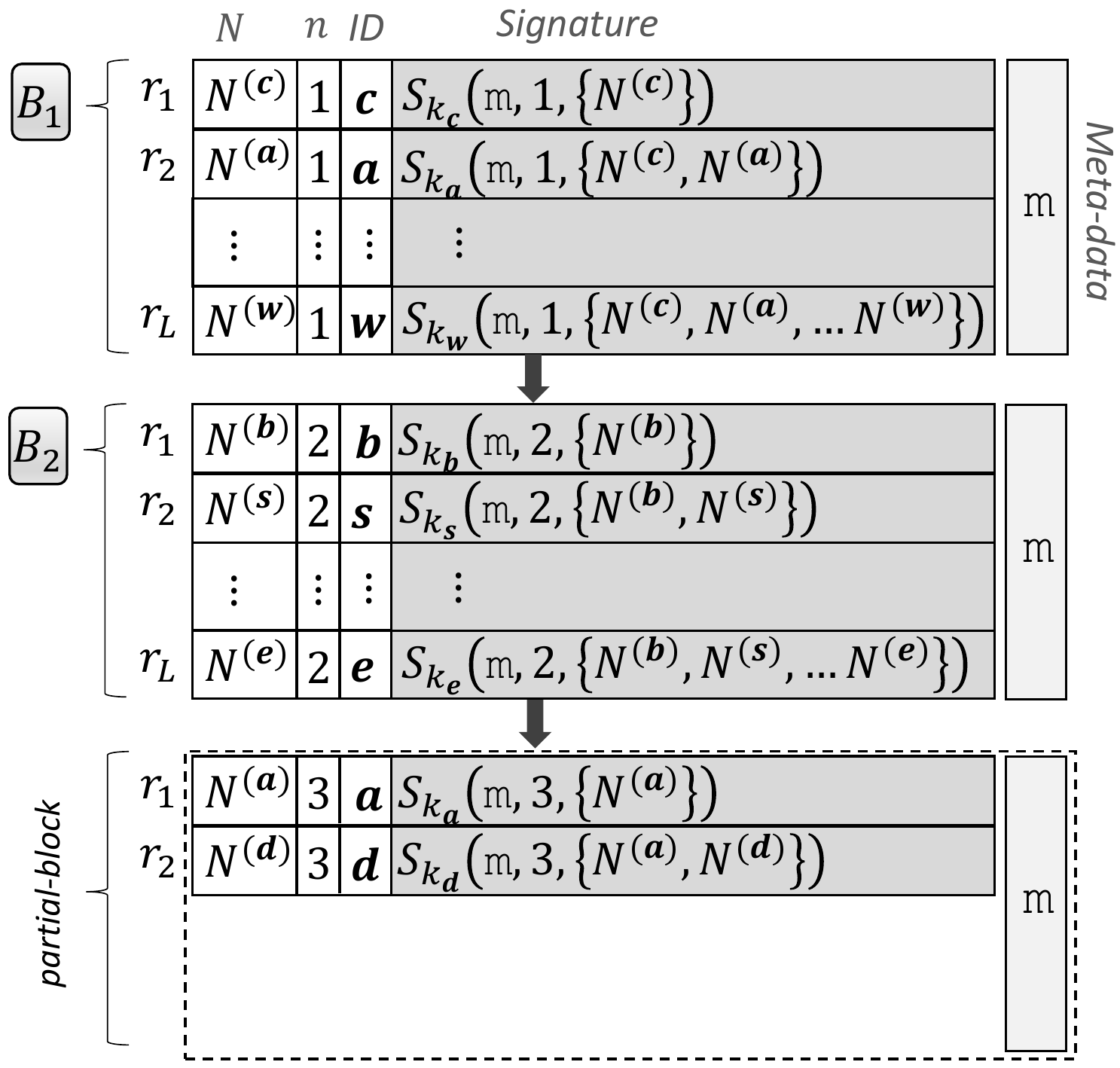} 
	\caption{An example of a chain with two blocks and a partial block, where the device IDs are $\{\mathbf{a},\mathbf{b},\mathbf{c}\ldots\}$.}
	\label{fig:block}
\end{figure}

\subsection{The Blockchain Protocol}\label{subsec:protocol}
By using a \textit{block-chain}, agents are able to collaborate autonomously in manner which is robust to adversarial attacks. Every agent maintains a local copy of the `best' chain. 

Closed blocks in the chain represent past completed global models, where the last completed block in the chain contains the most recently accepted model $M^{(\mathit{g})}$. The next global model is collaborated via a \textit{partial-block} appended to the chain. A \textit{partial block} only grows if agents can verify that it contains a safe model that captures the training distribution (the target app's behaviors). This is accomplished through trust propagation: agents (1) broadcast their \textit{partial block} to other agents, (2) replace their local \textit{partial-block} with received ones if they are both longer and similar to $N^{(\ell)}$ (same distribution check via \textit{model attestation}), and (3) reject and report \textit{partial-blocks} that are significantly different than $N^{(\ell)}$.


The blockchain protocol is as follows (illustrated in the flow-chart of Fig. \ref{LLD}):
\begin{tcolorbox}[breakable,title=\textit{Blockchain Protocol}]
	\singlespacing \vspace{-1.5em}
	\begin{enumerate}[leftmargin=*,label=\Alph*.]
		\item \textbf{Initialize.} An agent starts with an empty chain (an empty \textit{partial-block} with no preceding blocks) stored locally on its device, and initializes an empty local model $N^{(\ell)}$. 
		\item \textbf{Gather Intelligence (Monitor).} The agent (1) monitors the target application, (2) updates $N^{(\ell)}$ incrementally, and (3) reports anomalies if $T_{grace}$ has passed (Algorithm \ref{alg:monitor}). 
		\item \textbf{Share Intelligence.} Every $T$ seconds:
		\begin{enumerate}[label*=\arabic*.]
			\item \label{step:add_self} The agent adds its own local model $N^{(\ell)}$ to the \textit{partial-block} as a record, if $M^{(\ell)}$ is stable (passed $T_{grace}$), and does not yet exist in the \textit{partial-block}. 
			\item \label{step:broadcast} The agent shares its \textit{block-chain} (\textit{partial-block} and all preceding blocks) with $b$ other agents in a random order.\footnote{The agent only needs to broadcast the chain to a few `neighboring' agents, similar to how Etherium and Bitcoin work.} 
		\end{enumerate} 
		
		\item \textbf{Receive Intelligence.} When an agent receives a \textit{block-chain}:\footnote{To avoid DoS attacks, an agent will at most process $b$ chains once every $T$ seconds.}
		\begin{enumerate}[label*=\arabic*.]
			\item \textbf{If} the chain is shorter than the local chain: \textit{then} the agent discards the received chain.
			\item \textbf{If} the chain is longer than the local chain: \textit{then} the agent checks...
			\begin{enumerate}[label*=\arabic*.]
				\item \textbf{If} the last block is a valid block: \textit{then} the received chain replaces the local chain, and the models $\mathbf{N}$ in the last block are combined (\textit{abnormality-filtration}) to form $N^{(\mathit{g})}$ which replaces $N^{(\ell)}$.\footnote{The agent does not perform \textit{model-attestation} on a valid block.}\footnote{Option: Agents update $T$ to be a factor of the number of closed blocks in the local chain. Since $M^{(g)}$ converges over time, it is safer to prolong changes to the next version, increasing the response time when an attack on the blockchain is detected. See Section \ref{subec:adversarial} for details.}
				\item \textbf{Else}: the agent discards the received chain.
			\end{enumerate}
			\item \textbf{If} the chain has the same length as the local chain: \textit{then} the agent checks...
			\begin{enumerate}[label*=\arabic*.]
				\item \label{step:pb_accept}\textbf{If} (1) the received chain's \textit{partial-block} is longer than the local chain's \textit{partial-block} (excluding his own record from both), (2) the received \textit{partial-block} is valid, and (3) the models $\mathbf{N}$ in the \textit{partial-block} form a combined model (\textit{abnormality-filtration}) which the agent can attest is a verified model (\textit{model-attestation}): \textit{then} the received chain replaces the local chain.
				\item \label{step:pb_message} \textbf{Else If} (1) the chain's \textit{partial-block} has the same length as local chain's \textit{partial-block} (excluding his own record), (2) the two \textit{partial-blocks} have different agent IDs, (3) the \textit{partial-block} is valid, and (4) this is the $k$-th received chain of equal length whose \textit{partial-block} was that was not used: \textit{then} send the local chain to the agent(s) in received \textit{partial-block} who do not appear in the local \textit{partial-block}.\footnote{The received \textit{partial-block} has the IP addresses of the target agents.}  
				\item \label{step:pb_reject} \textbf{Else}: (1) the agent discards the received chain, and (2) \textbf{If} in steps \ref{step:pb_accept} or \ref{step:pb_message} the \textit{partial-block} failed the validity check or failed the \textit{model-attestation} to a significant degree, \textit{then} report the block and sending agent.\footnote{Alternative version: If the last block is valid yet different than the local chain's, then merge that block's combined model into $N^{(\ell)}$. This helps form a more general $N^{(g)}$ without communities. A limitation must be placed on the number of merges per $T$ seconds.}
			\end{enumerate} 
		\end{enumerate}
	\end{enumerate}
\end{tcolorbox}

\begin{figure}[!t]
	\centering
	\includegraphics[width=\textwidth]{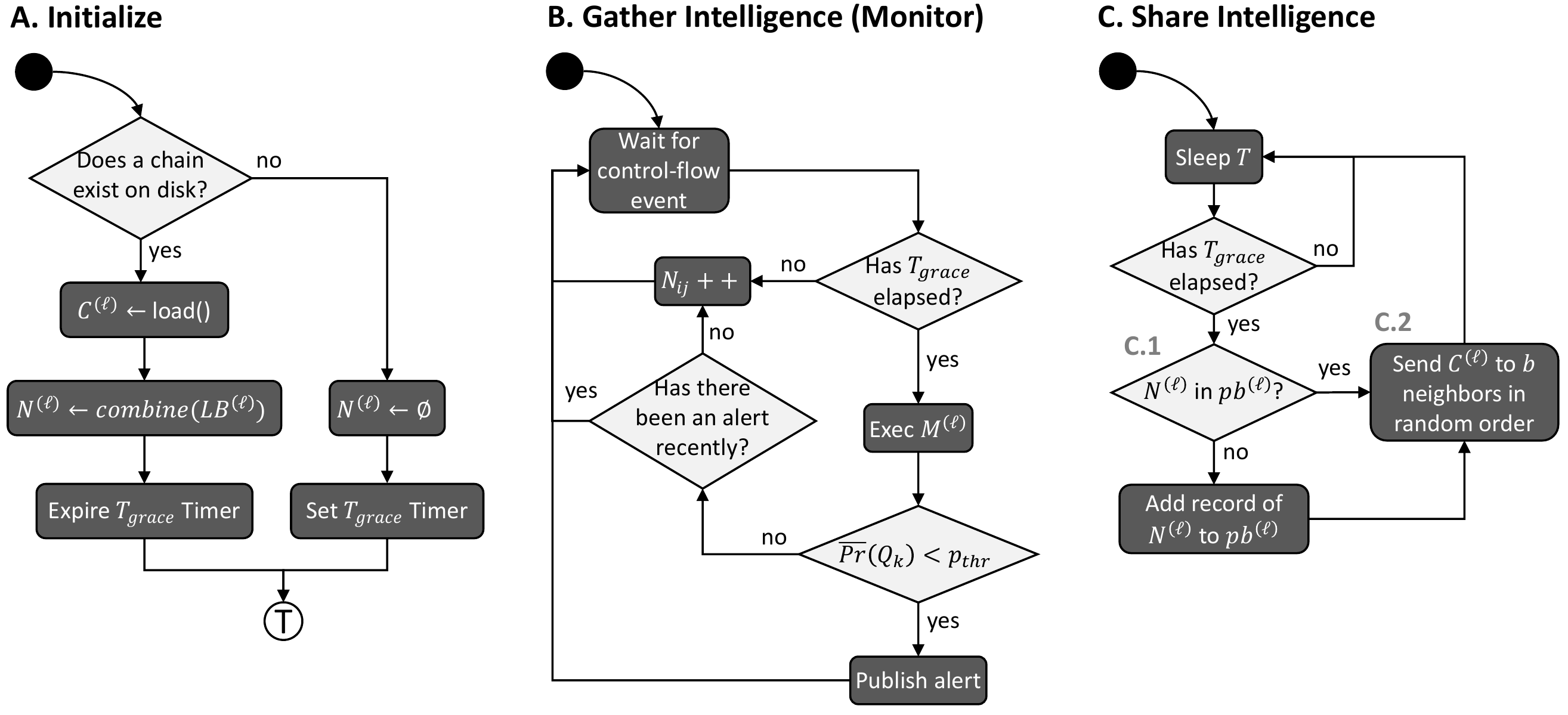}
	\includegraphics[width=\textwidth]{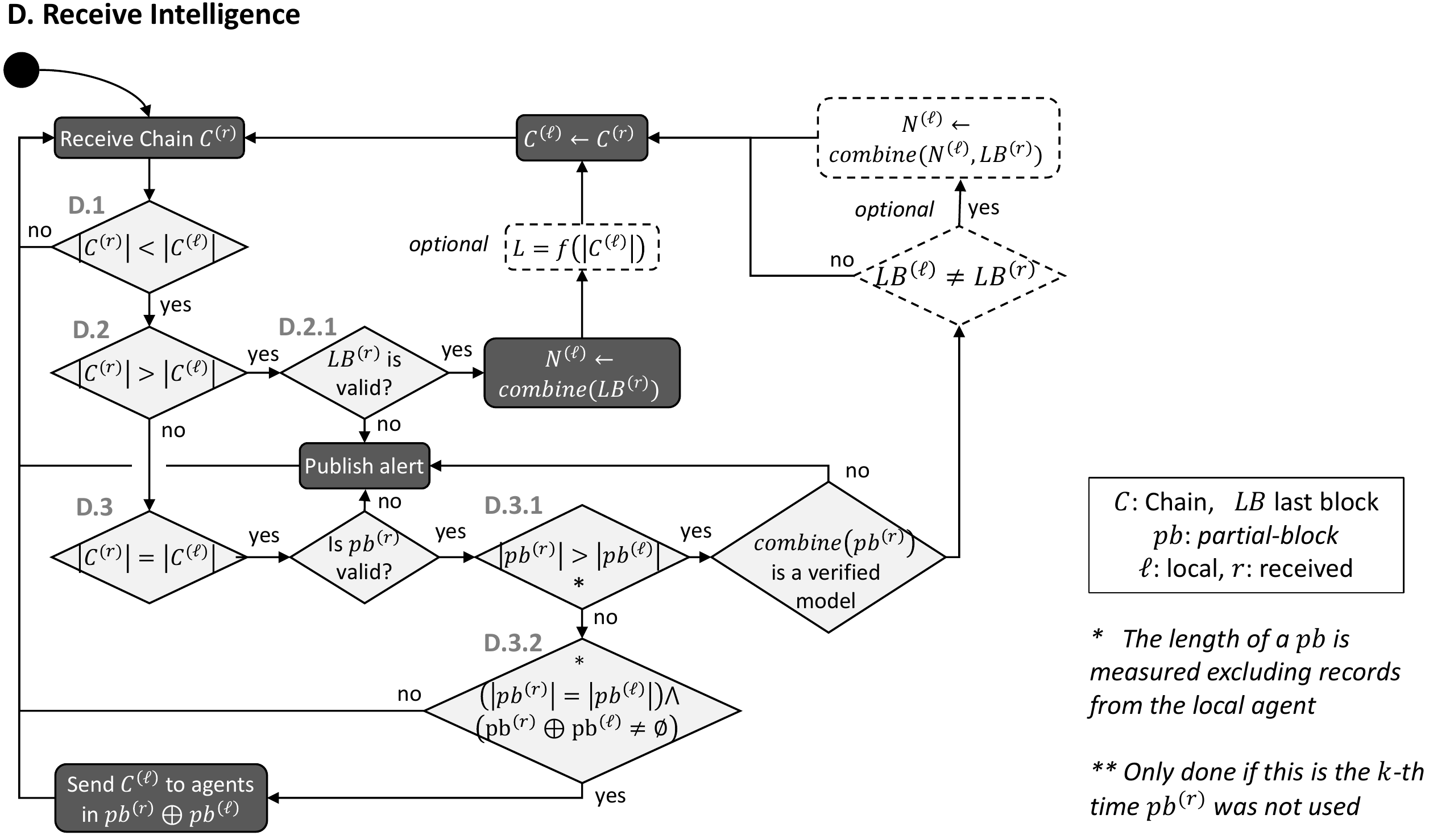} 
	\caption{A flow-chart of the blockchain protocol.}
	\label{LLD}
\end{figure}
\subsection{Proof of Cumulative Majority}
In blockchains, there is often some form of effort which deters an attacker from making false records. In systems like Etherium, it's the effort of solving a crypto challenge. This type of challenge is necessary in systems like Etherium, because there is a base assumption that all participants are untrusted from the start. In contrast, our system assumes that the majority participants (agents) are on uncompromised devices at the start, because they are deployed by the manufacture. This is a common assumption for IDS systems.

Therefore, this blockchain uses ``proof of cumulative  majority'' to deter attacks. The cumulative majority refers to the distributed consensus, or significant mass, achieved by accumulating $L$ the participants' signatures on a set of models to be combined as the next global model.

Concretely, an agent only replaces its local \textit{partial-block} with a received \textit{partial-block} if it is similar to the behaviors is has seen locally (\textit{model-attestation}). Therefore, a \textit{partial-block} of length $L$ can only exist if $L$ agents can attest that the model is similar to their own model/observations (i.e., there are $L$ compromised agents within $T$ seconds). Since $L$ is very large in practice (10k-100k), and \textit{partial-blocks} are indiscriminately shared and propagated: (1) a closed block has majority trust on it, and (2) is unlikely to be malicious due to the attacker's significant challenge.

The attacker's challenge/effort in this blockchain is to compromise a significant number of devices before $T$ seconds pass. Otherwise, the attack is reported in step \ref{step:pb_reject} of the protocol. At which point, the attack is discovered and (1) the affected devices' keys can be invalidated, and (2) the devices can be cleaned and patched.

With this in mind, we can see how the proposed blockchain system achieves its objective as an IDS. When a device is compromised, either (1) the agent will detect an abnormal behavior and report it to the SOC, or (2) the model will be corrupted/tainted by the latent behavior. In the latter case, if the compromised device publishes its model, it will be rejected by the other devices, because the tainted partial block (PB) will no longer be self-similar to the other devices’ models in the \textit{model-attestation} step. The other agents will then report rejected blocks, for example, to a Security Operations Center (SOC), and it will be clear who the infected device is (identified with problematic model's key from the reported PBs). The SOC can then invalidate that device's key and investigate the intrusion. Therefore, if the agent is compromised then the tainted model will be detected by the community, and if the model is corrupt (contains abnormal behaviors) then the agent will detect the intrusion when it replaces the local model with the next global model.

\subsection{Model Conflicts in Partial Blocks}\label{subsec:conflicts}
A concern might be that the agents will disagree on the models in the \textit{partial-block} and not reach a consensus. However, all agents monitor the same application running on the same type of hardware. Therefore, their models are very similar to one another. This is intuitive because each agent's training data follows the same distribution, and the Markov chain captures the probabilities of PC transitions. 

Since the models are trained on the same distribution, any model formed by combining a subset of all agents' models will also be similar all agents' models. More formally, we observe that
\begin{equation}
	d\left(combine\left(N_{i}^{(\ell)}\right),n_{j}^{(\ell)}\right)<\alpha \hspace{1em} \forall i,j : n_{j}^{(\ell)}\in N^{(\ell)},N_{i}^{(\ell)}\in \mathbf{N}^{(\ell)}
\end{equation}
where $\mathbf{N}^{(\ell)}$ is the set of all agents' models, and $d$ is the average parameter distance defined in (\ref{eq:attest}). This holds true since all agents are sampling from the same distribution (hardware and software). In our experiments, we were able to set $\alpha$ to a low value because the agents' benign models were consistently very similar (Section \ref{sec:eval}).
Therefore, it is highly unlikely that the \textit{partial-block} will be in conflict given a reasonable $\alpha$. 

\subsection{Deadlock Prevention}\label{subsec:deadlocks}
As mentioned in the protocol, agents should only message a few other agents in step \ref{step:broadcast} to minimize traffic overhead. However, a deadlock can occur if (1) connectivity between agents is incomplete (some agent's cannot directly message other agents), (2) all agents have their neighbor's records in their \textit{partial-block}, and (3) all \textit{partial-blocks} have the same length. Although it is very rare for this to occur (one in a million depending on the connectivity), step \ref{step:pb_message} prevents any deadlocks that may happen.

The following is the formal proof that our revised system will not have any deadlocks in reaching a \textit{partial-block} of length $L$. 

Let the undirected graph $G=(E,A)$ represent the agent's connectivity, where $i\in A$ is the set of agent IDs. Let $pb_i$  be the \textit{partial-block} of agent $i$ such that $pb_i \subseteq A$. We denote the set of neighbors which are directly connected to agent $i$ as $\Gamma_i$. Finally, we refer to an epoch as an iteration where all agents have broadcasted a their $pb$ to their neighbors (every $T$ seconds).

In our proof, we assume that $G$ forms a single connected component. We also assume that $L=|A|$ because if a $pb$ reaches length $|A|$ then it will reach all possible $L$, where $L\leq |A|$. We also assume that all agents are drawing observations from the same distribution to train their models, and therefore will not have any issue during the $pb$ validation checks (Section \ref{subsec:conflicts}). 

A deadlock occurs if $\forall i\in A:D(i)$ where the predicate $D$ is defined as $D(i) : pb_{i}^{(t)}=pb_{i}^{(t+1)} \land |pb_{i}^{(t)}|<L$.

\begin{lemma}\label{lemma:have_own}
	If there was no update after an epoch (deadlock), then all agents have their own ID in their partial block. Formally,
	$\forall i\in A:D(i) \rightarrow \forall i \in A : pb_i \cap  \{ i \}= \{ i \}$.
\end{lemma}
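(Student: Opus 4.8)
The plan is to prove this by contradiction, exploiting the fact that step~\ref{step:add_self} of the protocol (the ``Share Intelligence'' step) forces every agent to insert its own record into its \textit{partial-block} at the start of each epoch whenever that record is not already present. First I would fix an arbitrary agent $i \in A$, assume the global deadlock hypothesis $\forall j \in A : D(j)$, and suppose toward a contradiction that $i \notin pb_i$, i.e. $pb_i \cap \{i\} = \emptyset$.

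Next I would examine the guard of step~\ref{step:add_self} for this agent. It fires precisely when $M^{(\ell)}$ is stable (past $T_{grace}$) and the agent's record does not yet appear in its \textit{partial-block}. Under the standing assumptions of the deadlock analysis --- that all agents sample from the same distribution, that their models have been trained, and that validation never fails (Section~\ref{subsec:conflicts}) --- each agent's local model is stable, so the only remaining condition is the absence of the agent's own record. Since we assumed $i \notin pb_i^{(t)}$, the guard is satisfied, and at the beginning of the next epoch agent $i$ adds its own record, yielding $i \in pb_i^{(t+1)}$ while $i \notin pb_i^{(t)}$. Hence $pb_i^{(t+1)} \neq pb_i^{(t)}$.

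This directly contradicts the predicate $D(i)$, which asserts $pb_i^{(t)} = pb_i^{(t+1)}$. Therefore the assumption $i \notin pb_i$ is untenable, giving $pb_i \cap \{i\} = \{i\}$. Because $i$ was arbitrary, the desired conclusion $\forall i \in A : pb_i \cap \{i\} = \{i\}$ follows, establishing the implication claimed in the lemma.

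The step I expect to be the main obstacle is justifying that the self-insertion in step~\ref{step:add_self} is effectively \emph{unconditional} in the deadlock regime. The $T_{grace}$ guard is the one escape hatch: if an agent had not yet stabilized its model it could legitimately decline to add its own record, and then a deadlock with a missing self-ID would be possible without contradiction. So the crux is to argue that the deadlock under study is a steady-state collaboration phenomenon --- one in which the \textit{partial-blocks} have stopped growing only because agents cannot exchange \emph{new} records, not because their own models are still warming up --- and hence that every agent has already passed $T_{grace}$. Once that stability is pinned down to the paper's standing assumptions, the remainder of the argument is a one-line contradiction.
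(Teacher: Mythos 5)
Your proof is correct and follows essentially the same route as the paper's: assume a deadlock with some agent missing its own ID, and derive a contradiction from step C.1 of the protocol, which would force that agent to insert its own record and thereby violate $D(i)$. Your additional care about the $T_{grace}$ stability guard is a reasonable refinement of the same argument, which the paper handles implicitly via its standing assumptions.
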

\begin{proof}
 Let's assume that $\forall i \in A : D(i)$ and that $\exists i \in A : pb_i \cap \{i\}=\emptyset$. This cannot be because of step \ref{step:add_self} of the protocol: every agent $i$ adds record `$i$' to $pb_i$ if $pb_i \cap \{i\}=\emptyset$. Therefore, $\forall i \in A : D(i) \rightarrow \forall i \in A : pb_i \cap \{i\}=\{i\}$. 
\end{proof}

\begin{lemma}\label{lemma:have_eachother}
If there was no update after an epoch (deadlock), and agents $i$ and $j$ are neighbors, then both agents have IDs $i$ and $j$ in their partial blocks. Formally, $\forall i \in A : D(i)\rightarrow \forall (i,j) \in E : pb_i \cap \{i,j\}=\{i,j\}\}$.
\end{lemma}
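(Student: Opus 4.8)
The plan is to argue by contradiction in the same style as Lemma~\ref{lemma:have_own}, exploiting that under a deadlock no agent's partial-block changes across a full epoch, together with the fact (established in Section~\ref{subsec:conflicts}) that \textit{model-attestation} never fails when all agents draw observations from the same distribution. Because $\alpha$ is chosen so that every combined partial-block is self-similar to every local model, the validity and attestation clauses of step~\ref{step:pb_accept} always pass, so the \emph{only} way that step can fail to fire on a received chain is the length test. A deadlock therefore forces every received partial-block to never be strictly longer (in the receiver's perspective, i.e.\ after deleting the receiver's own record) than the local one. I would also note that steps~\ref{step:add_self} and \ref{step:pb_message} cannot themselves modify $pb_i$ under deadlock (the self-record is already present by Lemma~\ref{lemma:have_own}, and \ref{step:pb_message} only forwards chains), so every potential update is funneled through step~\ref{step:pb_accept}.

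First I would fix an edge $(i,j)\in E$ and assume for contradiction that $j\notin pb_i$. By Lemma~\ref{lemma:have_own} we already have $i\in pb_i$ and $j\in pb_j$. Since $i$ and $j$ are neighbors, the definition of an epoch guarantees that within the epoch $j$ transmits $pb_j$ to $i$ and $i$ transmits $pb_i$ to $j$. For neither transmission to trigger an update (as $D(i)$ and $D(j)$ demand), the length clause of step~\ref{step:pb_accept} must block both.

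Next I would convert these two non-strict length inequalities into a numerical contradiction. Writing $n_i=|pb_i|$ and $n_j=|pb_j|$: the block $j$ receives from $i$ has perspective-length $|pb_i\setminus\{j\}|=n_i$ (using $j\notin pb_i$), while $j$'s own perspective-length is $|pb_j\setminus\{j\}|=n_j-1$; the no-update condition at $j$ gives $n_i\le n_j-1$, hence $n_i<n_j$. Symmetrically, the no-update condition at $i$ reads $|pb_j\setminus\{i\}|\le |pb_i\setminus\{i\}|=n_i-1$. Here I would split on whether $i\in pb_j$: if $i\in pb_j$ the left side equals $n_j-1$, yielding $n_j\le n_i$; if $i\notin pb_j$ it equals $n_j$, yielding $n_j<n_i$. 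Either way $n_j\le n_i$, contradicting $n_i<n_j$. Thus $j\in pb_i$, and applying the identical argument to the edge $(j,i)$ gives $i\in pb_j$; combined with Lemma~\ref{lemma:have_own} this yields $pb_i\cap\{i,j\}=\{i,j\}$.

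The main obstacle I anticipate is bookkeeping the perspective-length consistently on both sides of each exchange, since the asymmetry introduced by whether $i\in pb_j$ is precisely what drives the two halves of the contradiction; getting the $\le$ versus $<$ bounds right in that membership case-split is where an error would most plausibly creep in. A secondary point worth stating explicitly is the reduction that step~\ref{step:pb_accept} can only be blocked by its length clause, which relies on the same-distribution assumption of Section~\ref{subsec:conflicts} to discharge the validity and attestation conditions.
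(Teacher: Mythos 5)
Your proof is correct and takes essentially the same route as the paper's: a contradiction argument resting on Lemma~\ref{lemma:have_own}, the guaranteed mutual exchange of partial blocks within an epoch, and the perspective-length test of step~\ref{step:pb_accept} (with attestation discharged by the same-distribution assumption). The only difference is organizational—you derive the two no-update inequalities $n_i<n_j$ and $n_j\le n_i$ and collide them directly, whereas the paper splits into three cases on the relative sizes of $pb_i$ and $pb_j$; the substance is identical.
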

\begin{proof}
If we prove $\forall i \in A : D(i) \rightarrow \forall (i,j) \in E : pb_i \cap \{i\} \cap pb_j = \{i\}$, then we have proven Lemma \ref{lemma:have_eachother} by symmetry: Let's assume $\forall i \in A : D(i)$ but $\exists (i,j) \in E : pb_j \cap \{i\} = \emptyset$. This could not be true because (1) agent $i$ shared its $pb$ with agent $j$ and vice versa (step \ref{step:broadcast} of the protocol), (2) $pb_i \cap \{i\}=\{i\}$ and $pb_j \cap \{j\}=\{j\}$ (Lemma \ref{lemma:have_own}, and (3) in all cases, agent $i$ would have replaced it's pb with $j$'s (or vice versa):

\textit{Case 1}: $|pb_i|=|pb_j|$ and $pb_i$ either has $j$ or not. If $pb_i$ has $j$ then agent $i$ should have replaced $pb_i$ with $pb_j$ because $|pb_i \oplus \{i\}|<|pb_j \oplus \{i\}|$ (step \ref{step:pb_accept} of the protocol). Similarly, if $pb_i$ doesn't have $j$ then agent $j$ would have replaced $pb_j$ with $pb_i$ because $|pb_j\oplus \{j\}|<|pb_i \oplus \{j\}|$.

\textit{Case 2}: If $|pb_i|<|pb_j|$ then agent $i$ would have replaced $pb_i$ with $pb_j$ because $|pb_i \oplus\{i\}|<|pb_j \oplus \{i\}|$ since $pb_j \cap \{i\}=\emptyset$.

\textit{Case 3}: If $|pb_i|>|pb_j|$ then there is only one case where $|pb_i \oplus \{i\}|=|pb_j \oplus \{i\}|$ resulting neither agent performing an update: $pb_i \cap \{j\}=\{j\}$ and $|pb_i|=|pb_j|-1$. However, because $pb_j \cap \{j\}=\{j\}$ (Lemma \ref{lemma:have_own}), $|pb_j \oplus \{j\}|<|pb_i \oplus \{j\}|$ so agent $j$ would have replaced $pb_j$ with $pb_i$
Therefore, we conclude that $\forall i \in A : D(i) \rightarrow \forall (i,j) \in E : pb_i \cap \{i\} \cap pb_j=\{i\}$, so Lemma \ref{lemma:have_eachother} holds true.
\end{proof}

\begin{lemma}\label{lemma:same_len}
Lemma 3. If there was no update (deadlock) then all partial blocks have the same length. Formally, $\forall i \in A : D(i) \rightarrow \forall ij \in A : |pb_i|=|pb_j|$
\end{lemma}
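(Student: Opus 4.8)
The plan is to reduce this global statement to a purely local one about neighboring agents and then propagate it across the connected graph $G$. First I would establish the key local fact: under the deadlock hypothesis $\forall i\in A : D(i)$, any two neighboring agents already have partial blocks of equal length, i.e. $\forall (i,j)\in E : |pb_i|=|pb_j|$. This is exactly what the case analysis inside the proof of Lemma \ref{lemma:have_eachother} delivers. There, Case 2 ($|pb_i|<|pb_j|$) and Case 3 ($|pb_i|>|pb_j|$) both show that whenever neighbors have partial blocks of unequal length, one of the two agents would have replaced its partial block with the other's via step \ref{step:pb_accept}, producing an update and contradicting $D$. Hence, under the deadlock, only Case 1 is consistent, giving $|pb_i|=|pb_j|$ for every edge.

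Second, I would lift this edge-wise equality to a global equality using connectivity. Since we assume $G$ is a single connected component, any two agents $i,j\in A$ are joined by a path $i=v_0,v_1,\ldots,v_m=j$ with $(v_r,v_{r+1})\in E$ for each $r$. Applying the local equality along each edge of the path yields $|pb_{v_0}|=|pb_{v_1}|=\cdots=|pb_{v_m}|$, so $|pb_i|=|pb_j|$. As $i$ and $j$ were arbitrary, all partial blocks share a common length, which is precisely the claim $\forall ij\in A : |pb_i|=|pb_j|$; this propagation is a routine induction on path length.

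The real content lies in the local (neighbor) step, and the cleanest route is to \emph{isolate and cite} that sub-argument from Lemma \ref{lemma:have_eachother} rather than redo the step-\ref{step:pb_accept} length comparison from scratch; so the main obstacle is organizational (making the dependence on the earlier case analysis explicit) rather than mathematical. The one technical point I would flag is that the comparisons in step \ref{step:pb_accept} are taken on partial-block length \emph{excluding} the comparing agent's own record (the quantities $|pb_i\oplus\{i\}|$ versus $|pb_j\oplus\{i\}|$), whereas the lemma speaks of $|pb_i|$ outright. By Lemma \ref{lemma:have_own} every agent has its own record and by Lemma \ref{lemma:have_eachother} neighbors have each other's, so converting between ``length with own record'' and ``length excluding own record'' amounts to a harmless $\pm 1$ bookkeeping adjustment that does not affect the equality conclusion; I would state this conversion once to keep the deduction clean.
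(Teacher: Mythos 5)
Your proposal is correct and, in substance, matches the paper's argument: the paper also proceeds by contradiction on a pair of neighbors with $|pb_i|<|pb_j|$, invokes Lemma \ref{lemma:have_eachother} to conclude $pb_i \cap \{i,j\} \cap pb_j = \{i,j\}$, and then observes that $|pb_i \oplus \{i\}| < |pb_j \oplus \{i\}|$ forces agent $i$ to adopt $pb_j$ via step \ref{step:pb_accept}, contradicting the deadlock. Two remarks. First, your explicit propagation along paths in the connected graph $G$ is a genuine (if routine) improvement: the paper's proof only establishes the edge-wise equality and leaves the extension to arbitrary pairs implicit, even though the lemma quantifies over all of $A$. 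Second, your plan to \emph{cite} Cases 2 and 3 from inside the proof of Lemma \ref{lemma:have_eachother} is slightly misdirected: those cases are argued under the contradiction hypothesis $pb_j \cap \{i\} = \emptyset$ of that lemma (which is why, e.g., Case 2 can assert $|pb_j \oplus \{i\}| = |pb_j|$), so they do not apply verbatim once Lemma \ref{lemma:have_eachother} is established and both blocks \emph{do} contain $i$. The correct mechanism is the one you supply in your final paragraph: since both $pb_i$ and $pb_j$ contain $i$ (by Lemmas \ref{lemma:have_own} and \ref{lemma:have_eachother}), removing $i$'s record subtracts one from each length, the strict inequality survives, and step \ref{step:pb_accept} fires. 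That $\pm 1$ bookkeeping observation is not a side note --- it \emph{is} the paper's proof --- so I would promote it to the body of the argument rather than reusing the earlier case analysis.
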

\begin{proof}
Let's assume that $\forall i \in A : D(i)$ but $\exists(i,j) \in E : |pb_i|<|pb_j|$. However, $pb_i \cap \{i,j\} \cap pb_j=\{i,j\}$ (Lemma \ref{lemma:have_eachother}). This means that $|pb_i \oplus \{i\}|<|pb_j \oplus \{i\}|$ so agent $i$ would have set $pb_i=pb_j$, and $\forall i \in A : D(i)$ would not hold true. Therefore, it must be that $|pb_i|=|pb_j|$.
\end{proof}
	
\begin{lemma}\label{lemma:bad_neighbors}
Lemma 4. If there was no update (deadlock), then there exist two neighbors with different partial blocks, of same length. Formally, $\forall i \in A : D(i) \rightarrow \exists(i,j) \in E : pb_i \neq pb_j$.
\end{lemma}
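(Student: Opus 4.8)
The plan is to argue by contradiction, leveraging the three preceding lemmas and the connectivity assumption on $G$. I would suppose that a deadlock occurs (so $\forall i \in A : D(i)$) but, contrary to the claim, \emph{every} pair of neighbors shares an identical partial block, i.e.\ $\forall (i,j) \in E : pb_i = pb_j$. The goal is to derive a contradiction from this supposition together with Lemmas \ref{lemma:have_own} and \ref{lemma:same_len}.

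First I would propagate equality across the whole graph. Since $G$ is a single connected component and adjacent agents are assumed to have equal partial blocks, equality transfers along any path: for any two agents there is a path of neighbors linking them, and stepping along it forces their partial blocks to coincide. Hence all agents share one common partial block, i.e.\ there is a fixed set $P \subseteq A$ with $pb_i = P$ for every $i \in A$.

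Next I would bound the size of $P$. The deadlock predicate $D$ requires $|pb_i| < L$ for all $i$, and the proof operates under the standing assumption $L = |A|$, so $|P| < |A|$. Therefore $P$ is a \emph{proper} subset of $A$, which guarantees the existence of some agent $k \in A \setminus P$. The contradiction then comes directly from Lemma \ref{lemma:have_own}: in any deadlock that lemma gives $k \in pb_k$, but $pb_k = P$ and $k \notin P$, which is impossible. This contradiction shows the supposition fails, so there must exist neighbors $(i,j) \in E$ with $pb_i \neq pb_j$; by Lemma \ref{lemma:same_len} these necessarily have the same length, matching the informal phrasing of the statement.

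The only delicate step is the equality-propagation argument across the connected graph, where the connectivity hypothesis is essential—without it, equality among each connected component's partial blocks would not force a single global $P$. Everything else is a short combination of the deadlock length bound ($|P| < |A|$) with Lemma \ref{lemma:have_own}; there is no real computation, and I expect the whole argument to fit in a few lines once the propagation is spelled out.
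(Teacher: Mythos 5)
Your proposal is correct and follows essentially the same route as the paper: assume all neighboring partial blocks are equal, propagate that equality over the connected graph to a single common partial block, and contradict the deadlock condition via Lemma \ref{lemma:have_own}. The only cosmetic difference is that you exhibit a missing agent $k \notin P$ while the paper concludes directly that the common block must have size $L$; these are the same contradiction stated in opposite directions.
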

\begin{proof}
Let's assume $\forall i \in A : D(i)$ but $\forall (i,j) \in E : pb_i=pb_j$. According to Lemma \ref{lemma:have_own}, all agents have their own ID in their partial block. However, if all agents have the same partial block, then that means that $|pb_i|=L$ and $\forall i \in A : D(i)$ does not hold true. Therefore, it must be that $\exists (i,j) \in E : pb_i \neq pb_j$.
\end{proof}

\begin{theorem}\label{theorem:nodeadlock_eqL}
Given a set of agents $A$, the connectivity network $G$, and $L=|A|$, there will never be a deadlock. Formally, $L=|A|\rightarrow \nexists i \in A: D(i)$.
\end{theorem}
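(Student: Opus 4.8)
The plan is to argue by contradiction beginning, as the statement requires, from a single deadlocked witness: assume $\exists\, i_0 \in A : D(i_0)$. I read $D(i)$ as the assertion that agent $i$ rests in a configuration the protocol reproduces unchanged while $|pb_i| < L$ (a genuine deadlock is exactly such a self-reproducing, incomplete state). The obstacle is that Lemmas \ref{lemma:have_own}--\ref{lemma:bad_neighbors} are all stated under the \emph{global} hypothesis $\forall i \in A : D(i)$, so they cannot be fired from one witness; hence the first and main task is to upgrade the existential hypothesis to the universal one, i.e. to prove $\exists\, i_0 : D(i_0) \rightarrow \forall i \in A : D(i)$ when $L = |A|$. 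I will use throughout that, because every partial block is a subset of $A$ and $L = |A|$, the unique full-length partial block is $A$ itself, so ``agent $m$ is complete'' means exactly $pb_m = A$.

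The bridge---the hard part---rests on completion-propagation together with the connectivity of $G$. First, partial-block lengths never decrease during the building phase: by steps \ref{step:add_self} and \ref{step:pb_accept} an agent only ever replaces $pb_i$ by a received block that is strictly longer excluding its own record, and then re-adds itself, so $|pb_i|$ is non-decreasing and bounded by $L$ and the system stabilizes. Next, suppose some agent $m$ were complete ($pb_m = A$). For any neighbor $x$ of $m$, step \ref{step:add_self} gives $x \in pb_x$, so excluding $x$'s own record the local block has length $|pb_x| - 1$ while the received $A$ has length $|A \setminus \{x\}| = L - 1$; thus whenever $x$ is incomplete we have $L - 1 > |pb_x| - 1$ and $x$ adopts $A$, becoming complete (step \ref{step:pb_accept}). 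By the connectivity of $G$ this drags \emph{every} agent to completion, contradicting $|pb_{i_0}| < L$. Hence no agent ever completes, and since lengths stabilize strictly below $L$ the stationary configuration satisfies $\forall i \in A : D(i)$. (The same-distribution property of Section \ref{subsec:conflicts} ensures the validity and \emph{model-attestation} checks never block these adoptions.)

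With $\forall i : D(i)$ secured, I invoke the lemmas verbatim: every agent holds its own ID (Lemma \ref{lemma:have_own}), neighbors hold each other's IDs (Lemma \ref{lemma:have_eachother}), all partial blocks share one common length (Lemma \ref{lemma:same_len}), and yet some neighboring pair $i, j$ has $pb_i \neq pb_j$ (Lemma \ref{lemma:bad_neighbors}). Since $|pb_i| = |pb_j|$ but the sets differ, there is an ID $x \in pb_j \setminus pb_i$. When $i$ receives $j$'s chain the equal-length, different-ID condition of step \ref{step:pb_message} triggers, so $i$ forwards its chain to $x$, an agent not previously in $\Gamma_i$. At the following epoch $x$ receives $pb_i$; excluding its own record the received block has length $|pb_i|$ (as $x \notin pb_i$) while $x$'s own has length $|pb_x| - 1$ (as $x \in pb_x$), and the common length $|pb_x| = |pb_i|$ yields $|pb_i| > |pb_x| - 1$, so $x$ adopts $pb_i$ and its partial block changes. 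This violates $D(x)$, contradicting $\forall i : D(i)$ and therefore the original assumption $\exists\, i_0 : D(i_0)$; I conclude $L = |A| \rightarrow \nexists\, i \in A : D(i)$.

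I expect the completion-propagation bridge of the second paragraph to be the crux, because it is precisely the step a naive ``assume $\forall i : D(i)$'' argument silently omits: one must justify that lengths are monotone (so a stationary configuration is actually reached) and that a lone complete agent necessarily pulls the whole connected graph to completion, which is what rules out the mixed scenario of one stuck agent coexisting with progressing ones.
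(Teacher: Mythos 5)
Your endgame coincides with the paper's proof almost verbatim: from $\forall i\in A:D(i)$ you invoke Lemma \ref{lemma:bad_neighbors} to get neighbors $i,j$ with $pb_i\neq pb_j$, use Lemmas \ref{lemma:have_eachother} and \ref{lemma:same_len} to extract an ID $x\in pb_j\setminus pb_i$, fire step \ref{step:pb_message} so that $i$ messages $x$ directly, and conclude $|pb_i \oplus \{x\}| > |pb_x \oplus \{x\}|$ forces $x$ to adopt, contradicting stationarity. That is exactly the paper's argument, down to the length bookkeeping. Where you diverge is the preliminary ``bridge'' from $\exists i_0:D(i_0)$ to $\forall i:D(i)$. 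The paper does not need this bridge, because it \emph{defines} a deadlock as the global condition $\forall i\in A:D(i)$ (see the sentence preceding Lemma \ref{lemma:have_own}) and simply assumes that for contradiction; the formal $\nexists i\in A:D(i)$ in the theorem statement is best read as shorthand for the negation of that global condition. Your instinct that the literal existential-to-universal gap matters is actually well-founded in a way worth making explicit: under the paper's own one-epoch definition $D(i): pb_i^{(t)}=pb_i^{(t+1)} \land |pb_i^{(t)}|<L$, the literal claim $\nexists i:D(i)$ is \emph{false} --- e.g., on a path $a$--$b$--$c$ with $L=3$, agent $a$ with $pb_a=\{a,b\}=pb_b$ receives only $b$'s identical equal-length chain during an epoch and changes nothing, so $D(a)$ holds at that epoch with no system-wide deadlock. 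So your bridge cannot succeed for the paper's literal $D$; it succeeds only under your reinterpretation of $D(i)$ as a \emph{perpetually} self-reproducing incomplete state, which is a strictly different (and arguably more honest) formalization of ``deadlock at agent $i$.''

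Two mechanical points in the bridge deserve tightening, though both are repairable. First, completion propagation does not run through step \ref{step:pb_accept}: a partial block of length $L$ is by definition a closed block, so the puller is the longer-chain rule (a completing agent's chain gains a full block and neighbors adopt it wholesale), and note also that $|pb_i|$ is \emph{not} globally monotone --- it resets when a block closes --- so your monotonicity claim must be explicitly restricted to the no-completion regime, which is exactly the regime your contradiction with $D(i_0)$ places you in. Second, ``lengths stabilize'' does not immediately give ``partial blocks stabilize'': step \ref{step:pb_accept} permits an adoption at \emph{equal} lengths when $x\notin pb^{\mathrm{rec}}$ (since $|pb^{\mathrm{rec}}\oplus\{x\}| = |pb^{\mathrm{rec}}| > |pb_x|-1$), so you need the extra half-step you only gesture at: any such adoption is followed by step \ref{step:add_self} re-adding $x$, which would increase the length and contradict stabilization, hence adoptions eventually cease and the configuration is genuinely stationary. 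With those repairs your argument is sound under your reading, and it buys something the paper's proof does not: an explicit ruling-out of the mixed scenario (one permanently stuck agent coexisting with progressing ones), which the paper sidesteps by fiat through its global definition of deadlock.
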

\begin{proof}
Let's assume that $L=|A|$ but $\forall i \in A : D(i)$ (there is a deadlock). This would mean that $\exists (i,j) \in E : pb_i \neq pb_j$ (Lemma \ref{lemma:bad_neighbors}). If so, it must be that there is at least one ID in $pb_j$ that is not in $pb_i$ (via Lemmas \ref{lemma:have_eachother} and \ref{lemma:same_len}). Let's say that one of these IDs is that of agent $k$. When agent $j$ shared $pb_j$ with agent $i$ (step \ref{step:broadcast} of the protocol), agent $i$ would have sent $pb_i$ directly to its non-neighbor $k$ (step \ref{step:pb_message} of the protocol). Since $|pb_k|=|pb_i|$ (Lemma \ref{lemma:same_len}), and $|pb_k \oplus \{k\}|<|pb_i \oplus \{k\}|$ because $pb_i$ does not have $k$, agent $k$ must have replaced $pb_k$ with $pb_i$. Therefore, it is impossible for $\forall i \in A : D(i)$ to hold true since in the next epoch, agent $k$ would have added itself to its partial block making $|pb_k|>|pb_i|$.  
\end{proof}
The continuation can be seen through Lemma \ref{lemma:same_len}: it must be that all other agents will grow their partial blocks to the same length as $pb_k$. Then, if there is another deadlock, the above process repeats until $\exists i \in A : |pb_i|=|A|$ and the block is closed.

\begin{corollary}\label{corr:deadlock_degree}
Corollary 1. Given a set of agents $A$, the connectivity network $G$, and $L \leq |A|$, there will never be a deadlock. Formally, $L \leq|A| \rightarrow \nexists i \in A : D(i)$.
\end{corollary}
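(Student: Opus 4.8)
The plan is to derive Corollary~\ref{corr:deadlock_degree} almost for free from Theorem~\ref{theorem:nodeadlock_eqL}, by observing that the entire $L=|A|$ argument --- the chain of Lemmas~\ref{lemma:have_own}--\ref{lemma:bad_neighbors} together with the contradiction that closes the theorem --- never actually used the equality $L=|A|$ as such, but only the inequality $|A|\ge L$. So my first step is to audit each ingredient of the theorem and mark exactly where the value of $L$ is invoked, with the goal of showing that every use survives the weakening of the hypothesis from $L=|A|$ to $L\le|A|$.

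Carrying out that audit, I expect Lemmas~\ref{lemma:have_own}, \ref{lemma:have_eachother}, and \ref{lemma:same_len} to be entirely $L$-free: their proofs appeal only to the deadlock hypothesis $\forall i\in A:D(i)$, to steps~\ref{step:add_self} and \ref{step:broadcast} of the protocol, and to comparisons between partial-block lengths, none of which mention $L$. The sole substantive appearance of $L$ is in Lemma~\ref{lemma:bad_neighbors}: there, assuming all neighbours share a partial block forces, via connectivity of $G$ and Lemma~\ref{lemma:have_own}, a single common partial block containing every agent ID, hence of length $|A|$; this contradicts the clause $|pb_i|<L$ of $D(i)$ precisely because $|A|\ge L$. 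Since $|A|\ge L$ is exactly the hypothesis of the corollary, Lemma~\ref{lemma:bad_neighbors} goes through unchanged, and with it all four lemmas hold under the weaker assumption $L\le|A|$.

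The final step is then to replay the proof of Theorem~\ref{theorem:nodeadlock_eqL} verbatim: assume a deadlock, use Lemmas~\ref{lemma:same_len} and \ref{lemma:bad_neighbors} to extract neighbours $i,j$ with $pb_i\ne pb_j$ of equal length, choose $k\in pb_j\setminus pb_i$, and invoke step~\ref{step:pb_message} so that agent $i$ forwards $pb_i$ to the non-neighbour $k$; because $k\notin pb_i$ and $|pb_k|=|pb_i|$, agent $k$ adopts $pb_i$ and then grows its own block, contradicting $D(k)$. This contradiction step is again $L$-free. The conceptual content is that Theorem~\ref{theorem:nodeadlock_eqL} really establishes the $L$-independent combinatorial fact that no frozen configuration can have every partial block of length strictly below $|A|$; the corollary is then just the remark that a deadlock for parameter $L\le|A|$ is a frozen configuration with all lengths below $L$, hence below $|A|$ --- a situation the theorem already excludes.

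The main obstacle is bookkeeping rather than a hard estimate: the deadlock predicate $D$ is itself parameterised by $L$ through its clause $|pb_i|<L$, so I must make sure that shrinking $L$ below $|A|$ does not silently invalidate a lemma. The one delicate check is that, in a configuration where all partial blocks have length below $L$, no block has reached its closing length $L$, so the only genuinely $L$-dependent protocol action --- closing a full block --- is never triggered; consequently the set of available growth moves is identical to the one in the $L=|A|$ dynamics, and the theorem's growth argument transfers without change. Once this is confirmed, namely that $|A|\ge L$ is the \emph{only} property of $L$ used anywhere, decreasing $L$ can only make the contradiction easier to reach, and the corollary follows.
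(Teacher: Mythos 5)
Your proposal is correct, but it takes a genuinely different route from the paper. The paper's proof of Corollary~\ref{corr:deadlock_degree} is a two-line dynamic argument: it invokes Theorem~\ref{theorem:nodeadlock_eqL} as a black box to conclude that some agent's partial block keeps growing toward length $|A|$, and then observes that since step~\ref{step:add_self} grows partial blocks one record at a time, any target length $L \leq |A|$ is necessarily reached along the way, at which point the block closes --- so no deadlock with parameter $L$ can persist. You instead re-open the proof of the theorem and audit every lemma for its dependence on $L$, correctly identifying that Lemmas~\ref{lemma:have_own}--\ref{lemma:same_len} and the closing contradiction of the theorem are $L$-free, and that the sole use of $L$ is in Lemma~\ref{lemma:bad_neighbors}, where the derived common partial block of length $|A|$ contradicts $|pb_i|<L$ exactly when $|A|\geq L$; hence the entire argument survives the weakening to $L\leq|A|$. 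Your route is longer but more self-contained and arguably tighter: the paper's one-liner implicitly relies on a coupling between the $L$-parameterized dynamics and the $|A|$-parameterized dynamics (and is phrased sloppily --- a partial block can never exceed length $|A|$), whereas your version makes explicit that the deadlock predicate $D$ itself depends on $L$ and checks that no $L$-dependent protocol action (block closing) is available in a purported deadlock configuration, which is precisely the point the paper glosses over. What the paper's approach buys is brevity and reuse of the theorem as stated; what yours buys is a direct static contradiction under the weaker hypothesis, with no appeal to the evolution of the system over multiple epochs.
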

\begin{proof}
The proof is trivial via Theorem \ref{theorem:nodeadlock_eqL} since there exists an agent that will reach a partial block length longer than $|A|$, and step \ref{step:add_self} of the protocol ensures that partial block grow in length by one at time.
\end{proof}

As a side note, if step \ref{step:pb_message} (direct messaging) is removed from the protocol, the system will reach a partial block length of the maximum degree plus one without any deadlocks:
\begin{theorem}\label{theorem:nodeadlock_leqL}
Theorem 2. Given a set of agents $A$, the connectivity network $G$, and $L=\Delta(G)+1$, there will never be a deadlock. Formally,  $L=\Delta(G)+1 \rightarrow \nexists i \in A : D(i)$.
\end{theorem}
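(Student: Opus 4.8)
The plan is to argue by contradiction and to reuse Lemmas \ref{lemma:have_own} and \ref{lemma:have_eachother} unchanged, closing with a degree-counting bound at a vertex of maximum degree. The first point I would settle is that deleting the direct-messaging step (step \ref{step:pb_message}) does not affect Lemmas \ref{lemma:have_own} and \ref{lemma:have_eachother}: their proofs invoke only the self-insertion rule (step \ref{step:add_self}), the broadcast rule (step \ref{step:broadcast}), and the partial-block replacement rule (step \ref{step:pb_accept}), none of which mention step \ref{step:pb_message}. Consequently, in any deadlock of the protocol with step \ref{step:pb_message} removed we still have $v \in pb_v$ for every agent $v$, and $j \in pb_v$ for every neighbor $j \in \Gamma_v$.

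With those two facts in hand, I would run the following steps. Suppose, for contradiction, that $L = \Delta(G)+1$ and yet a deadlock occurs, so $D(i)$ holds for every $i \in A$; in particular $|pb_i| < L$ for all $i$. Fix a vertex $v^*$ of maximum degree, $|\Gamma_{v^*}| = \Delta(G)$. By Lemma \ref{lemma:have_own}, $v^* \in pb_{v^*}$, and by Lemma \ref{lemma:have_eachother}, every one of the $\Delta(G)$ neighbors in $\Gamma_{v^*}$ also lies in $pb_{v^*}$. Since neighbors of $v^*$ are distinct from $v^*$ and from one another, these $\Delta(G)+1$ IDs are pairwise distinct, so $|pb_{v^*}| \geq \Delta(G)+1 = L$. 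This contradicts the deadlock requirement $|pb_{v^*}| < L$, so no deadlock can exist, which is exactly $L = \Delta(G)+1 \rightarrow \nexists i \in A : D(i)$.

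I expect the only genuine obstacle to be the first paragraph: one must verify carefully, case by case against the proofs already given, that Lemmas \ref{lemma:have_own} and \ref{lemma:have_eachother} do not secretly depend on step \ref{step:pb_message}. Once this is confirmed, the core of the theorem is a single counting inequality --- a maximum-degree vertex is forced to collect its own ID together with all $\Delta(G)$ of its neighbors' IDs, which already saturates the block length $L$, so the block closes instead of stalling. This also matches the paper's surrounding remark that, absent direct messaging, partial blocks are guaranteed to grow to length $\Delta(G)+1$.
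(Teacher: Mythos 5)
Your proposal is correct and follows essentially the same route as the paper: fix a maximum-degree agent, invoke Lemmas \ref{lemma:have_own} and \ref{lemma:have_eachother} to force $\Delta(G)+1$ distinct IDs into its partial block, and contradict the deadlock condition $|pb_i|<L$. Your added check that those two lemmas do not rely on step \ref{step:pb_message} is a worthwhile bit of rigor the paper leaves implicit, but it does not change the argument.
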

\begin{proof}
Let's assume that agent $i$ has the maximum degree. According to Lemma \ref{lemma:have_eachother}, $pb_i$ must have all of its neighbor's IDs and $i$ before a deadlock can occur. Therefore, there can't be a deadlock because $|pb_i|=|\Delta(G)+1|=L$. 
\end{proof}

\subsection{Peer Discovery}\label{subsec:peerdisc}
To broadcast the latest chain, an agent must know the IP addresses of the receiving agents. It is important to note that an agent does not need to broadcast to all other agents. Instead, an agent broadcasts to $b$ other agents where $b$ is much smaller than the population size. In practice, $b$ can be in the order a tens or hundreds where there is a trade-off between the rate at which information is shared across the network (iterations of $T$) and the amount of work that is put into each broadcast. Regarding the discovery and selection of peers, we suggest that the Ethereum's p2p discovery protocol \cite{Discover45:online} be used and that an agent should periodically draw new peers at random.

\subsection{Maintaining Software Versions}\label{subsec:branching}
As time goes on, the target application may receive software updates during its software life-cycle. Although the app's new behavior will be accepted as normal (due to the majority consensus), there may be other devices where not yet updated or may never be updated. To ensure that these outdated devices aren't `forced' to use an incompatible model, we suggest that blockchain should support branching. In this approach, the chain forms a version tree were devices with newer versions can `fork' off to. To enable this the following additions are made to the protocol: (1) the respective software version must be stored in the metadata of each block, (2) multiple partial blocks of different version can be stored at the end of a chain, (3) if a partial block is completed but it has a different version than the current branch, then a separate chain is `forked' from that point, and (4) agents always follow the longest chain with their version. 

\section{System Evaluation}\label{sec:eval}
In this section, evaluate the proposed collaboration framework: the experiment testbed, parameters, results, and observations. A video demo of the framework is available online.\footnote{\textit{The short demo of the framework protecting 48 Pis running web servers can be found at \texttt{\url{https://youtu.be/T4t_SnTJV3w}}}}

\subsection{Experiment Setup}
Our experiments were composed of four aspects: the (1) test environment, (2) implementation, (2) target applications, and (3) attack scenarios. We will now discuss each of these aspects in detail.

\subsubsection{Test Environment}
We built a LAN which served as a simulation platform for emulating a distributed IoT environment (Fig. \ref{PiBoard}). 
This network involves 48 Raspberry Pis connected together through a single large switch.

In our environment, each Raspberry Pi was equipped with additional boards (sheilds) and sensors. For example, the PiCamera and Pibrella Board\footnote{\textit{Pibrella module can be found at \texttt{\url{www.pibrella.com}}}} which provides programmatic access to three LED lamps and simple 8-bit PC speaker. For each experiment, a target application (IoT software) was loaded and executed on all of the devices, along with an agent.

The source code for the agent can be found on GitHub.\footnote{The agent's code from the experiment can be found at \texttt{https://git.io/vAIvd}}

\subsubsection{Agent Implementation}
To implement Algorithm \ref{alg:monitor} (monitor), we implemented the agent using OS and CPU features. Specifically, we used the performance counters API and Core-sight (on ARM) and Last-Branch (on Intel). By using these libraries and features, we were able to track the application's control-flow in an asynchronous manner.

In our implementation, the kernel fills a large ring-buffer with observed jump and branch addresses. 
When the OS scheduler switches to the agent, the agent iterates over the new entries in the buffer and updates $M^{(\ell)}$ accordingly. To improve performance further, the agent was written entirely in C++. However, the code was not optimized to its full potential.

The underlying network protocol we used in our experiment was the UDP Multicast protocol, though in practice, the Bitcoin or Etherium P2P neighbor discovery algorithm should be used. The following lists the parameters used in all experiences, unless noted otherwise:
\begin{itemize}
	\item \textbf{\boldmath$T$ (Processing interval)}: one minute
	\item \textbf{\boldmath$L$ (Block size)}: $20$
	\item \textbf{\boldmath$p_a$ (Percent of reporting devices required to include a transition)}: $25\%$
	\item \textbf{\boldmath$\alpha$ (Verification distance)}: $0.05$
	\item \textbf{\boldmath$p_{thr}$ (Anomaly score threshold)}: $0.012$
	\item \textbf{\boldmath$k$ (Probability averaging window)}: $10,000$
	\item \textbf{Region size}: $256$ Bytes
\end{itemize}

\begin{figure*}	
	\centering
	\includegraphics[width=\textwidth]{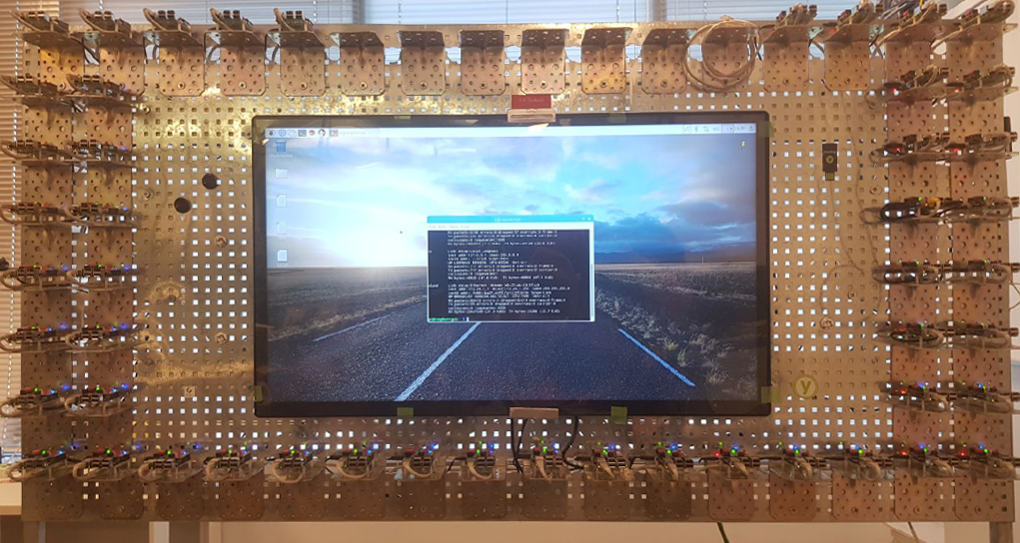}
	\caption{IoT simulation testbed consisting of 48 Raspberry Pis}
	\label{PiBoard}
\end{figure*}

\subsubsection{Target Applications}
Every application has a different control-flow, and reacts differently to environmental stimuli. Therefore, we evaluated the framework using several different target applications:

\begin{description}
	\item[Smart Light] Smart lights can perform custom functionalities programmed by the user. By evaluating the framework on a smart light, we are able to determine whether  each agent is able to learn its functionality, and how the propagation of these behaviors affect other agents.
	To implement the smart light's software, we combined several Open-Source projects \cite{mongoose, pibrellaGitHub, WiringPi}. The final application contained a vulnerable web-based interface for controlling the light's features.
	
	\item[Smart Camera] Smart cameras often consume a significant amount of resources to perform real-time image processing. By monitoring such an application, we are able to evaluate how well the framework performs in resource heavy applications. The application which we used monitors a video feed and sends an alert when it detects a movement. The alert is sernt to a control server and is accompanied with a short video or image of the event. A user interfaces with the camera via the server, and can either (1) change its configuration or (2) view the camera's current frame. We included with the final application a null dereference vulnerability in the communication process with the control server.
	
	\item[Router] Routers are widespread and provide Internet facing IPs (i.e., are not hidden behind a NAT). They are a good example of vulnerable IoTs which have been the target of many recent attacks (e.g., Mirai and the VPNFilter malware\footnote{\texttt{\url{https://www.symantec.com/blogs/threat-intelligence/vpnfilter-iot-malware}}}). By evaluating the framework on a router's software, we are able to consider how well our agent handles complex control-flows. Routers typically have a Linux kernel, and provide their functionality via several different applications. In our evaluation, we chose to target the Hostapd (Host access point daemon) applicaiton. Hostapd is a user space software access point capable of turning normal network interface cards into access points and authentication servers. We took version 2.6 of Hostapd which is vulnerable to a known replay attack.\footnote{The code is available at \texttt{\url{https://github.com/vanhoefm/krackattacks-scripts}}}
\end{description}

\subsubsection{Attack Scenarios}
To understand the framework's detection capabilities, we evaluated how well the agents can detect the exploitation of different vulnerabilities and the execution of malicious code:

\begin{description}
	\item[Buffer Overflow] When writing information into a buffer, without proper boundary checks, it is possible to write more data than the buffer's size. 
	When this occurs, the data overflows and overwrites the code and variables in memory.
	If executed correctly, a buffer overflow can be used to alter a programs code and alter the control-flow of the program. 
	This situation is dangerous because a crafted input data can contain machine instructions, thus causing the program to execute arbitrary code in the software's context \cite{deckard2005buffer}. In this scenario, we (1) exploit a buffer overflow vulnerability in the application, (2) covertly have the app behave like a bot, and (3) preserve the application's original behavior. The bot attempted to connect with a C\&C server once every minute.
	
	\item[Code-Reuse] Instead of injecting new code into the program's memory layout, a code-reuse attack \cite{prandini2012return,elreturn,bletsch2011jump} uses the existing code of the program to create a new logic, mostly by performing jumps to unusual places in the code. For example, jumping to the middle of functions or jumping multiple times to different instructions which perform the desired logic.
	These attacks were proved to be, in many cases, tuning complete \cite{tran2011expressiveness}. This means that an attacker can potentially cause a typical program to execute any desired logic.
	A common approach is called ``return-to-libc'' \cite{elreturn} which reuses code in the libc library to execute the desired code. More advanced approaches are to use the ROP (Return-oriented programming \cite{prandini2012return}) and JOP (Jump-oriented programming \cite{bletsch2011jump}) techniques.
	In this scenario, we attack perform a code-reuse attack on the target application in order to get the app to send sensitive data to a remote server.
	
	\item[Replay Attack (Key Reinstallation Attack)] The Key Reinstallation Attack is a type of replay attack in which one or more protocol's messages are sent again in a different, unexpected, point of the protocol. The Key Reinstallation Attack tries to leak information about encrypted traffic by changing the application state in the middle of the encryption process. Unlike the previous two attacks, this attack does not execute new arbitrary logic within the application's memory space, but rather abuses the control-flow to reveal encryption secrets which can be used to decrypt a user's traffic off-site. 
	
\end{description}

\subsubsection{The Experiments}
To evaluate the framework's anomaly detection capabilities with different applications and attacks, we used several different experiment setups summarized in Table  \ref{ExperimentsSummery}. We will refer to these experiments using their short-form notation from the table.

Unless stated differently, for every experiment, the target application and a local agent were launched on 48 Raspberry Pis simultaneously. After two hours, we paused the training and began to record the performance for another two hours. Finally, at the start of the fifth hour, the specified attack was executed. Although it was not part of the protocol, we paused the training in order to observe the performance of a collaborative model which has been trained for exactly two hours. It is critical that the target application would not remain dormant, but rather, is exposed to normal interactions like an IoT device. Therefore, to successfully simulate a real environment, during all of our experiments we legitimately interacted with the target application manually, using random fuzzing, and previously recorded data on the application's input channels. For example, we used a prerecorded video stream in the experiments involving the smart camera.

\begin{table}[!t]
	\begin{center}
		\caption{Summary of Experiment Setups}
		\begin{tabular}{c}
			\includegraphics[width=.6\columnwidth]{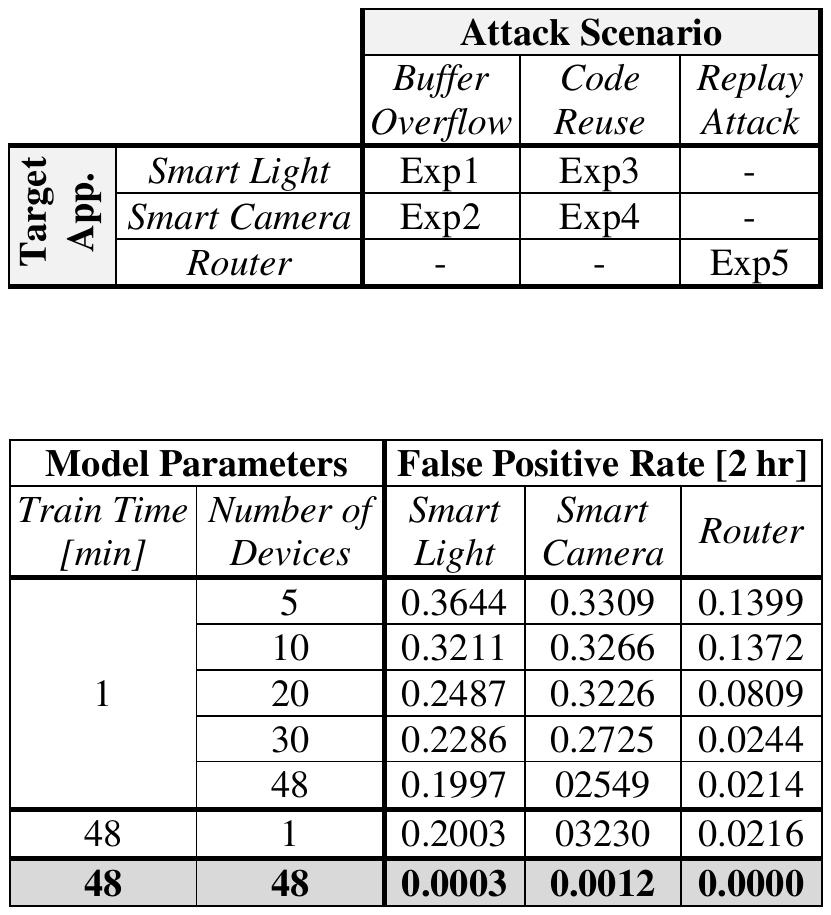}
		\end{tabular}
		\label{ExperimentsSummery}
	\end{center}
\end{table}

\subsection{Experiment Results}
The contributions of this paper are (1) a method for detecting abnormal control-flows (2) efficiently, and (3) a method for performing collaborative training (4) in the presence of an adversary. We will now present our results accordingly.

\subsubsection{Anomaly Detection}\label{subsubsec:anom}
We will now evaluate the use of EMMs over regions of an application's memory space as a method for anomaly detection, on a \textit{single} device.

The code injection attacks (buffer-overflow and code-reuse) were detected entirely with no false positives. Fig. \ref{EMM_ON_REPLAY:a} plots the EMM probability scores $\overline{Pr}(Q_{k})$ for Exp2. 
The Key Reinstallation Attack (Exp5) was more difficult to detect (Fig. \ref{EMM_ON_REPLAY:b}). This is because the attack does not inject own code, and the impact on the control-flow is very brief (a single step in the protocol). However, the attack still influences the probability scores, and we are able to detect the attack when $k$ is increased. Furthermore, when the train time is increased, the performance increases as well. This is evident in the collaborative training setting where two hours of training on 48 devices is equivalent to two days of training. In this case the EMM model yields perfect detection with no false positives. 

In summary, given enough train time, our proposed anomaly detection method is capable of detecting arbitrary code injection attacks and other kinds of exploits (such as protocol exploits). 

\begin{figure}[p]
	\centering
	\includegraphics[width=.8\columnwidth]{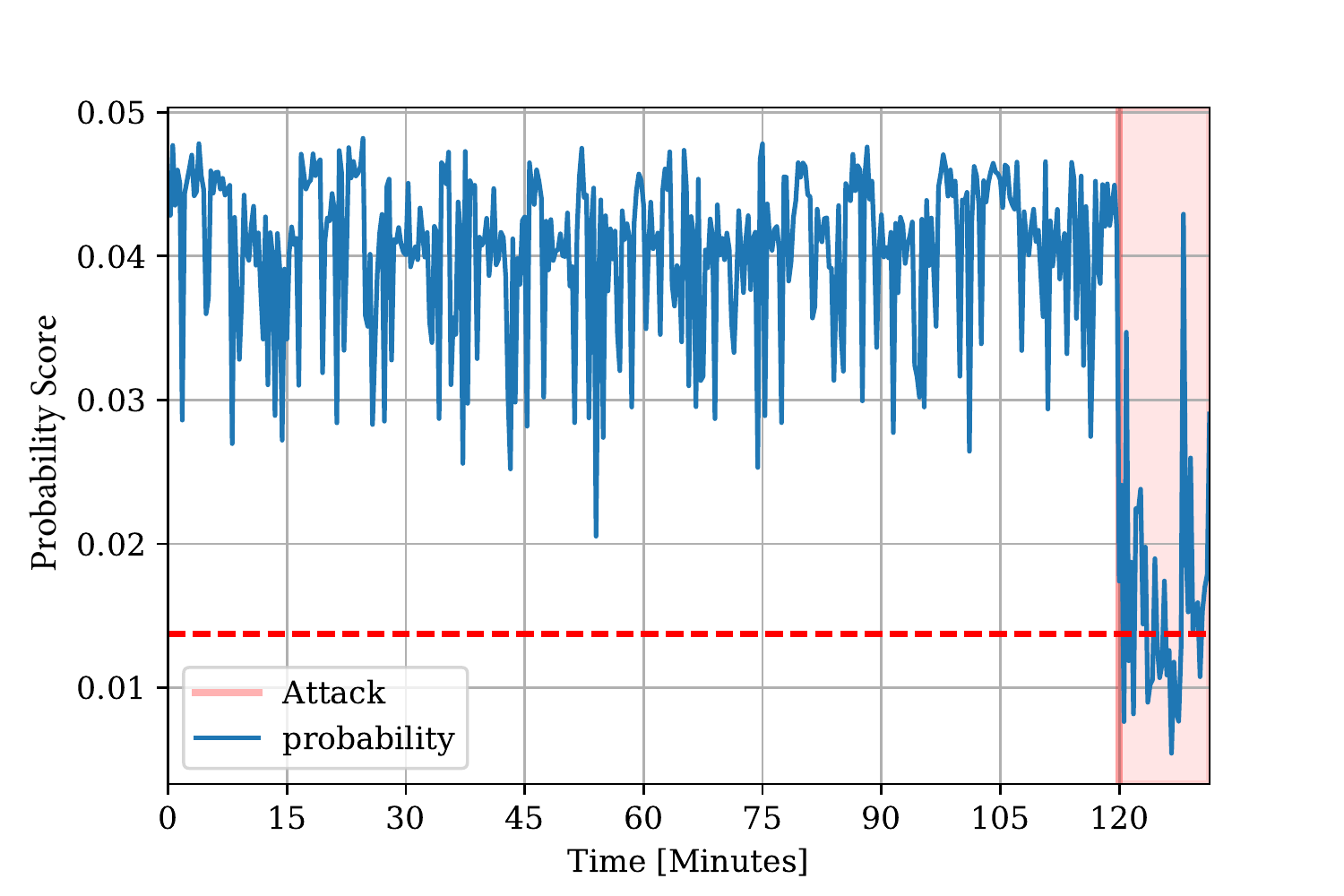}
	\label{EMM_ON_REPLAY:a}

	\caption{The probability scores of $M^{(\ell)}$ from Exp2 after two hours of training, where the red area marks the attack period.}
	\vspace{.3cm}
	\centering
	\includegraphics[width=.8\columnwidth]{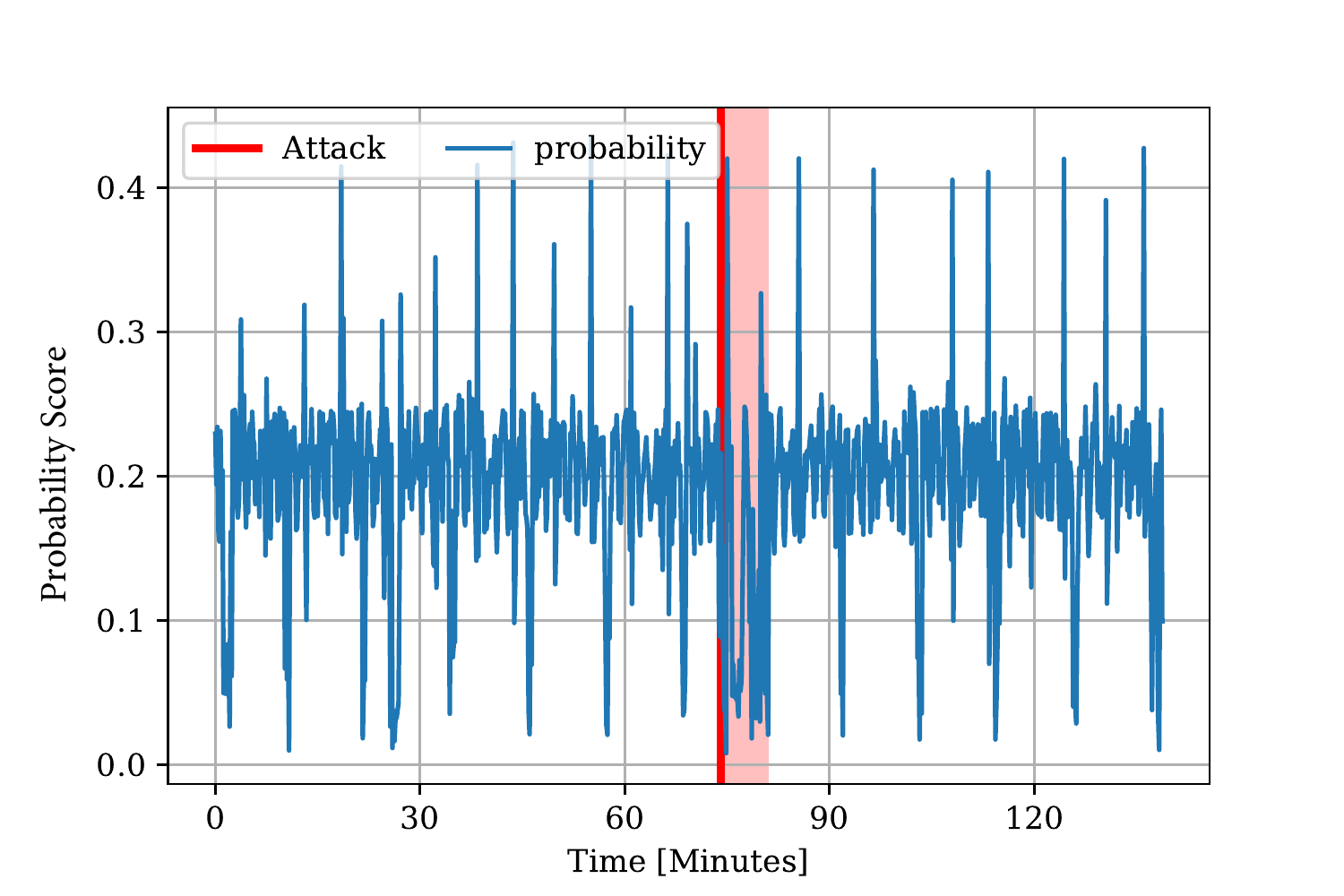}
	\label{EMM_ON_REPLAY:b}

	\caption{The probability scores of $M^{(\ell)}$ from Exp5 after two hours of training, where the red area marks the attack period.}
\end{figure}

\subsubsection{Collaboration Training}
In section \ref{subsubsec:anom}, we showed how EMMs can detect a variety of attacks on IoT devices, given enough train time. However, an anomaly detection model is vulnerable during its \textit{initial} train time ($T_{grace}$). Furthermore, a single device may not experience all possible behaviors in the alloted time. In contrast, collaborative training, using multiple IoT devices, can produce a model a shorter period of time which performs better.

\begin{description}
	\item[Model Performance] By performing collaborative learning, the final model contains the collective experiences from many different devices. As a result, each device can better differentiate between rare-benign behaviors and malicious behaviors. Fig. \ref{Collaborative_Training_Exp:a} shows that the same amount of train time distributed over 48 devices produces a model which can detect an attack sooner than when simply performing all of the train time on a single device. The reason for this is the distributed model captures a more diverse set of behaviors, which helps it differentiate better between malicious and benign.   
	\item[Model Train Time] Fig. \ref{Collaborative_Training_Exp:b} shows that several models trained in parallel can produce a stronger model than a single model (Fig. \ref{Collaborative_Training_Exp:a}) in the same amount of time. Thus, we see that $M^{(g)}$ converges at a rate which is inverse to size of the network. As a result, a large IoT deployment will obtain a strong model quickly, and is much less likely to fall victim to an adversarial attack.
\end{description}

\begin{table}[h]
	\begin{center}
		\caption{False Positive Rates with Collaborative Learning: All Attack Scenarios}
		\begin{tabular}{c}
			\includegraphics[width=.6\columnwidth]{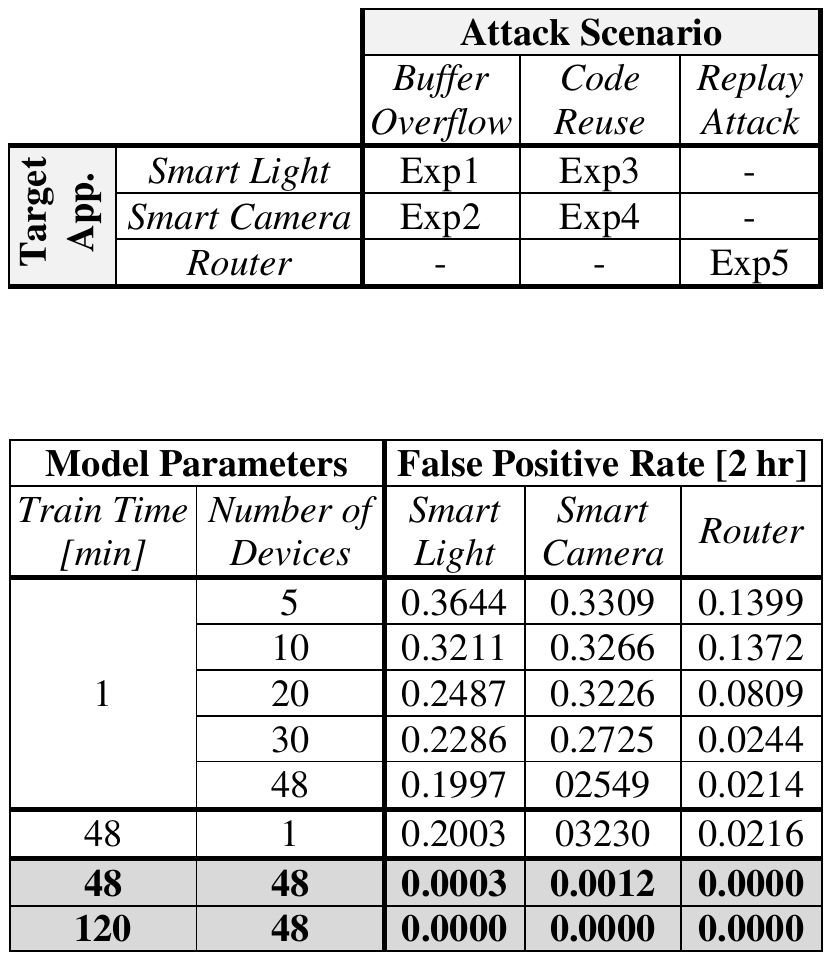}
		\end{tabular}
		\label{Collaborative_Training_ExpSummery}
	\end{center}
\end{table}

\begin{figure}[p]
	\centering
	\includegraphics[width=.8\columnwidth]{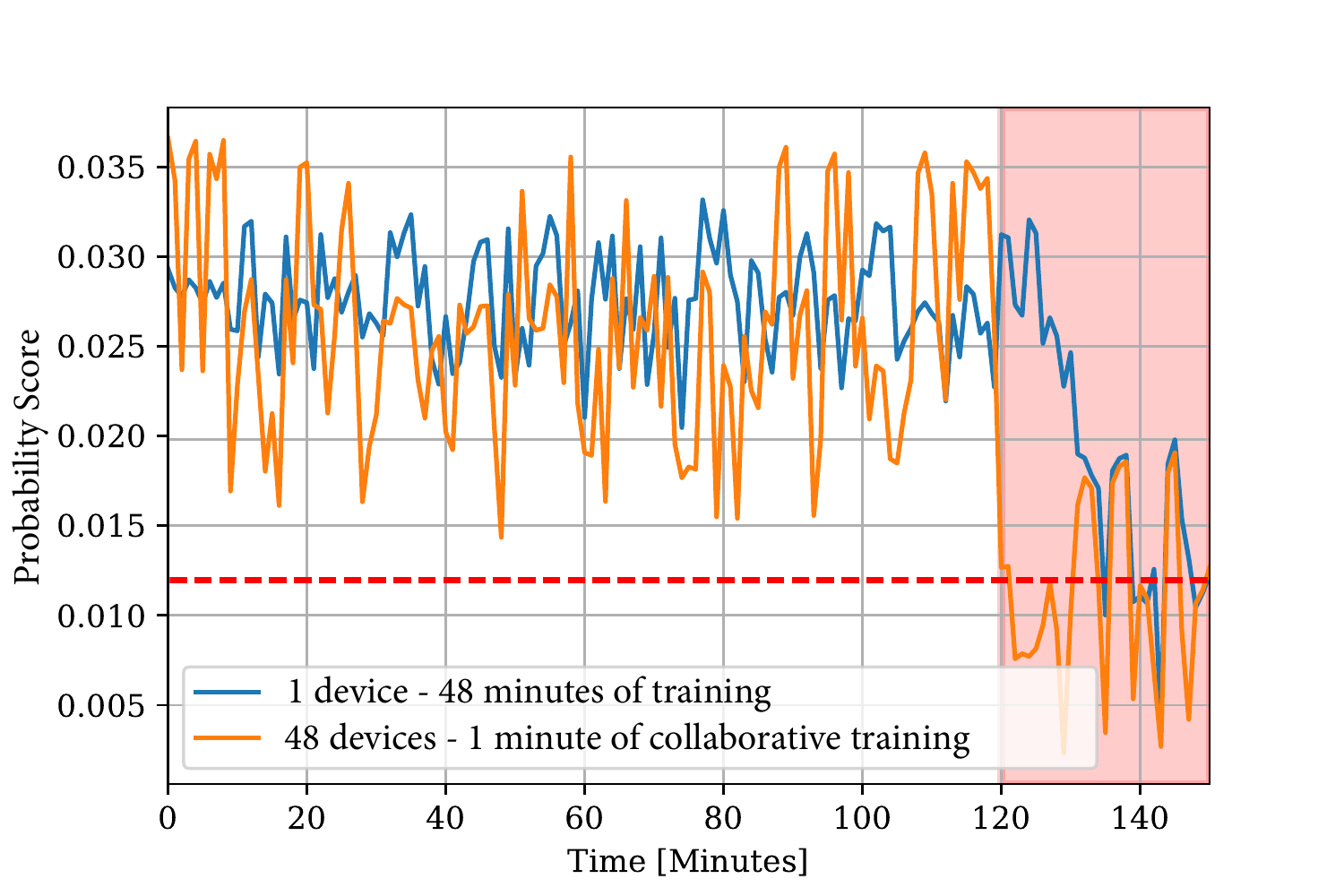}
	\label{Collaborative_Training_Exp:a}

	\caption{The probability scores of $M^{(\ell)}$ with 48 minutes of training, and $M^{(g)}$ with one minute of training across 48 devices (Exp1).}
	\vspace{.3cm}
	\centering
	\includegraphics[width=.8\columnwidth]{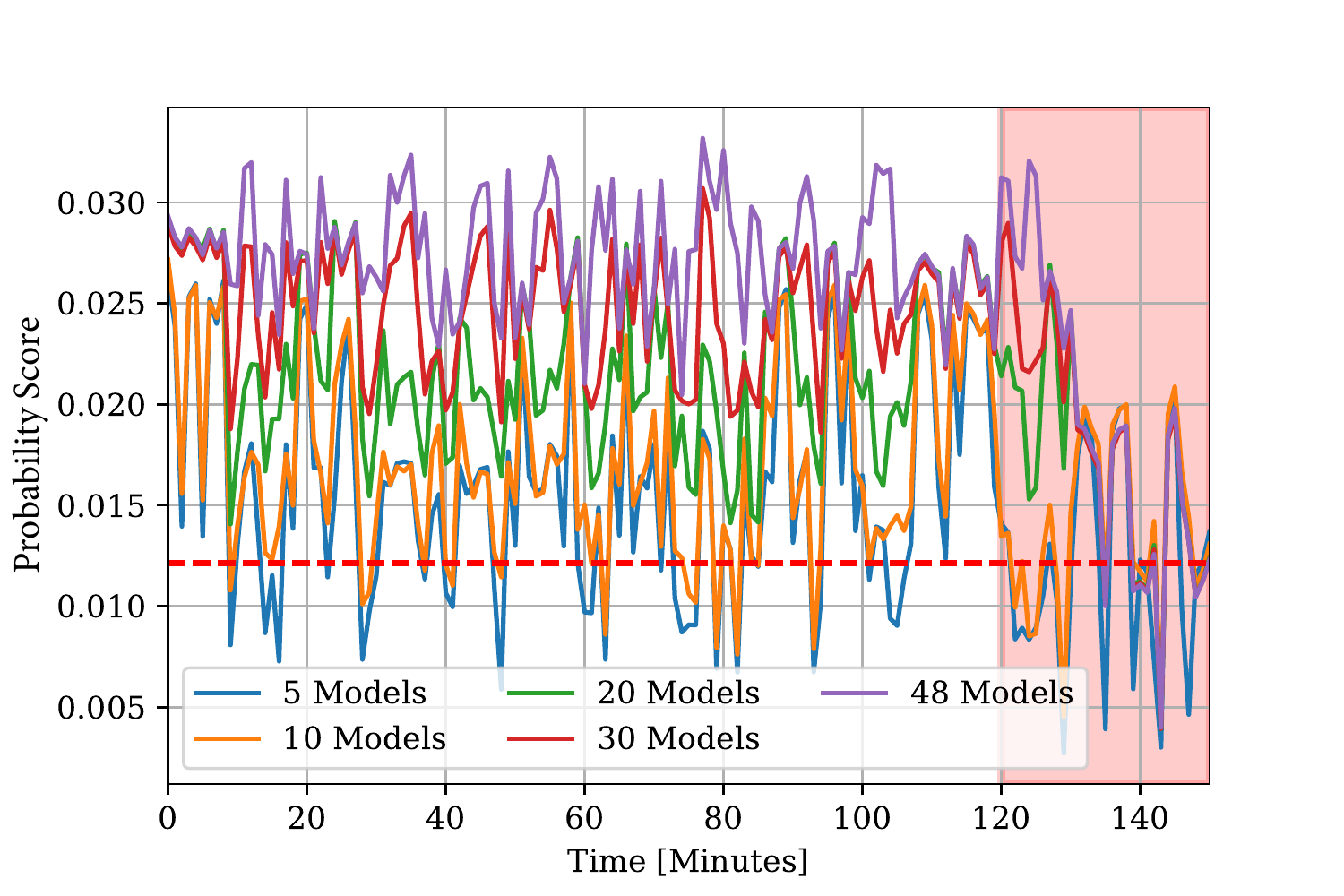}
	\label{Collaborative_Training_Exp:b}

	\caption{The probability scores of $M^{(g)}$ with increasingly larger sets of models (devices) in the case of Exp1.}
\end{figure}

In Table \ref{Collaborative_Training_ExpSummery}, we present the false positive rates (false alarm rates) of the framework with various numbers of devices and train time. The Table shows that just 48 devices training for two hours (2 days of experience) is enough to mitigate the false alarms. For the code-reuse and buffer-overflow attacks, there were no false negatives. However, in the replay-attack (Key-Reinstallation) there were a few false negatives. However, since the attacker sends a malformed packet multiple times, we ultimately detect the attack.

\subsubsection{Resilience Against Adversarial Attacks}
Since agents are constantly learning (even after $T_{grace}$), it is important that the framework be resilient against accidentally learning malicious behaviors as benign (i.e., poisoning). The acceptance criteria of a partial block ensures that these behaviors are not incorporated into the global models.

If some of the IoT devices are infected after the publication of the first block, we expect the collaborated $N^{(g)}$ to detect the malware, and not learn from it by accident. However, let's say that some of the IoT devices were infected prior to the publication of the first block and the elapses of $T_{grace}$. When the infected agents add their poisoned model to the \textit{partial-block}, other poisoned agents will reject their \textit{partial-blocks} because the \textit{model-attestation} step will reveal that the potential new $N^{(g)}$ is very different than their own local models $N^{(\ell)}$. Fig. \ref{Linear_Distance} visualizes this concept as heat maps, where the intensity of index $(i,j)$ represents the linear distance between the probabilities of transition $M^{(\ell)}_{ij}$ and $M^{*}_{ij}$, where $M^{*}_{ij}$ is a combined model from a \textit{partial-block}. In \ref{Linear_Distance:a}, the \textit{partial-block} has $10$ clean models, and in \ref{Linear_Distance:a}, the \textit{partial-block} has $10$ poisoned models. When an agent performs \textit{model-attestation}, the agent will find that $d(N^{(\ell)},N^*)<\alpha$, and reject the \textit{partial-block}. Assuming $L$ is large enough (e.g., $L=10,000$), and that a minority of agents are not infected, we expect that a poisoned \textit{partial-block} will never be closed before a clean one achieves consensus. 

Let's say that $\alpha$ was set too low, or that the malicious jump sequences were very similar to the legitimate ones. In this case, the \textit{model-attestation} step will accept the \textit{partial-block}, but the \textit{abnormality-filtration} step will remove the malicious behaviors. This is assuming that less than $p_a$ percent of the models in the \textit{partial-block} contain the malicious transitions. Fig. \ref{Adversarial_evaluation} shows that with $L=20$ and $p_a=75\%$, an attacker must poison $15/20$ models (during $T_{grace}$) in order to evade the detection of the next $M^{(g)}$. This is very difficult for the attacker to achieve because (1) he must infect the IoT devices without detection, (2) there is a chance that not all infected models will appear together in a \textit{partial-block} (e.g., with 48 or 1,000 devices), and (3) if he does not succeed before the first block if published, then it is likely that the new $M^{(g)}$, accepted among \textit{all} agents, will detect the malware. 

Another possibility is that the attacker may try and sabotage the agent via target application. However, by accessing the agent's memory from the monitored application will require additional exploits from the malware. Ultimately, the agent will detect either the initial intrusion, or the exploits used to gain access to the agent's memory space.

Another insight is that when a minority of models are infected yet the agent's \textit{model-attestation} accepted the \textit{partial-block}, the \textit{abnormality-filtration} removes the malicious transitions but keeps the benign ones (observed by $p_a$ percent of the models). As a result, healthy information is retained from the poisoned models, while the abnormalities are filtered out.

\begin{figure}[p]
	\centering{
		\subfloat[Linear distance between a benign model and a clean combined model] 
		{\includegraphics[width=.8\columnwidth]{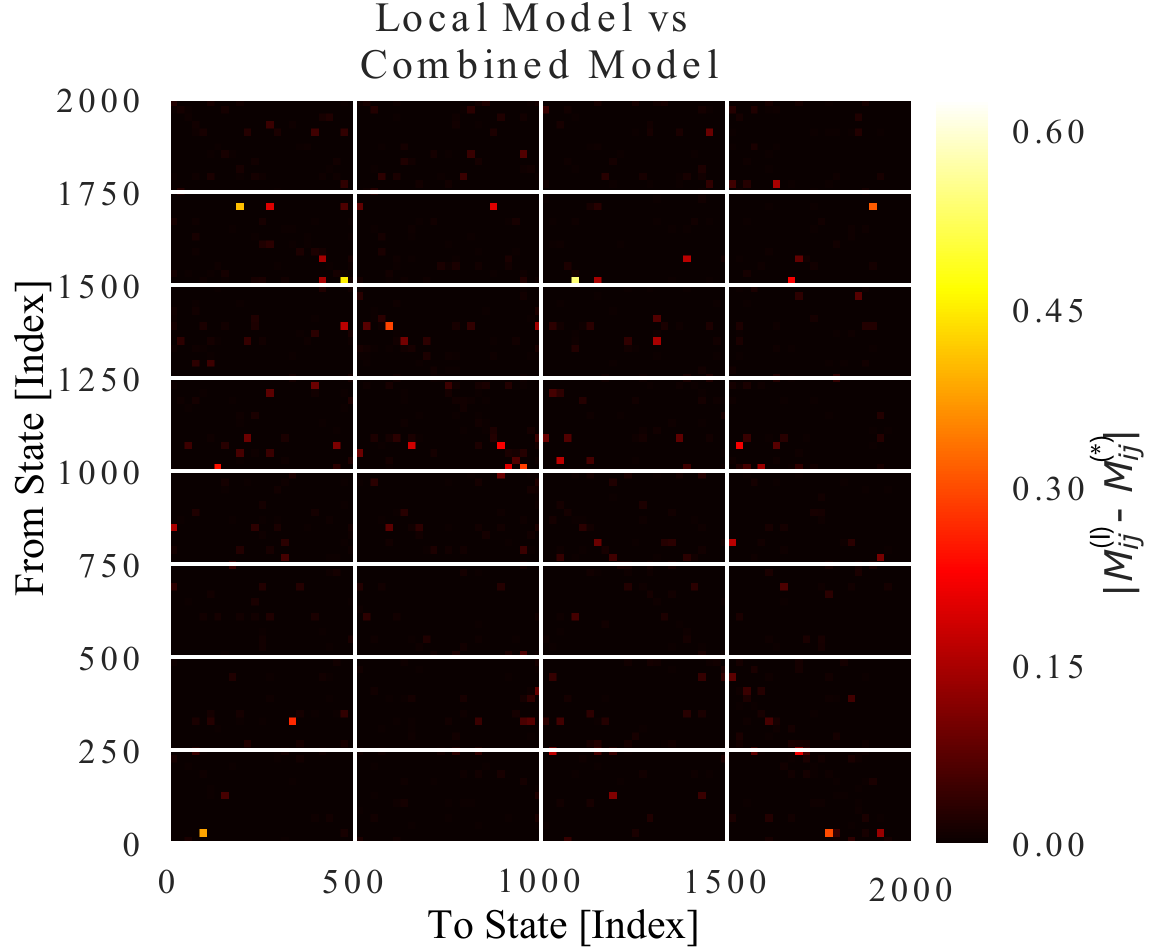}
			\label{Linear_Distance:a}
		}\quad
	
		\subfloat[Linear distance between a benign model and a positioned combined model.]
		{\includegraphics[width=.8\columnwidth]{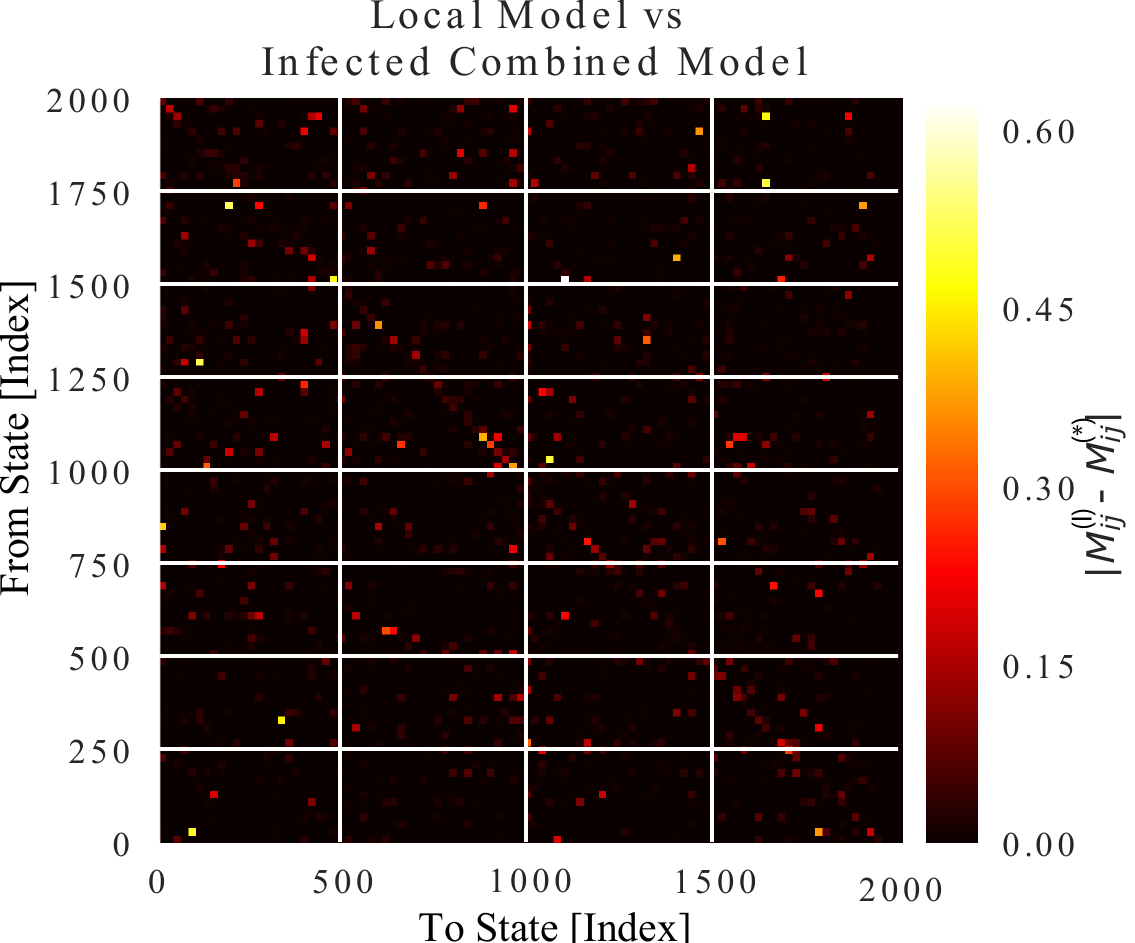}
			\label{Linear_Distance:b}
		}
	}
	\caption{Heat maps of the linear distance between models in Exp4.}
	\label{Linear_Distance}
\end{figure}
\begin{figure}[h]
	
	\centering{
		\includegraphics[width=.8\columnwidth]{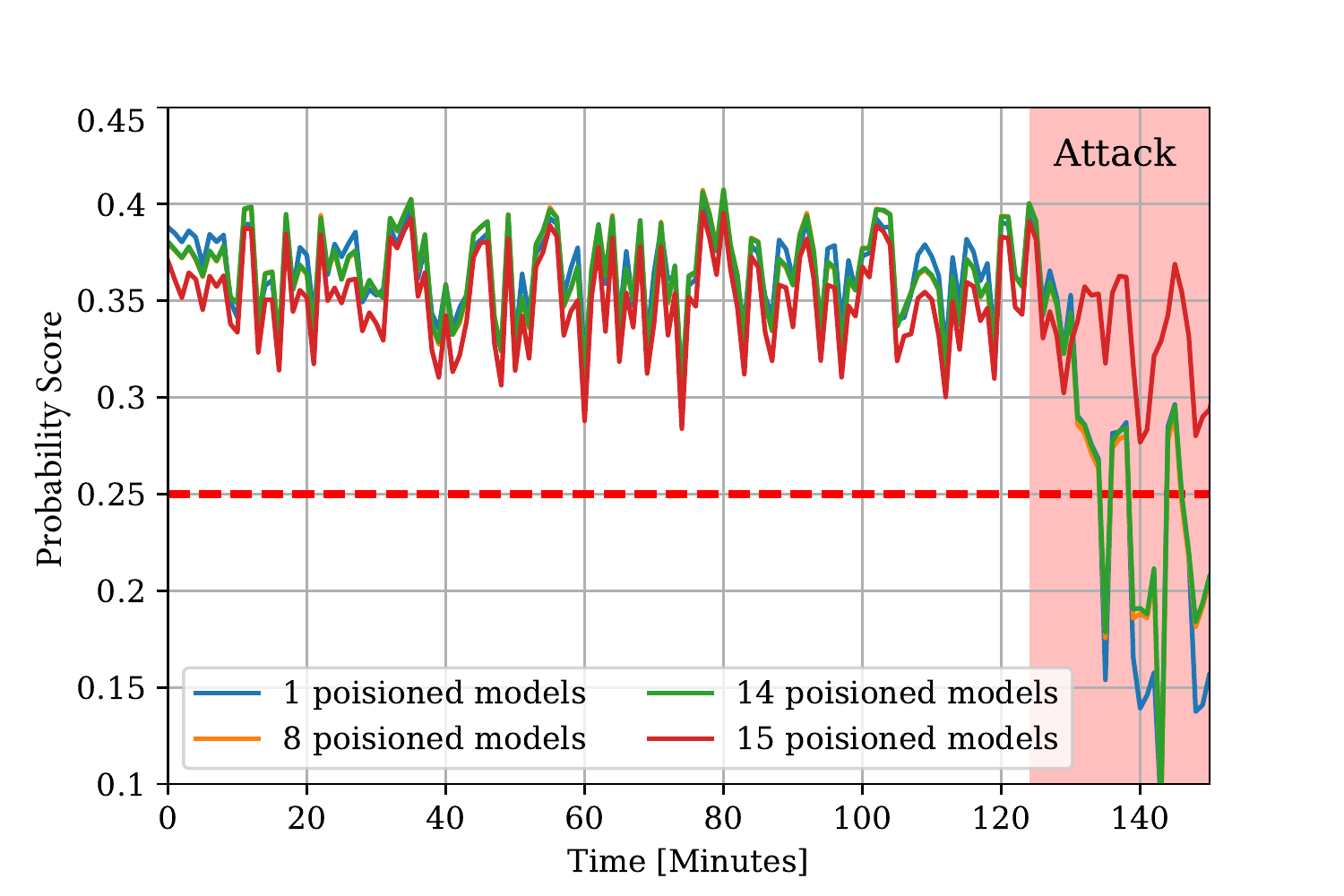}
	}
	\vspace{-0.5cm}
	\caption{The combined model normalized probability generated from the latest block $B$, where various numbers of the models in $B$ have been infected (attacked).}
	\label{Adversarial_evaluation}
\end{figure}

\subsubsection{Baseline Comparisons}
To understand the capabilities of the proposed collaborative framework, we evaluate the selected the anomaly detection method (EMM over memory regions) and the entire host-based intrusion detection system (the blockchain framework) to their respective baselines. 

To validate the use of the EMM, we compare its performance to two well-known sequence-based anomaly detection algorithms: t-STIDE and PST (see \ref{sec:relworks}). For the PST we took a sequence length of 10. We also compare the EMM to the heatmap method proposed in \cite{7167219}. In these experiments, we performed the buffer overflow attack in the Smart Light (Exp1), the code reuse attack on the Smart Camera (Exp4), and the replay attack on the router (Exp5). All of the algorithms were given the same 30 min of normal training data and then were tested on 20 min of normal data followed by 10 min of attacks.

To measure the performance we compute the area under the curve (AUC). The AUC is computed by plotting the true positive and false positive rates (TPR and FPR) for every possible threshold, and then by computing the area under the resulting curve. Intuitively, it provides a single measure for how well a classifier performs. A value of `1' indicates a perfect predictor and a value of `0.5' indicates that the predictor is guessing labels at random. Since the AUC measure ignores precision it is slightly misleading in the case of anomaly detection. Therefore, we also compute the average precision-recall curve (avPRC) which is computed in a similar manner.

In Fig. \ref{fig:aucprc} we present the results from this baseline test. We found that although t-STIDE sometimes our performed the MC, the MC consistently provide the best performance for all target applications. This justifies our use of the EMM for our system. We also note that the PST took several hours to train on a strong PC, and therefore is not practical to train on an IoT.

\begin{figure}[h]
	\centering
	\includegraphics[height=.25\textheight]{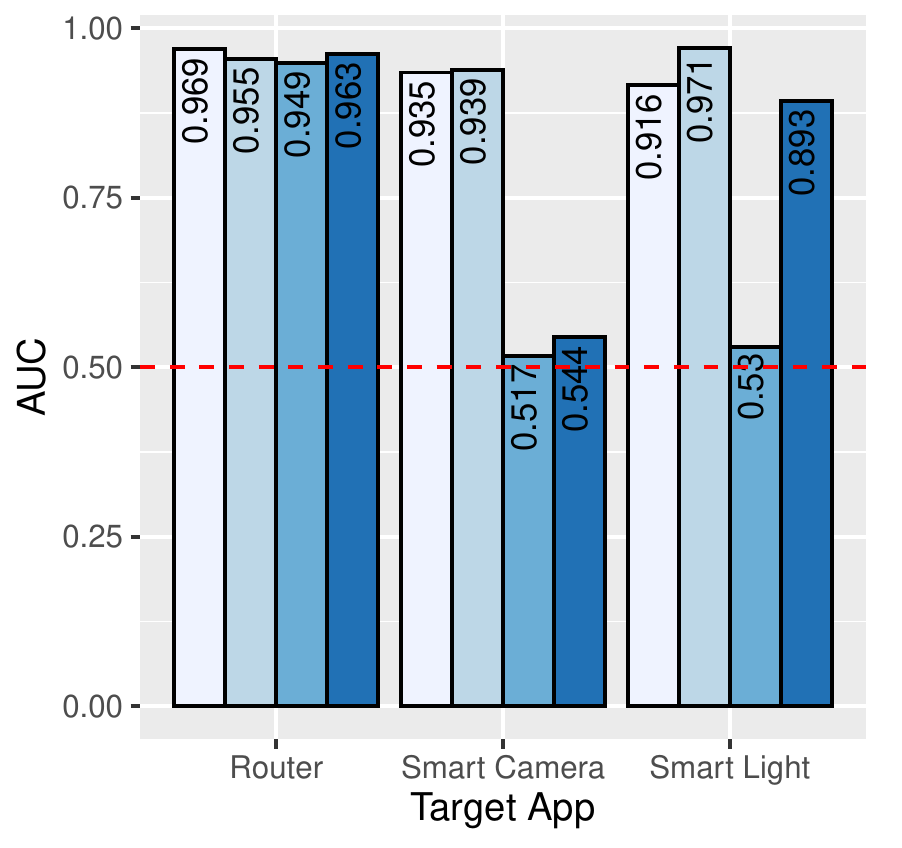}	\includegraphics[height=.25\textheight]{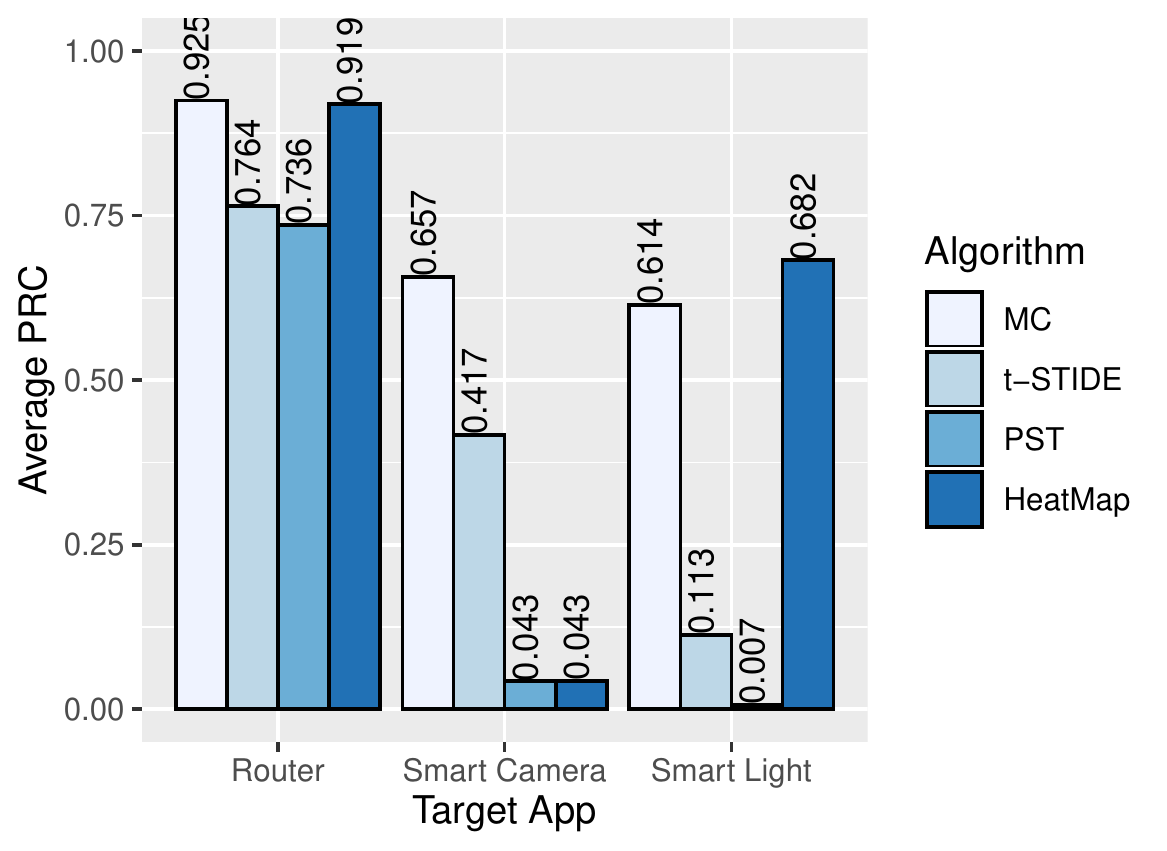}
	\caption{The AUC (left) and the PRC (right) of each algorithms for each attack/target app.}
	\label{fig:aucprc}
\end{figure}

In Fig. \ref{fig:score_time} we plot the anomaly scores (predicted probabilities) of the algorithms over time during the attack phase. From the figure, it is clear why the MC consistently had a high avPRC since there is a clear separation between the anomalous scores and benign scores. This is important when deciding on a threshold. In practice, the threshold is determined based on a statistical measure given the benign data distribution.

\begin{sidewaysfigure}[p]
	\centering
	\includegraphics[width=\textwidth]{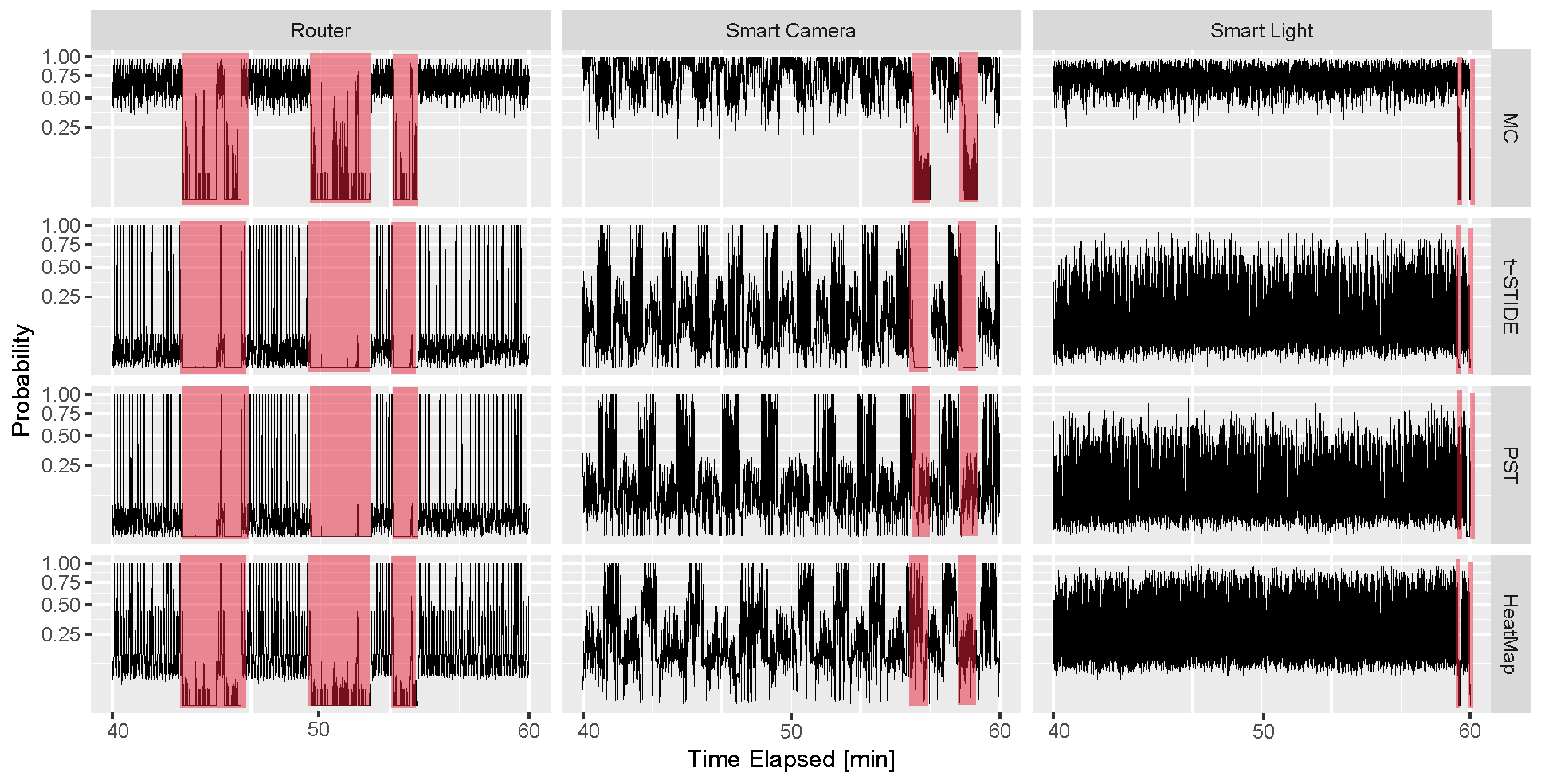}
	\caption{The anomaly scores (predicted probabilities) of the algorithms over time during the attack phase. The actual attack periods are marked in red.}
	\label{fig:score_time}
\end{sidewaysfigure}

To validate the use of entire framework, we evaluated our host-based intrusion detection system (H-IDS) in comparison to others. Since we targeting IoT devices, we selected H-IDSs which are well-known, operate on Linux, and can be compiled to run on an ARM processor: OSSEC, SAGAN, Samhain, and ClamAV. OSSEC is an open-source system which performs integrity checking, log analysis, rootkit detection, time-based alerting, and active response. We loaded OSSEC with all it's default detection rules. SAGAN is an open source multi-threaded system which performs real-time log analysis with a correlation engine. Sagan's structure and rules work similarly to the Sourcefire Snort IDS/IPS, and we loaded it will all available community rules. 
Table \ref{tab:hids} compares the H-IDSs to ours in the context of the content being monitored, and the intrusion detection mechanism used. Samhain is an integrity checker and host intrusion detection system. Finally, ClamAV is a free software open-source antivirus software which we loaded will all current virus signatures.

Once we loaded all four H-IDSs onto a Raspberry Pi, we launched each target application and performed the same attacks described above. We found that none of the four H-IDSs reported any alerts. This makes sense because these systems do not perform dynamic analysis on the target application's control flow. Therefore, the buffer overflow, code reuse, and replay attacks evaded detection.

\begin{table}[!t]
	\begin{center}
		\caption{The Host-based Intrusion Detection Systems compared to Ours}
		\vspace{1em}
		\begin{tabular}{c}
			\includegraphics[width=.5\columnwidth]{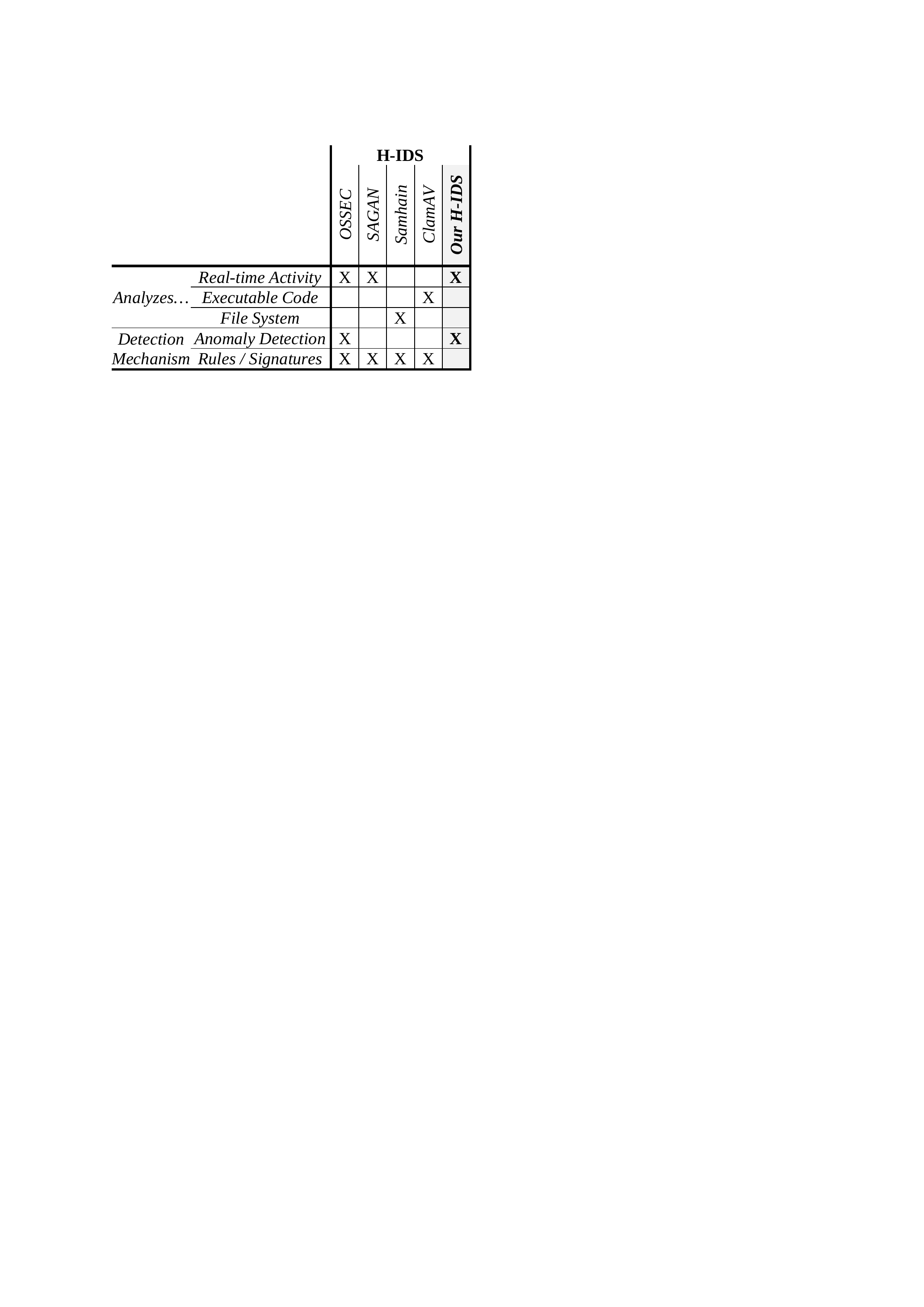}
		\end{tabular}
		\label{tab:hids}
	\end{center}
\end{table}

\begin{figure}[!t]
	\centering
	\includegraphics[width=.9\textwidth]{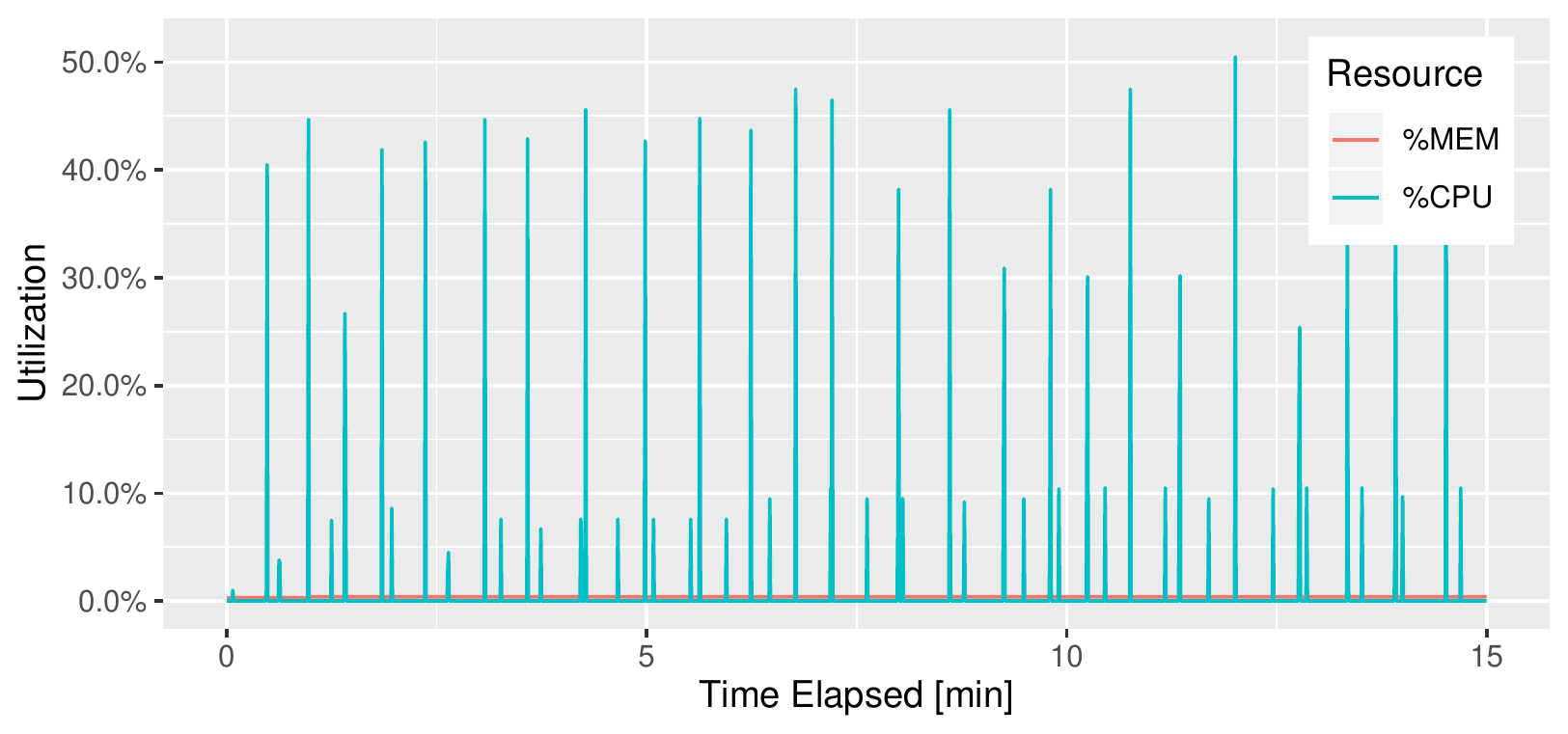}
	\includegraphics[width=0.48\textwidth]{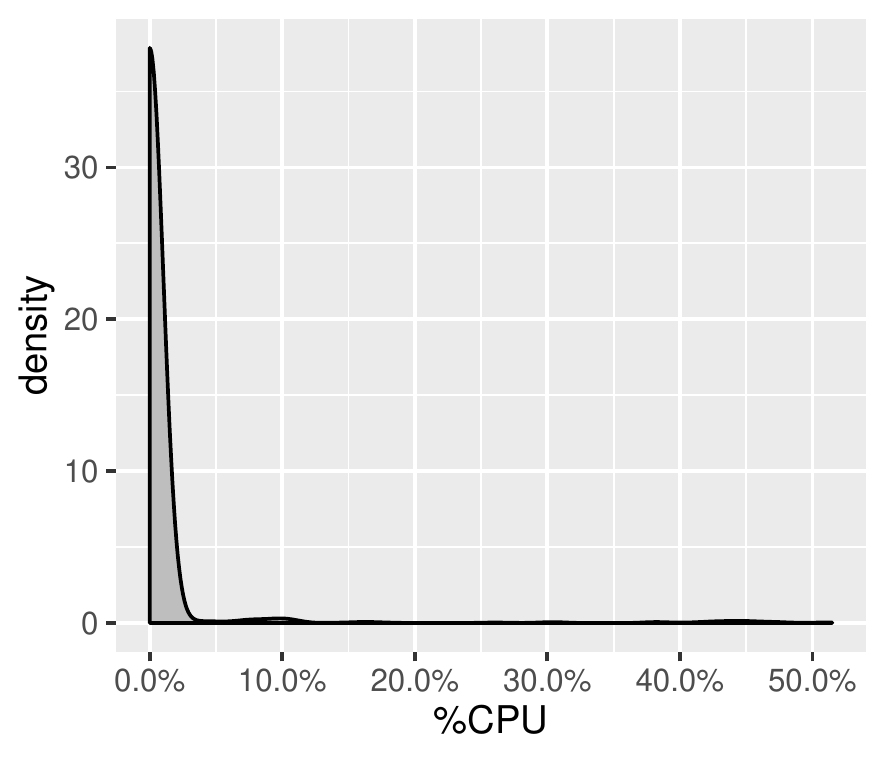}
	\includegraphics[width=0.48\textwidth]{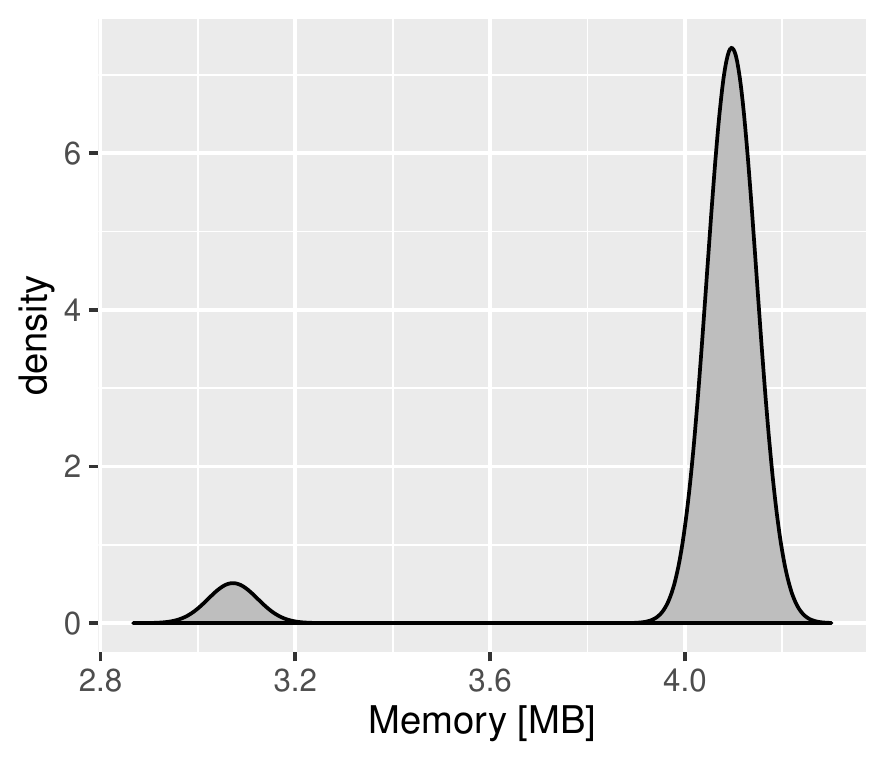}
	\caption{The resource utilization of the agent on a 500 MHz CPU. Top: Resource utilization over the first 15 minutes. Bottom: Resource utilization expressed as density plots.}
	\vspace{-0.3cm}
	\label{fig:benchmark}
\end{figure}

\subsection{Complexity Analysis \& Benchmark}\label{sec:complexity}
The time complexity of an agent can be broken down according to the three parallel processes in Fig. \ref{LLD}. The Gather Intelligence process periodically receives a ring buffer from the kernel will the last $n$ jump operations, checks for anomalies, and updates the EMM. Therefore, It's complexity is $O(n)$. However, if an averaging window is used over the anomaly cores, then the complexity is $O(n+wn)$ where $w$ is the window size. However, $w<<n$ in practice and thus we can consider the complexity to remain as $O(n)$.  
When the Retrieve Intelligence process receives a loner chain than the local one, it will check the legitimacy of the last block by validating its signatures, and then possible validate the signatures int he partial block as well. Therefore, in the worst case scenario, the agent will perform $2L-1$ signature checks. Although this may take a second to process, it will not affect the system since at most $b$ broadcasts will be accepted by the agent during each interval $T$, where $T$ in the order of minutes or greater.
Finally, the Share Intelligence process wakes up and sends the local chain to $b$ other agents at random. Although the p2p discovery protocol and network transfer may take some time, it has a negligible affect on the CPU.

We performed a performance benchmark to evaluate the agent's CPU and memory utilization. The benchmark was performed on a Linux embedded device with a single ARM Cortex CPU clocked at 500 MHz since this CPU configuration is common among IoT devices \cite{ARM}. The test was run for one hour in the presence of 48 other agents having the same protocol configurations used in the evaluations. The target app was a web facing log server with a known CVE. The results can be found in Fig. \ref{fig:benchmark}. The results show that the resource consumption of the agent is negligible using only 1\% of the CPU on average and 4MB of RAM.

\subsection{Blockchain Simulator}\label{subsec:simulator}
To help other reproduce our work and understand how the blockchain protocol works, we have \href{https://drive.google.com/drive/folders/15gLytEJyQyYCmhB-EZSkES77KsuCW0hw?usp=sharing}{published a discreet event simulator (DES)} of the protocol written in Python.\footnote{\texttt{https://github.com/ymirsky/CIoTA-Sim}} The DES is object oriented and creates an instance of each agent to help users follow the protocol logic and the propagation of the chains. The DES only simulates the high level protocol logic (e.g., Section \ref{subsec:deadlocks}), and not the \textit{model-attestation} training or combining. For code on model management, please see our other repository.\footnote{\texttt{https://git.io/vAIvd}}

The user selects the number of agents, $L$, $T$, the number of blocks to close, and the connectivity between the agents. The connectivity can be set to fully connected or random: Barabasi-Albert Algorithm (preferential attachment) or Watts-Strogatz (small world attachment). The DES queue then manages the agent's information sharing (when elapses of $T$) where some small amount of noise is added to the event times. 

For each type of graph, we ran the simulator 100k times with 1,000 agents and set $L=800$. For the Barabasi-Albert generator we set attachment to 1, and for the Watts-Strogatz generator we set the neighbors to 5 with a probability of 0.1. For each trial we generated a new random network. 

Table \ref{tab:sim} presents the agent connectivity (node degree) of the agents in the simulations and the number of times (epochs) an agent executed step \ref{step:broadcast} of the protocol until a block was closed. Fig. \ref{fig:sim} plots the distribution of the epoch counts over 100k trials. 
The results show that blocks are closed faster with better connectivity between agents (larger node degrees). A fully connected network closes a block in 1 epoch and a sparsely connected network (Barabasi-Albert) can take up to 800 epochs (2.2 hours with $T=10$ seconds).


\begin{table}[!t]
	\resizebox{\textwidth}{!}{
	\begin{tabular}{@{}lccccc|cl@{}}
		& \multicolumn{5}{c|}{Node Degree} & \multicolumn{2}{c}{\#Epochs} \\
		Graph Generator & Min & Max & Median & \textbf{Mean} & Std. & Mean & Std. \\ \midrule\midrule
		Complete  & 999 & 999 & 999 & \textbf{999} & 0.00 & 1 & 0 \\
		\textit{agents connected to all agents} &&&&&&&\\ \midrule
		Watts-Strogatz & 6 & 10 & 6 &\textbf{6.59} & 0.74 & 142.80 & 3.14 \\
		\textit{small world attachment} &&&&&&&\\ \midrule
		Barabasi-Albert & 1 & 99 & 1 & \textbf{1.99} & 3.56 & 800.46 & 3.16 \\
		\textit{preferential attachment} &&&&&&&\\ 
		 \bottomrule\bottomrule
	\end{tabular}
}
	\caption{The network generators' node degrees (number of neighbors per agent), and the number of epochs ($T$) elapsed until a block was completed.}
	\label{tab:sim}
\end{table}

\begin{figure}[!t]
	\centering
	\includegraphics[width=0.5\textwidth]{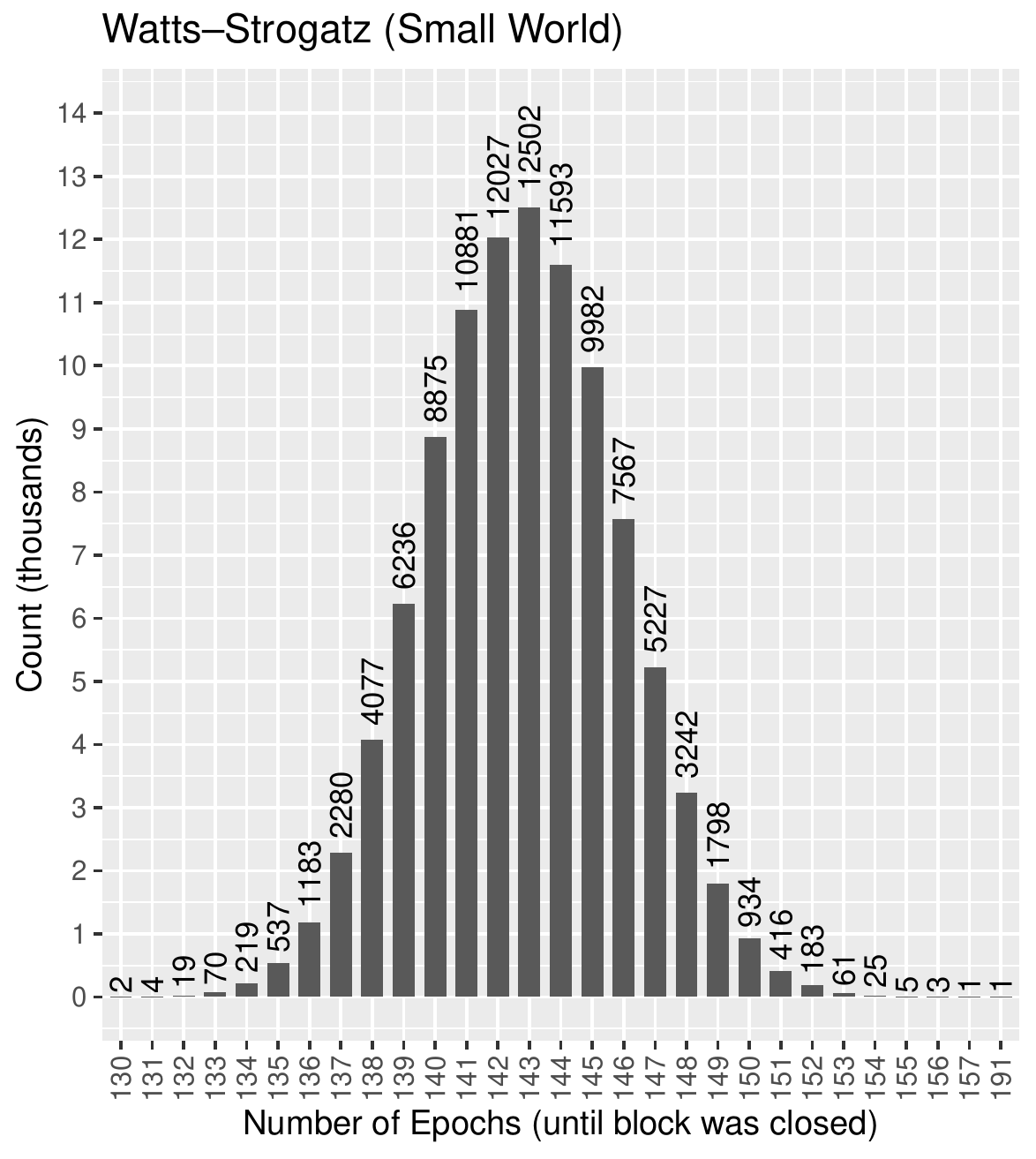}
	\hspace{-.7em}
	\includegraphics[width=0.5\textwidth]{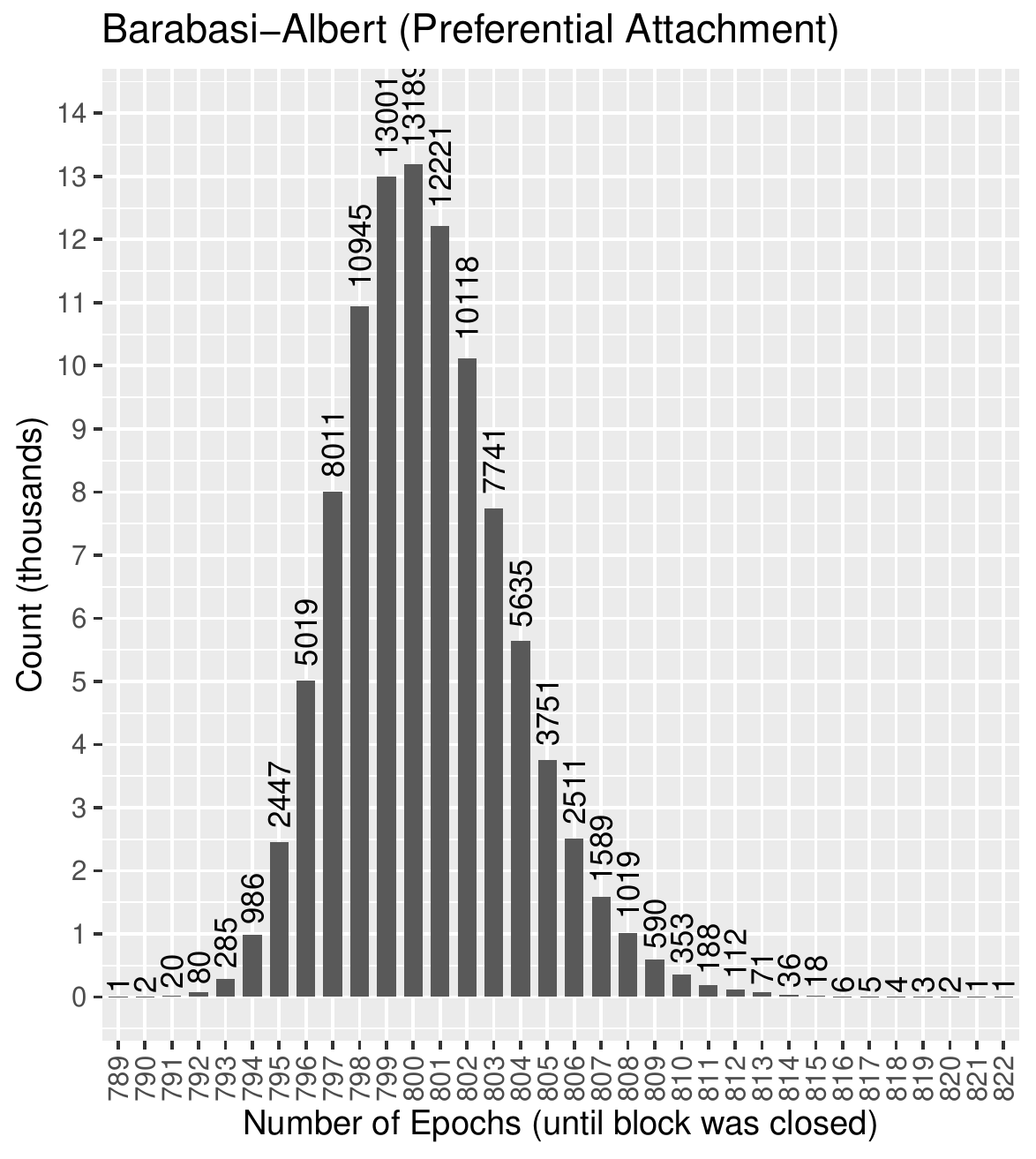}
	\caption{The number of epochs (times $T$ elapsed) until a block was completed, sampled one million times from each type of random network (containing 10,000 agents).}\label{fig:sim}
\end{figure}
\vspace{1em}

\section{Security Analysis}\label{sec:security}
In this section we will discuss the security coverage of the agents and potential attacks against the framework.

\subsection{Agent Coverage}
It is not necessary to have an agent monitor every application on an IoT device because the most common attack vector on the IoT is via the devices' Internet-facing applications. Therefore, to provide maximal coverage, we recommend that an agent protect all such applications (web servers, telnet daemons, ...) 

The agent can detect the exploitation of software vulnerabilities, misuse, and denial of service attacks if the attacks result in an abnormal control-flow, or an irregular read/write operation in memory:
\begin{description}
	\item[Confidentiality] There are many ways in which an attacker can violate the confidentiality of the device. Our agent can only detect these attacks if it causes an irregular control flow. For example, if the attacker performs an algorithm downgrade attack on an encryption channel like in the cases of Beast, Poodle, and Krack \cite{vanhoef2017key,moller2014poodle,sarkar2013attacks}.  then the protocol behaves differently than usual (as shown in our evaluation). Another example is where the attacker pulls data from a database via SQL an injection. In this case the interpreter will perform two queries back-to-back or ti may even perform a query that is never performed (e.g., drop table). Lastly, the attacker may perform a buffer overflow to reveal data in the server as was done in heartbleed to obtain private encryption keys \cite{durumeric2014matter}. When a buffer-overflow occurs, the program counter moved operates in a region of memory in a transition that it never does. 
	\item[Integrity] In some cases, an attacker may want to compromise the integrity of an IoT device by executing custom code or by abusing existing code (reuse). In doing so the attacker could install malware to recruit the device into a botnet, perform an act od ransomware, or some other malicious act. Many, of these code executions are achieved by altering the memory control flow. For example, buffer-overflow attacks via web inputs and irregular interactions with services (e.g., using ssh/telnet to download a  payload \cite{antonakakis2017understanding}). Moreover, in some cases the agent can discover when an attacker is perform reconnaissance to reveal potential vulnerabilities. One approach is to brute-force well-known credentials and another approach is to execute a variety of crafted malformed inputs until one succeeds (e.g., directory traversal attacks). In both cases, the operations will generate irregular loops in memory.
	\item[Availability] The goal of an attacker may be to disable the device's web server in a denial of service attack (e.g., if the device is a surveillance camera). Our agent can detect and alert for some of these attacks. For example, the agent can detect when a malformed packet causes a server to halt, and when SYN flood or SSL Renegotiation attack occur (due to the looping). However, the agent cannot detect a loss in connectivity if the channel is jammed, overloaded, of if a nearby router has been compromised. 
\end{description}
Please refer to Fig. \ref{fig:cwe} in the appendix for a list of high-level Common Weakness Enumerations (CWE) which the agent covers. We note that the agent's performance is not perfect, and the detection of some of these attacks may require a smaller region (state) size to provide the right granularity. In future work we plan to address this issue by letting he agent choose the state size based on the amount of code loaded into memory.

\subsection{Adversarial Attacks}\label{subec:adversarial}
Although the framework aims to protect IoT devices from attackers, to ensure reliability, we must consider how an attacker may target our framework. Once again, we will refer to the `CIA' of security:

\begin{description}
	\item[Confidentiality] One potential attack against the system is to intercept and extract $M^{(\ell)}$ for a target device. In doing so, the attacker may be able to violate the user's privacy by inferring the user's \textit{high-level} interactions with the application. This threat only applied during the creation of the first block since the shared models have not yet been generalized (combined) with others. To mitigate this threat, the manufacturer can initiate the blockchain with an initial model, similar to how stream ciphers use IVs (see \textit{cold start} later in section \ref{sec:discussion}). It is also a good idea to choose a relatively large state size for better obscurity. Another concern is if the user installs 3rd party apps, and the manufacturer has an agent automatically assigned each new app. In this case the attacker can infer which apps the user is using and when by monitoring the broadcasted chains. To mitigate this threat and the other cases, all p2p communications should be encrypted using SLL.
	
	\item[Integrity] If an attacker corrupts (poisons) the model in training, he can intentionally cause high false alarm rates or evade detection. There are two ways in which an attacker can poison the model. The first way is to install malware on the majority of the population before the first block is closed (supply chain attack or a regular infection). Infecting the majority of devices is very hard to accomplish because (1) it involves a short time window, (2) requires infection without detection,\footnote{The detection phase begins after $T_{grace}$ and not after completing the first block} and (3) a large number ($p_a\%$) of devices need to be infected. While it is not impossible, it is considerably more difficult than exploiting a single device. To minimize the attack window and bandwidth of the system, one may change the protocol so that $T$ is a monotonically increasing with the length of the local chain. For example, a linear function can be used or the exponential decay function $T(m)=(1-2^{-\lambda m})*(t_{\text{max}}-t_{\text{min}})+t_{\text{min}}$, where $m$ is the length of the local chain, $\lambda$ is the half-life rate, $t_{\text{min}}$ is the shortest interval, and $t_{\text{max}}$ is the longest. 
	
	Another way to corrupt the model is to (1) evade detection by compromising the device via an application which is not monitored, or via physical access to the device, then (2) achieve root privileges to compromise the agent, then (3) repeat this process until the majority of agents are under the attacker's control, and then (4) broadcast a compromised model in unison to all other devices. To mitigate this attack, users should consider which apps should be monitored and the physical security of their devices. Another option is to place the agent in the device's TrustZone (e.g., \cite{TrustZon14:online}).
	
	We note that an attacker may attempt to avoid detection by crafting exploit to follow a common flow through the application's memory. However, designing such an attack very limited and difficult since the operations are limited to a normal jump sequence. Moreover, to initiate this flow, the attacker will need to initially override some instructions (e.g., buffer-overflow), and as a result will likely trigger an alert.
	
	\item[Availability] An attacker may attempt a denial of service (DoS) attack to overload the agent thus disabling the device, or to block an agent access to new models by disrupting the agent's connectivity. In our protocol (section \ref{sec:ciota}), we took steps to ensure that an agent cannot be overloaded with broadcasted chains by limiting the processing rate to $b/T$ chains per second (section \ref{subsec:peerdisc}). However, care should be taken regarding the software used in the agent's server implementation to avoid flooding attacks, buffer overflows, and other attack vectors. 
	
	Moreover, the system is robust to network and hardware failures. This is because the system is distributed and agents only need to be connected to $b$ other random other agents. Moreover, collaboration is only necessary to accelerate the EMM model's initial training, and to handle future concept drifts (e.g., software updates). Therefore, during an outage, each agent will still continuously (1) execute the latest model to detect attacks on the target application, and (2) update the local model $N^{(\ell)}$ on sequences which are considered safe (above the probability threshold). In this case, each agent will still act as an efficient standalone host-based anomaly detection system, and continue to improve its model until it converges.
	
\end{description}

\section{Discussion}\label{sec:discussion}
In this section, we discuss the assumptions and design considerations of framework.


\subsection{Assumptions}
The framework's primary goal is to autonomously learn more in less time.
To achieve this goal, we take the following assumptions.

\begin{description}
	\item[Population size] \textit{There are enough participants with the same hardware model to support the system.} As noted earlier, a separate block chain is maintained for each device model version. If the homogeneous population is not large enough, then consensus will never be reached. However, IoT products are often mass produced and is likely that there will be tens of thousands of identical devices deployed around the world at a given time. Once a block has been closed, future generations can benefit from it even if the population has decreased below the consensus threshold.

	\item[More is better] \textit{Learning from more data will produce a better anomaly detection model.}
	This is because more data captures a more complete view of the behaviors, and therefore, the trained model will have a lower the false positive rate (FPR).

	\item[Achieving consensus]\textit{Given an appropriate $\alpha$, the majority of participants will reach a consensus for $M^{(g)}$ among themselves.}
	With a very large $\alpha$, the agents will surely achieve consensus. Although smaller values of $\alpha$ will improve the quality of the consensus, it will also increase the likelihood that \textit{partial-blocks} will be rejected --increasing the time it takes to complete a block (achieve consensus). If $\alpha$ is too low, consensus may never converge, and a collaboration will never occur. In our evaluation and proof of concept, we used empirical observations to select a constant value for $\alpha$. 
	However, as future research, a dynamic algorithm for selecting the appropriate value of $\alpha$ should be used to optimize the quality-convergence trade off.	
	\item[Benign majority] \textit{The majority of local models distributed among the agents are not poisoned.} 
	The blockchain is a peer-to-peer (P2P) protocol for a distributed (server-less) database that is managed by consensus of the network.
	Therefore, the blockchain architecture can only work under the assumption that most of the participants are clean at the outset. In the case of our framework, this means that the majority of agents are not accidentally updating their local models with malicious behaviors.
	
	\item[Benign start] \textit{An agent on a device monitors all applications (on separate chains) which are potential infection vectors from the Internet.} If the IoT device gets infected via an application which is not being monitored, then the agent cannot detect the threat. Therefore, in order to protect the device, all applications which can be exploited via the Internet should be monitored. 
\end{description}

\subsection{Implementation}

The following are some discussion points which relate to the implementation of the framework.

\begin{description}
	\item[Remediation Policies] While detection is a powerful tool for security, without acting on detected threats, detection is meaningless. When an agent detects an anomaly, the agent can (1) send an alert to a control server, (2) suspend the infected application via the kernel, (3) restart the infected application, or (4) a  perform a combination of these options. We note that scope of this paper is detection, whereas remediation is a task specific problem. For example, restarting the infected app may be acceptable for a smart air conditioner, but not a survallaince camera (an implicit DoS attack). Therefore, the remediation should be considered accordingly.
	
	\item[Address Space Layout Randomization] Today, many operating systems use Address Space Layout Randomization (ASLR) \cite{shacham2004effectiveness} to prevent outside entities from knowing the memory layout of applications in execution. ASLR ensures that an identical application will have a different memory layout on each device, making it very hard for attackers to exploit memory corruption vulnerabilities. However, since ASLR is an internal state, each agent can parse the memory layout of other agents to its natural state. 
	Concretely, for each memory region captured by $N^{(\ell)}$, an agent will include a library identifier and the region's offset from the library's initial address for other agents to rearrange the model. 
	
	\item[Authentication and Identification] In the paper, the framework uses PKI (public-key infrastructure \cite{adams2005internet}) in order to prevent an attacker from creating or replaying fake chains or records. Conventional PKI uses certificate authority (CA) servers to sign, manage, verify and revoke public key certificates. The use of a CA may incur some delay in processing blocks. However, there is no immediate rush to process these blocks since $T$ is in minutes or hours, and the agent continues to perform real-time intrusion detection in the meantime. Regardless, using CAs introduces single points of failure since they are centralized. To overcome this, several researchers have developed PKI for IoT networks \cite{8537812,8611563,JIANG2019185,shetty2019blockchain}. For example, in \cite{8537812} the authors propose three different methods for distributing CAs over a blockchains using Etherium smartcontracts and even the Emercoin infrastructure.
	Aside from PKI, another option is to use a shared secret among the agents and perform symmetric encryption. By doing so, no additional infrastructure is needed.
	
	The risk of using symmetric encryption is that an attacker can obtain a device an extract the shared key and compromise all agents. Therefore, to implement this approach, we recommend using an IoT devices' TrustZone. A TrustZone is a safe house inside the device (untrusted territory) that has access to the untrusted territory within the device. In this setup, the agent and its symmetric encryption key (provided by the administrator) is located in the TrustZone. By doing so, the agent and it's secrets are secured while avoiding the issues of PKI. However, if the TrustZone does not implement tamper protection, an attacker can physically interact with the device to extract the key.
	
	\item[Memory Region Size (state size)] The memory region size is a parameter configured by the user. This parameter incurs a trade-off: a large region size has low false-positive-rate but a high false-negative-rate, and a small region size has a high true-positive-rate but a high false-alarm-rate. Although we found that $256$ Bytes is a sufficient size, one should consider finding the smallest region size possible for their application. Another option is to use a small region size but increase $T_{grace}$.

	\item[Cold-start] In this paper we presented how agents distributed across a set IoT devices can build a detection mechanism with no prior knowledge. Although we expect the collaboration process to help agents learn rare yet benign behaviors, some benign behaviors may never cross the $p_a\%$ threshold. For example, the function which is executed by a smoke detector when a smoke is detected. To ensure that these behaviors make it into $M^{(g)}$, a manufacturer can post $M^{(g)}$ as the starting point for the blockchain. In this case, $M^{(g)}$ is a model from a single device in a lab, which has been exposed to the rare functionalities. By bootstrapping the blockchain with an initial model, it is possible to maintain a secure population with very few devices since the initial window of exploitation is diminished. 
	
	We note that this can be implemented by reserving a special certificate for the manufacturer who can sign this block. Doing so would also benefit the manufacturer since he can force updates in cases where there are software updates that significantly affect the applications' behavior.
	
	In the cases where a cold-start is necessary, we stress that majority of the training should still occur on-site and not in the lab. The reasons are that (1) it costs less, (2) it is very difficult to simulate natural dynamic human interactions with the devices, (3) it is challenging to stimulate the sensors realistically, and (4) simulating every single possible control-flow (fuzzing) is not practical.

	\item[Deployment] There are two ways the framework can be deployed: open and closed. In an open deployment anyone can register an agent to the network. In this mode of operation, a central entity should be entrusted with registering new users in order prevent an attacker from registering many accounts and overtaking the consensus. In a closed deployment, only invited agents can participate in the blockchain consensus. This deployment prevents unwanted entities from corrupting or eavesdropping on the blockchain.
\end{description}

\section{Conclusion}\label{sec:conclusion}
The number IoT devices is steadily increasing. However, manufacturers seldom patch older models and unintentionally write vulnerable code. As a result, large numbers of IoT devices are being exploited on a daily basis. Due to the scale of the problem, a generic stand-alone method for monitoring and protecting these devices is necessary.
In this paper, we introduced a blockchain-based solution for autonomous collaborative anomaly detection among a large number of IoT devices. 

To detect the exploitation of software on an IoT device, an agent is deployed on the device and efficiently models the software's normal control-flow for anomaly detection. However, the model training is vulnerable to adversarial attacks, and it is unlikely that a single IoT device will observe all normal behaviors (sensors readings, triggers, interactions, etc$\ldots$). Therefore, the agent uses a blockchain protocol to incrementally update the anomaly detection model via self-attestation and consensus among other agents running on similar IoT devices. By collaborating among other agents, the training phase (convergence) is significantly shorter, and false-alarm-rate is reduced due to the shared experience.

To evaluate the proposed framework, we used 48 Raspberry Pis running a wide variety of IoT applications. We also made a discreet event simulator of the agents with different connectivity to simulate larger systems and help the reader follow the protocol. Our evaluations show that the proposed method can efficiency detect different types of attacks with no false-alarms (given enough devices and a sufficient training period). 

The proposed framework does not require any a manual process of creating virus signatures, or a manual process for pushing updates. Furthermore, IoT devices are able to detect exploits without prior knowledge of the exploits. In terms of practicality, the framework is platform generic, completely autonomous, and scales with the number of IoT devices. Therefore, the proposed framework has the potential to provide IoT manufactures with a cheap and effective solution. 

We hope that this framework, and its variants, will assist researchers and the IoT industry in securing the future of the Internet of Things.

\bibliography{CIoTA}{}

\Large{Appendix}
\begin{table}[t]
	\begin{center}
		\caption{The Common Weaknesses (CWE) which are covered by the proposed solution. Note that the agents ability to detect attacks on these weaknesses depends on the agent's configuration and the affect which the attack has on the control flow in memory.}
		\begin{tabular}{c}
			\includegraphics[width=.8\textwidth]{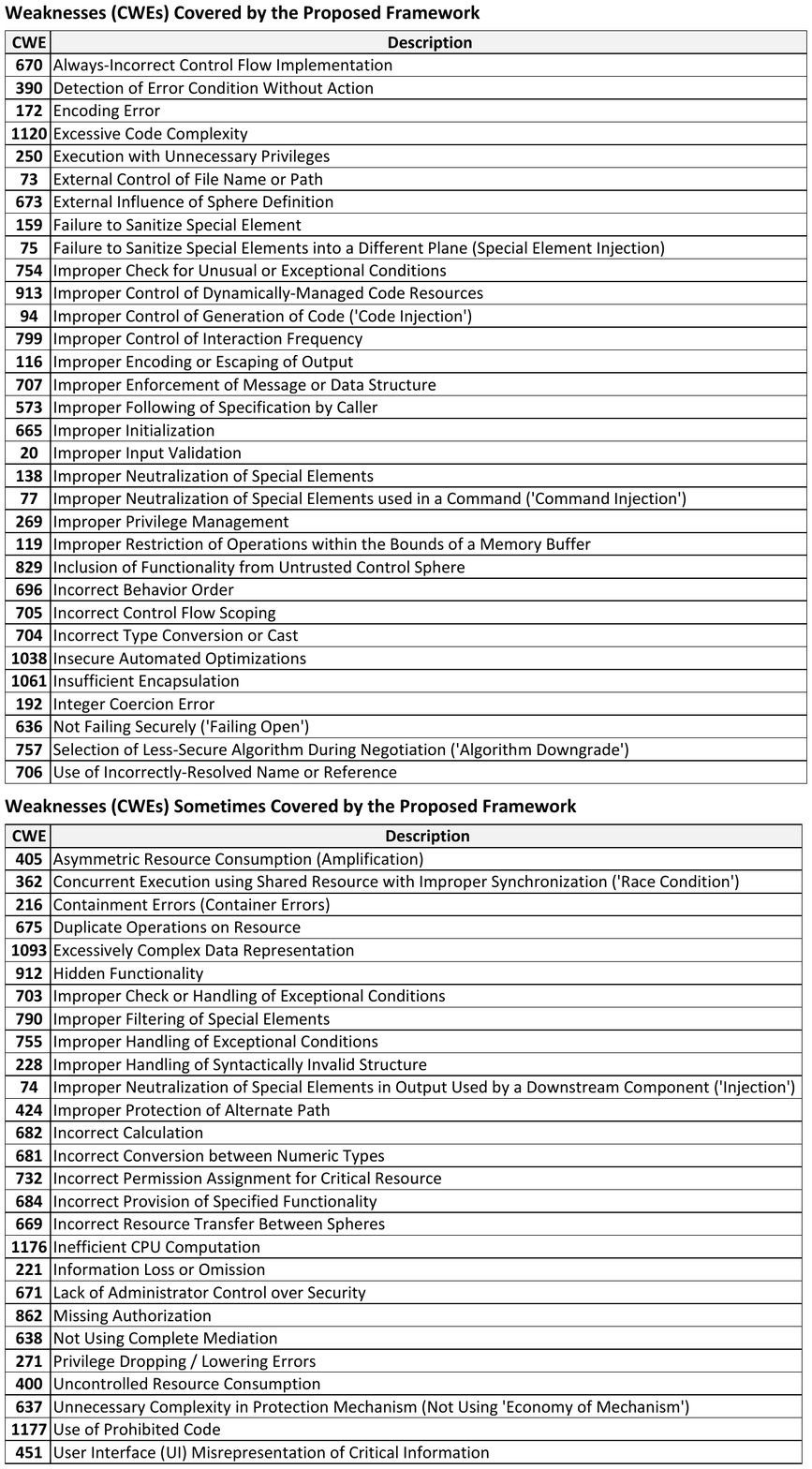}
		\end{tabular}
		\label{fig:cwe}
	\end{center}
\end{table}

\end{document}